\documentclass[twocolumn]{IEEEtran} 
\usepackage{amsfonts,color,morefloats}
\usepackage{amssymb,amsmath,latexsym,amsthm}

\usepackage{ulem}
\usepackage{comment}

\newcommand{\C}{{\mathcal{C}}}

\newtheorem{theorem}{Theorem}
\newtheorem{lemma}{Lemma}
\newtheorem{proposition}{Proposition}
\newtheorem{corollary}{Corollary}

\newtheorem{problem}{Open Problem}
\newtheorem{definition}{Definition}
\newtheorem{example}{Example}
\newtheorem{remark}{Remark}

\def\LI{\langle}
\def\RI{\rangle}
\def\F{{\mathbb F}}

\def\Aut{\mbox{Aut}}

\def\0{{\bf 0}}

\begin{document}

\title{Designs, linear codes, plateaued functions, and their interconnections \thanks{
 This work was supported by the National Natural Science Foundation of China (Grant Nos. 62372247 and 12441103) and by open research fund of State Key Laboratory of Cyberspace Security Defense (No. 2025-MS-03). (Corresponding author: Yansheng Wu)
}
}

\author{Jong Yoon Hyun, Jieun Kwon, Jiaxin Wang, Yansheng Wu\thanks{ J. Y. Hyun is with the Konkuk University, Glocal Campus, 268 Chungwon-daero Chungju-si Chungcheongbuk-do 27478, South Korea, Email: hyun33@kku.ac.kr.}
\and \thanks{  J. Kwon is with the Department of Mathematics, Pohang University of Science and Technology (POSTECH), 77 Cheongam-Ro, Nam-Gu, Pohang, Gyeongbuk 37673, South Korea, Email: je1997@postech.ac.kr.}

\and \thanks{J. Wang is with the School of Mathematics, Hefei University of Technology, Hefei, 230601, China,
 email: wjiaxin@hfut.edu.cn. }

\and \thanks{Y. Wu is  with the School of Computer Science, Nanjing University of Posts and Telecommunications, Nanjing
210023, China and the State Key Laboratory of Cyberspace Security Defense (Institute of Information Engineering, Chinese Academy of Sciences, Beijing 100085), China.  
 Email: yanshengwu@njupt.edu.cn. }

}

\date{\today}
\maketitle

{


\begin{abstract}  In this paper, we mainly investigate profound interconnections  between combinatorial designs, linear  codes, and Boolean functions. Firstly, we present a generic  construction method for designs derived from Boolean functions and give a new concept of non-symmetric designs with the triple symmetric difference property (TSDP).  

Secondly, we provides an alternative proof for addition designs derived from plateaued functions, which need not be simple or symmetric. We characterize simple $2$-designs on $2^{m-r}$ points arising from $r$-plateaued functions in $m$ variables, and show that addition designs from such functions with no nonzero linear structure satisfy the TSDP but not the double one, yielding non-symmetric simple $2$-designs.
 
 Thirdly, we primarily explore the equivalence relationships between designs, linear codes, and plateaued functions. These investigations help resolve two open problems posed by Ding and Tang  (Designs from Linear codes, Singapore: World Scientific, 2022: Problems 14.20, 14.23). We also compute the automorphism groups of addition designs of $r$-plateaued functions and  of the linear codes of addition designs. This work extends  results by Bending (SDP designs and their automorphism groups, Ph.D. thesis, 1993), and Dempwolff and Neumann (Des. Codes Cryptogr.,  57,
373–381, 2010). 

Finally, we yield new Boolean functions producing two families of a 2-design whose parameters coincide with those of the complement of a point-hyperplane design and a TSDP design, despite being non-isomorphic.

\end{abstract}

 }

\begin{IEEEkeywords}
2-Design; Boolean function;  plateaued function; binary linear code; addition design;  translation design; bent function, equivalence, automorphism group
\end{IEEEkeywords}

\section{Introduction}\label{sec-intro} 

Bent functions, first introduced by Rothaus \cite{R}, are Boolean functions exhibiting maximum possible non-linearity. Specifically, the Walsh transform of a bent function in $m$ variables takes on exactly two values, $\pm 2^{m/2}$. Over the past four decades, these functions have attracted significant attention due to their crucial applications in coding theory, cryptography, design theory, and algebraic graph theory. Comprehensive surveys on bent functions are available in  \cite{C1,CM,M}.

Bending \cite{B}  presented the construction of two families of non-isomorphic 2-designs, known as addition designs and translation designs, using bent functions. Remarkably, these designs share the same parameters {when the dual of the corresponding bent function at zero vector is zero}. Additionally, Dempwolff and Neumann \cite{DN} extended Bending's work by constructing addition designs from $r$-plateaued functions, thereby generalizing the original construction. Furthermore, Polujan \cite{P} conducted a comprehensive review of Boolean and vectorial functions from a design-theoretic perspective, exploring various classes of cryptographic functions such as bent, plateaued, and differentially uniform functions, as well as the incidence structures derived from these mappings. In the work by Polujan and Pott \cite{PP}, Boolean and vectorial bent functions were utilized in the construction of addition designs and translation designs. Meidl, Polujan, and Pott \cite{MPP} further expanded on the applications of $(n,m)$-functions in constructing 2-designs, providing a new design-theoretic characterization of $(n,m)$-plateaued and $(n,m)$-bent functions, thereby enhancing the understanding of these functions.  The primary goal of this paper is to extend Bending's work from bent functions to plateaued functions.



 A further motivation for this work is to investigate the deep interconnections among combinatorial designs, linear codes, and Boolean functions, with the aim of resolving the two open problems described below.

Let $\C$ be an $[n,k,d]$ linear code over $\mathbb F_q$, and let $A_i=A_i(\C)$ be the number of codewords of Hamming weight $i$ in $\C ~(0\le i\le n)$. For each $j$ with $A_j\neq 0$, let $\mathcal B_j$ denote the set of the supports of all codewords of Hamming weight $j$ in $\C$, where the code coordinates are indexed by $1,2,\ldots, n$. Let $P=\{1,2,\ldots, n\}$. By providing sufficient conditions for the pair $(P, \mathcal B_k)$,   Ding, Tang, Wang, and their co-authors (see, e.g., \cite{DMT, DT0, DT1, DT, T, TD, WTD, XLW}) have made significant contributions in recent years to the construction of $t$-designs. Building on these results, Ding and Tang \cite{DT0} published a comprehensive monograph focused on $t$-designs derived from linear codes over finite fields. Of particular interest are two open problems posed in their monograph (see Open Problems \ref{111} and \ref{222} {with the code $\widetilde{\mathcal{C}}_{D_f}$ introduced in Section VI-B}), which concern the relationship between the equivalence of bent functions, the binary linear codes arising from bent functions, and the designs supported by these codes.


\begin{problem}{\rm \label{111}\cite[Problem 14.20]{DT0}}
Let $f$ and $g$ be two bent functions in $\mathcal{BF}_m$. What is the relationship between the equivalence of $f$ and $g$ and that of the codes $\widetilde{\mathcal{C}}_{D_f}$ and $\widetilde{\mathcal{C}}_{D_g}$?
\end{problem}

\begin{problem}{\rm \label{222} \cite[Problem 14.23]{DT0}}
Let $f$ and $g$ be two bent functions in $\mathcal{BF}_m$. What is the relationship between the equivalence of $f$ and $g$ and that of the designs with the same parameters held in the two codes $\widetilde{\mathcal{C}}_{D_f}$ and $\widetilde{\mathcal{C}}_{D_g}$?
\end{problem}



In this paper, we explore the deep interconnections among combinatorial designs, linear codes, and Boolean functions, with a particular focus on introducing the triple symmetric difference property (TSDP). The organization of the paper is as follows.

Section~\ref{s2} recalls basic notions related to designs, Boolean functions, and linear codes. 
Section~\ref{s3} presents a characterization of $2$-designs in terms of Boolean functions.
Section~\ref{s4} reviews known results on designs obtained from bent functions and offers a further extension along with an alternative proof.
Section~\ref{s5} begins with an alternative proof (Lemma~\ref{thm 2.23}) for addition designs derived from plateaued functions, which need not be simple or symmetric. We then characterize simple $2$-designs on $2^{m-r}$ points coming from $r$-plateaued functions in $m$ variables (Theorem~\ref{thm4.7}), paralleling Lemma~\ref{thm3.7}. A consequence (Corollary~\ref{cor2}) is that the addition designs derived from $r$-plateaued functions with no nonzero linear structure  satisfy the TSDP, giving rise to non-symmetric simple $2$-designs that possess the triple symmetric difference property but not the double symmetric difference property (Example~\ref{Example 4}).

Section~\ref{s6} consists of two parts. The first part extends several results (Theorem~\ref{thm4.8}, Proposition~\ref{propro1}, and Remark~\ref{rmk5}) to arbitrary $r$-plateaued functions with no nonzero linear structure. The second part first generalizes Open Problem~1 regarding affine equivalence to the setting of $r$-plateaued functions. Subsequently, through the construction of new quasi-symmetric designs (Theorem~\ref{theorembent}), we derive a unified result (Theorem~\ref{cor5}) analogous to Corollary~\ref{cor4}. This resolves Open Problems~1 and~2.

In Section~\ref{s6}, we determine the automorphism groups (Theorem~\ref{thm10}) of addition designs associated with $r$-plateaued functions and of their corresponding linear codes. 
Section~\ref{s7} uses Theorems~\ref{thm4.4} and~\ref{thm14} to construct new Boolean functions that yield two families of $2$-designs. These designs share the same parameters as the complement of a point-hyperplane design and addition designs arising from bent functions, respectively, yet they are non-isomorphic. 
Section~\ref{s8} concludes the paper.

\section{Preliminaries }\label{s2}
In this section, we will recall basic concepts on designs, linear codes, and Boolean functions. 

\subsection{Block designs}

 In this subsection, we recall some fundamental concepts of combinatorial designs and introduce a new notion: the triple symmetric difference property (TSDP) for arbitrary $2$-designs.

An incidence structure consists of a triple $\mathbb{D} = (P, \mathcal{B}, \mathcal{R})$, where $P$ and $\mathcal{B}$ are finite sets whose elements are called points and blocks, respectively, and $\mathcal{R} \subseteq P \times \mathcal{B}$ is a binary relation known as the incidence relation. If $\mathcal{R}$ is taken to be set inclusion, we simplify the notation and write $\mathbb{D} = (P, \mathcal{B})$. This paper focuses on 2-designs, which are formally defined as follows:

\begin{definition}\label{defn1}{\rm \cite{CD}
Let $v,k$ and $\lambda$ be positive integers. A 2-$(v,k,\lambda)$ design is a design $\mathbb{D}=(P,\mathcal{B})$ such that the following three conditions are satisfied:
\begin{itemize}
	\item [(i)] $|P|=v$;
	
	\item [(ii)] each block contains exactly $k$ points of $P$;
	
	\item [(iii)] every pair of distinct points is contained in exactly $\lambda$ blocks.
\end{itemize}
}
\end{definition}

By the replication number $r$, we mean that every point is contained in  exactly $r$ blocks. The complement of a design  $\mathbb{D}=(P,\mathcal{B})$ is the design  $\overline{\mathbb{D}}=(P,\overline{\mathcal{B}})$, where $\overline{\mathcal{B}}=\{P\setminus B:B\in\mathcal{B}\}$. The complement of a 2-$(v,k,\lambda)$ design is a 2-design with parameters $(v,v-k,b-2r+\lambda)$, where $b$ is the number of blocks and $r$ is the replication number.   We say that a design is simple if it contains no repeated blocks.

It is often convenient to represent a combinatorial design with an incidence matrix. A (point-block) incidence matrix $M$ of a design $(P, \mathcal{B})$
with $|P| = v$ and $|\mathcal{B}| = b$ is a $v\times b$ matrix whose rows and columns are respectively indexed by the elements $p$ of $P$ and the elements $B$ of $\mathcal{B}$ such that the entry in row $p$ and column $B$ is $1$ if the $p\in B$ and $0$ otherwise. The 2-rank is the rank over $\mathbb{F}_2$ of the point-block incidence matrix of a design. The binary linear code, denoted by $\mathcal{C}_{\mathbb{D}}$, of a design $\mathbb{D}$ is the $\mathbb{F}_2$-span of the rows of the incidence matrix \cite{B1, HP, L}.


Let $M$ and $N$ be $v\times b$ point-block incidence matrices of two designs $\mathbb{D}_1=(P,\mathcal{A})$ and $\mathbb{D}_2=(Q,\mathcal{B})$ respectively.
The designs $\mathbb{D}_1$ and $\mathbb{D}_2$ are isomorphic if there exist permutations $\sigma$ from $P$ to $Q$ and $\pi$ from $\mathcal{A}$ to $\mathcal{B}$ such that $M_{p,A}=N_{\sigma p,\pi A}$ for any $p\in P$ and $A\in\mathcal{A}$. There are a $v\times v$ permutation matrix $P_1$ and a $b\times b$ permutation matrix $P_2$ such that $M=P_1 N P_2$. The automorphism group, denoted $\Aut(\mathbb{D})$, of a design $\mathbb{D}=(P,\mathcal{B})$ with point-block incidence matrix $M$ is defined as the set of all ordered pairs $(\sigma,\pi)$ such that $\sigma$ and $\pi$ are permutations from $P$ to itself and from $\mathcal{B}$ to itself, respectively, and $M_{p,B}=M_{\sigma p,\pi B}$ for any $p\in P$ and $B\in\mathcal{B}$.
 Let $A$ and $B$ be two sets and  $A\Delta B:=(A\backslash B)\cup (B\backslash A)$  denote the  symmetric difference of  the two sets $ A$ and $B$.

\begin{definition}\label{defn2}{\rm \cite{CD, DS}
(1) A 2-$(v,k,\lambda)$ design is called {\it symmetric} if $v$ is equal to the number of blocks, equivalently, every two blocks share exactly $\lambda$ points.  

(2) A symmetric 2-design is said to have the {\it symmetric difference property}
or be an SDP design if given any three distinct blocks $B_1, B_2, B_3$ of the design, their symmetric difference $B_1\Delta B_2 \Delta B_3$ is either a block of the design or the complement of a block. 

(3) A 2-design is {\it quasi-symmetric} with intersection numbers $i$ and $j$, $(i < j)$ if any two blocks intersect in either $i$ or $j$ points.

}
\end{definition}

To extend the symmetric difference property to arbitrary 2-designs, next we introduce the concept of \lq\lq triple symmetric difference property'' as follows:

%

\begin{definition}\label{defn2@}{\rm 
  A 2-design is said to have the triple symmetric difference property or be a TSDP design if the symmetric difference $B_1 \Delta B_2 \Delta B_3$ is either a block of the design or the complement of a block for any three pairwise distinct blocks $B_1$, $B_2$, and $B_3$ of the design.


}
\end{definition}

In this paper, we employ TSDP to unify the research on symmetric difference properties of 2-designs.



\subsection{Boolean functions}

In this subsection, we introduce some fundamental concepts on Boolean functions, including the notion of plateaued functions, the equivalence of Boolean functions, and the automorphism group of a Boolean function, all of which play a central role in our study.

Let $\mathbb{F}_2$ be the finite field of size two and let $\mathbb{F}^m_2$ be a $\mathbb{F}_2$-vector space with dimension $m$. For a subset $C$ of $\mathbb{F}^m_2$, we denote by $C^*$ the set of non-zero elements of $C$ and by $\LI C\RI$ the linear span of $C$. A Boolean function $f$ in $m$ variables is a function from $\mathbb{F}^m_2$ to $\mathbb{F}_2$. We use the notation $\mathcal{BF}_m$ to denote the set of Boolean functions in $m$ variables. The degree of a Boolean function $f:\F^m_2\rightarrow \F_2$, denoted $\deg(f),$ is the highest number of variables appearing in any single term of its Algebraic Normal Form expressed by
$f(x_1,\ldots, x_m)=\sum_{I\subseteq\{1,\ldots,m\}}a_I\prod_{j \in I} x_j,$ $a_I\in\F_2$. 
A Boolean function $f$ is balanced if $|f^{-1}(0)|=|f^{-1}(1)|$.

Let $f\in\mathcal{BF}_m$ and let $P$ be a subset of $\mathbb{F}^m_2$.
The Walsh transform $W_f$ on $P$ of the $f$ is an integer-valued function defined as
\[
W_{P,f}(a)=\sum_{x\in P}(-1)^{f(x)+a\cdot x},
\]
where $\cdot$ is the dot product on $\mathbb{F}_{2}^{m}$, that is, $a \cdot x=a_{1}x_{1}+\cdots+a_{m}x_{m}$.
We denote by $W_f$ simply the Walsh transform $W_{\mathbb{F}^m_2,f}$ on $\mathbb{F}^m_2$ of $f$. Then
\[
(-1)^{f(a)}=\frac{1}{2^m}\sum_{x\in\mathbb{F}^m_2}W_f(x)(-1)^{a\cdot x},
\]
is  called the inversion formula for $f$. The Walsh support of $f$ in $\mathcal{BF}_m$ is defined as $S_f = \{a\in\F^m_2 : W_f(a)\neq 0\}$.



A Boolean function $f$ in $\mathcal{BF}_m$ is bent if $W_f(a)^2=2^m$ for any $a\in\mathbb{F}^m_2$ \cite{R}.
In this case, $m$ is even, and $W_f(a)$ can be written as $W_f(a)=(-1)^{f^*(a)}2^{\frac{m}{2}}$ for some Boolean function $f^*$ in $\mathcal{BF}_m$. We call $f^*$ the dual of $f$. It is known that $f^*$ is also bent and $f^{**}=f$. A bent function $f$ is self-dual if $f=f^*$. A well-known fact that the function $f(x,y)=x\cdot \sigma y+g(y)$ for all $x,y\in \mathbb{F}_2^m$ is  bent for any Boolean function $g$ in $\mathcal{BF}_m$ and permutation $\sigma$ of $\mathbb{F}^m_2$ \cite{M1}. These types of bent functions are referred to as Maiorana-McFarland (M-M) bent functions.

A bent function is also connected to Hadamard matrices in two ways: both $(f(x+y))_{x,y\in\mathbb{F}^m_2}$ and $(f(x)+f^*(y)+x\cdot y)_{x,y\in\mathbb{F}^m_2}$ are $(0,1)$-Hadamard matrices with constant row and column sums, known as regular Hadamard matrices. More generally, we define an $r$-plateaued function in $\mathcal{BF}_m$.
For an integer $r$ $(0\leq r\leq m)$, we say that a Boolean function $f$  in $\mathcal{BF}_m$ is $r$-plateaued if $W_f(a)\in\{0,\pm 2^{\frac{m+r}{2}}\}$ for all $a\in\mathbb{F}^m_2$. In this case, $m+r$ is even, and any $0$-plateaued function is just bent by Parseval's identity that $\sum_{x\in\mathbb{F}^m_2}W_f(x)^2=2^{2m}$. For the most recent survey on binary bent and plateaued functions, please see \cite[Chapter 6]{C3}.
 
The cross-correlation $C_{f,g}$ between two Boolean functions $f$ and $g$ in $\mathcal{BF}_m$ is defined as
\[
C_{f,g}(a)=\sum_{x\in\mathbb{F}^m_2}(-1)^{f(x)+g(x+a)}.
\] Then
\[
C_{f,g}(a)=\frac{1}{2^m}\sum_{x\in\mathbb{F}^m_2}W_f(x)W_g(x)(-1)^{a\cdot x}.
\]
The auto-correlation of $f$ is $C_{f,f}$, which is denoted by $C_f$ simply.  
It follows that $f$ is a bent function  in $\mathcal{BF}_m$ if and only if $C_f(a)=2^m\delta_{\mathbf{0},a}$ for all $a\in \mathbb F_2^m$, where $\delta$ is the Kronecker delta function. A nonzero vector $a\in\mathbb{F}^m_2$ is called a
linear structure of a Boolean function $f$ in $\mathcal{BF}_m$ if $f(x)+ f(x+a)$ is a constant for all $x\in\mathbb{F}^m_2$. So $f$  possesses a linear structure if and only if $|C_{f}(a)|=2^m$ for some $a\in\mathbb{F}_2^{m*}:=\mathbb{F}^m_2\setminus\{\bf{0}\}$, and bent functions do not have a linear structure.

\begin{definition}{\rm \cite{M} \label{affine}  Two Boolean functions $f$ and $g$ in $\mathcal{BF}_m$ are  (extended affine) equivalent (resp. affine equivalent) if there is an affine permutation $\sigma$ of $\mathbb{F}_2^m$ and an affine map $\pi$ from $\mathbb{F}_2^m$ to $\mathbb{F}_2$ such that $g(x)=f(\sigma x)+\pi x$ (resp. $g(x)=f(\sigma x)$), where $\sigma x=xA+a$ for $A$ is a non-singular matrix of size $m$, $a\in\mathbb{F}^m_2$ and $\pi x=b\cdot x+\varepsilon$ for $b\in\mathbb{F}^m_2$ and $\varepsilon\in\mathbb{F}_2$.

}
\end{definition}

\begin{definition} {\rm \cite{M} \label{ccz} Two Boolean functions $f$ and $g$ in $\mathcal{BF}_m$ are CCZ-equivalent if their graphs $\mathcal{G}_f=\{(x,f(x)):x\in\mathbb{F}^m_2\}$ and $\mathcal{G}_g=\{(x,g(x)):x\in\mathbb{F}^m_2\}$ are affine equivalent, that is, there is an affine permutation $\sigma$ of $\mathbb{F}^m_2\times\mathbb{F}_2$ such that $\sigma\mathcal{G}_f=\mathcal{G}_g$.

} 
\end{definition}

 There are implications of equivalences for Boolean functions \cite{BC1}: let $f$ and $g$ be Boolean functions. Then $f$ and $g$ are affine equivalent $\Longrightarrow$ $f$ and $g$ are (extended affine) equivalent $\Longleftrightarrow$ $f$ and $g$ are CCZ-equivalent. For further information and additional details, please refer to \cite{BC, BC1,CS, CM,EP,M}. Further, the degree of a Boolean function is invariant under affine and extended affine equivalences \cite{C3}. 

We denote by $\operatorname{GL}_m(\F_2)$ the group of invertible matrices of size $m$ with entries in $\mathbb{F}_2$ and by $A^T$ the transpose of a matrix $A$. For $[A,p,b,\varepsilon]\in \operatorname{GB}_m(\F_2):=\operatorname{GL}_m(\F_2)\times\F^m_2\times\F^m_2\times\F_2$ and $f\in\mathcal{BF}_m$, we define $[A,p,b,\varepsilon]f\in\mathcal{BF}_m$ by 
$
[A,p,b,\varepsilon]f(x)=f(xA+p)+x\cdot b+\varepsilon.
$
One can show that 
\[
[A,p,b,\varepsilon_1]\circ [B,q,b',\varepsilon_2]=[AB,pB+q,b+b'A^{T},p\cdot b'+\varepsilon_1+\varepsilon_2].
\]
This implies that
\[
\left\{\begin{pmatrix}
 1 & {\bf{0}} & {\bf{0}}\\
 b^T & A &{\bf{0}} \\
\varepsilon & p & {\bf{1}}
\end{pmatrix}\in\operatorname{GL}_{2m+1}(\F_2): [A,p,b,\varepsilon]\in\operatorname{GB}_m(\F_2) \right\}
\]
forms a subgroup of $\operatorname{GL}_{2m+1}(\F_2)$. The stabilizer $\operatorname{GB}_m(\F_2)_f=\{[A,p,b,\varepsilon]\in\operatorname{GB}_m(\F_2):[A,p,b,\varepsilon]f=f\}$ is the automorphism group of $f$. We denote this stabilizer group by $\Aut(f)$. 

\subsection{The linear code of a design}

In this subsection, we define an incidence structure $\mathbb{D} = (P, \{B^{f_b} : b \in B\})$ via Boolean functions $f_b$ labeled by $b \in B$, where each block consists of points $p \in P$ with $f_b(p) = 1$, and we then present the linear code constructed from this incidence structure.

Let $n$ and $k$ be positive integers. 
An $[n,k,d]$ linear code $\C$ over $\Bbb F_2$ is a $k$-dimensional subspace of $\Bbb F_2^n$ with minimum Hamming distance $d$. The Hamming weight $w(c)$ of a codeword $c$ in $\C$ is the number of nonzero coordinate positions. The support of a vector $x$ in $\mathbb{F}^m_2$ is the set of non-zero coordinate positions of $x$. The weight distribution of an $[n,k,d]$ code $\C$ is defined as $1+A_1z+\cdots+A_nz^n$, where $A_i$ is the number of codewords in $\C$ of Hamming weight $i$. The dual code of an $[n,k,d]$ linear code $ \C$ over $\Bbb F_2$ is defined as $
\C^{\bot} = \{ x\in \Bbb F_2^{n} : x\cdot y=0\text{ for all } y\in \C\}.$
If $\mathcal{C}^{\bot} \subseteq\mathcal{C} $, then $\mathcal{C}$ is called a {\it dual-containing} code.
If $\mathcal{C} \subseteq\mathcal{C}^{\bot}$, then $\mathcal{C}$ is called a {\it self-orthogonal} code.

The character sum $\chi_a(X)$ for $a\in\mathbb{F}^m_2$ and $X\subseteq\mathbb{F}^m_2$ is defined as 
\[
\chi_a(X)=\sum_{x\in X}(-1)^{a\cdot x}.
\]
It is known that if $\C$ is linear, then $\chi_a(\C)=|\C|1_{\C^{\perp}}(a)$ for $a\in\F^m_2$, where $1_X$ stands for the indicator function of a subset $X$ of $\F^m_2$.
Two binary linear codes $\C_1$ and $\C_2$ are equivalent if there is a permutation matrix $P$ such that the linear span of a matrix $G_1$ is $\C_1$ if and only if the linear span of $G_1P$ is $\C_2$. The automorphism group of a binary linear $\C$, denoted $\Aut(\C),$ is the group of all permutations of the coordinates of 
$\C$ that preserve the code \cite{MS1}.

Let $P$ and $B$ be two subsets of $\mathbb{F}_2^m$ and $f_b$'s be Boolean functions in $\mathcal{BF}_m$ labeled by  $b\in B$, and let $\mathcal{B}$ be a collection of $B^{f_b}$ defined by
\begin{align}\label{eqn:(1)}
	B^{f_b}=\{p\in P: f_b(p)=1\}.
\end{align}
Then $\mathbb{D}=(P,\mathcal{B})$ is an  incidence structure. Notice that in the complement of $\mathbb{D}$, its blocks are
$\{p\in P: f_b(p)=0\}$.
The point-block incidence matrix of $\mathbb{D}$ can be written as 
$
M(\mathbb{D})=(f_b(p))_{p\in P, b\in B}.$
We call $\mathbb{D}=(P,\{B^{f_b}:b\in B\})$ an incidence structure associated with Boolean functions $f_b$ in $\mathcal{BF}_m$ corresponding to $b\in B$. We say that a pair $(P,\{f_b\in\mathcal{BF}_m:b\in B\})$ induces a 2-design  if $\mathbb{D}=(P,\{B^{f_b}:b\in B\})$ is a 2-design. 

Recall that the binary linear code $\mathcal{C}_{\mathbb{D}}$ of a design $\mathbb{D}$ is the $\mathbb{F}_2$-span of the rows of the incidence matrix. Similarly, one can define the incidence matrix of an incidence structure $\mathbb{D}$ and its binary linear code $\C_{\mathbb{D}}$.
The Hamming weight of $(f_b(p))_{b\in B}$ is 
\[
w((f_b(p))_{b\in B})=|\{b \in B: p \in B^{f_{b}}\}|.
\]
Since $f_b$ is not affine in general, it is much more difficult to compute the weight distribution of $\mathcal{C}_{\mathbb{D}}$ of $\mathbb{D}$.

\subsection{Two lemmas}

The following lemmas are important and will be used later.

\begin{lemma}\label{lem3.6}{\rm
Let $X$ be a subset of $\mathbb{F}^k_2$ and let $G$ be a $k\times m$ matrix over $\mathbb{F}_2$. Then
\[
\LI\{uG:u\in X\}\RI=\{uG:u\in \LI X\RI\}.
\]
}
\end{lemma}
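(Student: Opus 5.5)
The identity says that the linear map $\phi:\mathbb{F}_2^k\to\mathbb{F}_2^m$, $\phi(u)=uG$, commutes with taking $\mathbb{F}_2$-linear spans. I will prove it by showing the two inclusions separately, using only the fact that $\phi$ is linear. Here $\LI X\RI$ is the set of all finite $\mathbb{F}_2$-linear combinations of elements of $X$. Over $\mathbb{F}_2$ these are just the finite sums $\sum_{i} u_i$ with $u_i\in X$, together with the empty sum $\0$ (this convention also covers the case $X=\emptyset$).

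For the inclusion ``$\subseteq$'', take an element of $\LI\{uG:u\in X\}\RI$. By definition it has the form $\sum_{i=1}^{s} u_iG$ with $u_i\in X$. Distributivity of matrix multiplication gives $\sum_{i} u_iG=(\sum_i u_i)G$. Since $\sum_i u_i\in\LI X\RI$, the element lies in $\{uG:u\in\LI X\RI\}$.

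For the inclusion ``$\supseteq$'', take $u\in\LI X\RI$ and write $u=\sum_{i=1}^{s}u_i$ with $u_i\in X$. Then $uG=\sum_i u_iG$ is a sum of elements of $\{uG:u\in X\}$, so it lies in their span.

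Both directions reduce to the single identity $(\sum_i u_i)G=\sum_i u_iG$, so there is no real obstacle. The only point needing care is the degenerate case: both sides contain $\0=\0 G$, including when $X$ is empty.
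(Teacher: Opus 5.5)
Your proof is correct, and it is exactly the routine verification the paper has in mind when it simply says the proof is straightforward: both inclusions reduce to linearity, $(\sum_i u_i)G=\sum_i u_iG$. Your handling of the empty-sum case (so that $X=\emptyset$ is covered) is a harmless extra bit of care that the paper leaves implicit.
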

\begin{proof}
    The proof is straightforward.
\end{proof}

\begin{lemma}\label{lem2}{\rm
   Let $f$ be a non-affine function in ${\mathcal{BF}_m}$ and $\sigma$ be a permutation of $\mathbb{F}^m_2$. Then $f \circ \sigma^{-1}$ is non-affine if and only if the $\mathbb{F}_2$-linear span of 
   $\{(f(x),\sigma x,1):x\in\mathbb{F}^m_2\}$ is $\mathbb{F}_2\times\mathbb{F}^m_2\times\mathbb{F}_2$.   }    
   
\end{lemma}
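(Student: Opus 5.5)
Set $g=f\circ\sigma^{-1}$, so that $g(\sigma x)=f(x)$. Then the set in question becomes
\[
S=\{(g(y),y,1):y\in\mathbb{F}^m_2\},
\]
because $\sigma$ is a permutation and so $y=\sigma x$ runs over all of $\mathbb{F}^m_2$. The strategy is to test whether $\langle S\rangle$ is the whole space $V=\mathbb{F}_2\times\mathbb{F}^m_2\times\mathbb{F}_2$ using orthogonal complements. Over $\mathbb{F}_2$ we have $\langle S\rangle=V$ if and only if the only vector orthogonal to every element of $S$ is zero. We use the standard dot product on $V\cong\mathbb{F}_2^{m+2}$.

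A vector $(\alpha,b,\varepsilon)\in V$ is orthogonal to all of $S$ exactly when
\[
\alpha g(y)+b\cdot y+\varepsilon=0\quad\text{for all } y\in\mathbb{F}^m_2 .
\]
We split into two cases according to $\alpha$.

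\textbf{Case $\alpha=1$.} The condition says $g(y)=b\cdot y+\varepsilon$, so $g$ is affine. Conversely, if $g$ is affine, say $g(y)=b\cdot y+\varepsilon$, then $(1,b,\varepsilon)$ is a nonzero vector orthogonal to $S$.

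\textbf{Case $\alpha=0$.} The condition says the affine function $b\cdot y+\varepsilon$ vanishes identically. This forces $b=\mathbf{0}$, since otherwise $b\cdot y$ takes both values, and then $\varepsilon=0$. So this case contributes only the zero vector.

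Combining the two cases, $S^{\perp}\neq\{0\}$ if and only if $g$ is affine. Equivalently, $\langle S\rangle=V$ if and only if $f\circ\sigma^{-1}$ is non-affine, which is the claim.

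The hypothesis that $f$ is non-affine is not needed for this equivalence. There is no real obstacle here; the only point needing care is the reindexing by $\sigma$ and the step $\langle S\rangle=V\iff S^{\perp}=\{0\}$, which holds for any nondegenerate bilinear form on a finite-dimensional space.
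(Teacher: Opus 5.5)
Your proof is correct and follows essentially the same route as the paper, which shows $\operatorname{rank} G=m+2$ by checking that the codeword $(t,u,s)G$ has weight zero only for $(t,u,s)=0$, splitting on $s$ exactly as you split on $\alpha$. The only difference is that the paper detects a zero-weight codeword through the Walsh transform and Parseval's identity, whereas you read off $g(y)=b\cdot y+\varepsilon$ directly from the vanishing combination, which is shorter. Your remark that the non-affineness of $f$ is never used applies equally to the paper's argument.
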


\begin{proof}
      Let \[
 G=\begin{pmatrix}
\cdots & 1 & \cdots\\
\cdots & (\sigma x)^T & \cdots \\
\cdots & f(x) & \cdots
\end{pmatrix}_{x\in\mathbb{F}^m_2}
\]We only need to prove that the rank of $G$ is $m+2$ if and only if $f \circ \sigma^{-1}$ is non-affine.
For any $u \in \mathbb{F}_{2}^{m}, t, s \in \mathbb{F}_{2}$, the Hamming weight of the vector $c(t, u, s)=(t, u, s)G$ is given as follows:
\begin{equation*}
    \begin{split}
        w(c(t, u, s))&=2^{m}-\sum_{x \in \mathbb{F}_{2}^{m}}\delta_{0, t+u \cdot \sigma x+sf(x)}\\
        &=2^{m}-\frac{1}{2}\sum_{x \in \mathbb{F}_{2}^{m}}(1+(-1)^{t+u \cdot \sigma x+sf(x)})\\
        &=
\begin{cases}
    2^{m-1}-\frac{1}{2}(-1)^{t}W_{f\circ \sigma^{-1}}(u), & \text{ if } s=1,\\
    2^{m-1}-(-1)^{t}2^{m-1}\delta_{{\bf{0}}, u}, & \text{ if } s=0.
\end{cases}
    \end{split}
\end{equation*}
Note that if $w(c(t, u, 0))=0$, then $u={\bf{0}},t=0$. If there is $(u_{0}, t_{0}) \in \mathbb{F}_{2}^{m} \times \mathbb{F}_{2}$ such that $w(c(t_{0}, u_{0}, 1))=0$, then $|W_{f\circ \sigma^{-1}}(u_{0})|=2^m$. In this case, the Parseval's identity guarantees that $W_{f\circ \sigma^{-1}}(x)=0$ for any $x \neq u_{0}$, and then $f\circ \sigma^{-1}$ is an affine function. Therefore, if $f \circ \sigma^{-1}$ is non-affine, then $w(c(t, u, s))=0$ if and only if $(t, u, s)=(0,{\bf{0}},0)$, and the rank of $G$ is $m+2$. Conversely, if $f \circ \sigma^{-1}$ is affine, then there is $(a, b) \in \mathbb{F}_{2}^{m} \times \mathbb{F}_{2}$ such that $f \circ \sigma^{-1}(x)=a \cdot x+b$, and $w(c(b, a, 1))=0$ since $W_{f \circ \sigma^{-1}}(a)=2^m(-1)^{b}$. Thus, if $f \circ \sigma^{-1}$ is affine, then the rank of $G$ is not $m+2$. This completes the proof.
\end{proof}

\section{Characterization of 2-designs in terms of Boolean functions}\label{s3}

In this section, we characterize 2-designs based on Eq. \eqref{eqn:(1)} in term of Boolean functions.

\begin{theorem}\label{lem3.1}{\rm
For two subsets $P,B\subseteq \mathbb{F}^m_2$,
let $\mathbb{D}=(P,\{B^{f_b}:b\in B\})$ be an incidence structure associated with Boolean functions $f_b$ in $\mathcal{BF}_m$ corresponding to $b\in B$. The following statements are true.
\begin{itemize}
\item[(i)] The sum $\chi_u(B^{f_b})
=\frac{1}{2}\left(\chi_u(P)-W_{P,f_b}(u)\right)$, \text{ and the size of a block equals } 
$$\frac{1}{2}\left(|P|-W_{P,f_b}(\mathbf{0})\right).$$

\item[(ii)] The number of blocks containing a point $p\in P$ equals
$\frac{1}{2}\left(|B|-\sum_{b\in B}(-1)^{f_b(p)}\right).$

\item[(iii)] The number of blocks containing two distinct points $p,q\in P$ equals
$$\frac{1}{4}\left(|B|-\sum_{b\in B}(-1)^{f_b(p)}
 -\sum_{b\in B}(-1)^{f_b(q)}+\sum_{b\in B}(-1)^{f_b(p)+f_b(q)}\right).$$
 
\item[(iv)] The block intersection numbers of two distinct vectors $a,b\in B$ are
$$\frac{1}{4}\left(|P|-\sum_{p\in P}(-1)^{f_a(p)}
 -\sum_{p\in P}(-1)^{f_b(p)}+\sum_{p\in P}(-1)^{f_a(p)+f_b(p)}\right).$$
\end{itemize}
}
\end{theorem}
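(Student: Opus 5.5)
All four parts rest on one identity. For $y\in\mathbb{F}_2$ the indicator of $y=1$ is
\[
\tfrac12\left(1-(-1)^{y}\right).
\]
Applied to $y=f_b(p)$, this gives $1_{B^{f_b}}(p)=\tfrac12(1-(-1)^{f_b(p)})$ for every $p\in P$. The plan is to write each count as a sum of indicators, substitute this expression, and expand. No earlier lemma is needed beyond the definitions of $W_{P,f}$ and $\chi_u$.

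For (i), write $\chi_u(B^{f_b})=\sum_{p\in P}1_{B^{f_b}}(p)(-1)^{u\cdot p}$ and substitute the indicator. The first term gives $\frac12\chi_u(P)$. The second gives $-\frac12\sum_{p\in P}(-1)^{f_b(p)+u\cdot p}=-\frac12 W_{P,f_b}(u)$. Setting $u=\mathbf{0}$ gives $|B^{f_b}|=\chi_{\mathbf 0}(B^{f_b})$, which is the stated block size.

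For (ii), the number of blocks through $p$ is $\sum_{b\in B}1_{B^{f_b}}(p)$, and substituting the indicator gives the formula at once.

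For (iii), the number of blocks containing both $p$ and $q$ is $\sum_{b\in B}1_{B^{f_b}}(p)1_{B^{f_b}}(q)$. Expanding the product
\[
\tfrac14\left(1-(-1)^{f_b(p)}\right)\left(1-(-1)^{f_b(q)}\right)
\]
into four terms and summing over $b$ yields exactly the four sums in the statement.

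For (iv), apply the same expansion, summing over $p\in P$ with the product $1_{B^{f_a}}(p)1_{B^{f_b}}(p)$, which gives $|B^{f_a}\cap B^{f_b}|$.

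There is no genuine obstacle. The only care needed concerns counting with multiplicity: if distinct labels $b$ give the same block set, the incidence structure should be read as a multiset of blocks indexed by $B$. This is consistent with the paper's incidence matrix $(f_b(p))_{p\in P,\,b\in B}$, so the counts in (ii) and (iii) are over labels $b\in B$.
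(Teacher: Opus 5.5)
Your proposal is correct and matches the paper's proof: the paper writes the indicator as $\delta_{0,f_b(x)+1}=\frac{1}{2}\bigl(1+(-1)^{f_b(x)+1}\bigr)$, which is your $\frac{1}{2}\bigl(1-(-1)^{f_b(x)}\bigr)$, and then expands the single and product indicators summed over $P$ or $B$ just as you do. Your remark about counting blocks with multiplicity over labels $b\in B$ agrees with the paper's implicit convention, since its counts are also sums over $b\in B$.
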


\begin{proof}
(i). We have that for $u\in\mathbb{F}^m_2$ and $b\in B$,
\begin{eqnarray*}
&\chi_u(B^{f_b})=\sum_{x\in B^{f_b}}(-1)^{u\cdot x}=\sum_{x\in P}(-1)^{u\cdot x}\delta_{0,f_b(x)+1}\\
&=\frac{1}{2}\sum_{x\in P}(-1)^{u\cdot x}(1+(-1)^{f_b(x)+1}),
\end{eqnarray*}
and the first assertion follows. The second assertion follows from that $\chi_{\mathbf{0}}(B^{f_b})=|B^{f_b}|$.

(ii). We have
\[
 |\{b\in B:p\in B^{f_b}\}|
 =\sum_{b\in B}\delta_{0,f_b(p)+1}
 =\frac{1}{2}\sum_{b\in B}(1+(-1)^{f_b(p)+1}),
\]
where $\delta$ is the Kronecker delta function, and the assertion follows.

(iii). We have that for two distinct points $p,q\in P$, 
\begin{multline*}
 |\{b\in B:p,q\in B^{f_b}\}|
 =\sum_{b\in B}\delta_{0,f_b(p)+1}\delta_{0,f_b(q)+1}\\
 =\frac{1}{4}\sum_{b\in B}(1+(-1)^{f_b(p)+1})(1+(-1)^{f_b(q)+1})\\
 =\frac{1}{4}\sum_{b\in B}\left(1+(-1)^{f_b(p)+1}+(-1)^{f_b(q)+1}+(-1)^{f_b(p)+f_b(q)}\right),
\end{multline*}
and the assertion follows. 

(iv) We have
\begin{multline*}
  |B^{f_a}\cap B^{f_b}|
  =\sum_{p\in P}\delta_{0,f_a(p)+1}\delta_{0,f_b(p)+1}\\
 =\frac{1}{4}\sum_{p\in P}(1+(-1)^{f_a(p)+1})(1+(-1)^{f_b(p)+1})\\
 =\frac{1}{4}\sum_{p\in P}\left(1+(-1)^{f_a(p)+1}+(-1)^{f_b(p)+1}+(-1)^{f_a(p)+f_b(p)}\right),
\end{multline*}
and the assertion follows. 
\end{proof}

By Theorem 1 and Definition 1,   we have the following corollary.

\begin{corollary}\label{cor3.2}{\rm
Let $P$ and $B$ be subsets of $\mathbb{F}^m_2$. A pair $(P,\{f_b\in\mathcal{BF}_m:b\in B\})$ induces a $2$-$(v, k, \lambda)$ design $\mathbb{D}=(P,\{B^{f_b}:b\in B\})$ if and only if the following three conditions hold;
\begin{itemize}
    \item [(i)] $W_{P,f_b}(\mathbf{0})=v-2k$ for all $b\in B$.

\item [(ii)] $\sum_{b\in B}(-1)^{f_b(p)}=|B|-2r_p$ is a constant for all $p\in P$, where $r_p$'s are replication numbers of $\mathbb{D}$.

\item[(iii)] $\sum_{b\in B}(-1)^{f_b(p)+f_b(q)}=4(\lambda-r_p)+|B|$ is a constant for all $p,q\in P$ with $p\neq q$.
\end{itemize}
}
\end{corollary}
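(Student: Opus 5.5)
The plan is to read Corollary~\ref{cor3.2} off Theorem~\ref{lem3.1} by comparing its counting formulas with the three conditions of Definition~\ref{defn1}. Condition (i) of the definition, $|P|=v$, is fixed once $P$ is given, so only the block size, the replication numbers and the pair counts need to be matched.

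First I would treat the block size. By Theorem~\ref{lem3.1}(i), $|B^{f_b}|=\frac12(|P|-W_{P,f_b}(\mathbf{0}))$. So every block has exactly $k$ points if and only if $W_{P,f_b}(\mathbf{0})=v-2k$ for all $b\in B$, which is condition (i) of the corollary.

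Next I would handle the replication numbers. By Theorem~\ref{lem3.1}(ii), the number of blocks through $p$ is $r_p=\frac12(|B|-\sum_{b}(-1)^{f_b(p)})$, which is the same as $\sum_b(-1)^{f_b(p)}=|B|-2r_p$. In a $2$-design this number is constant; this follows either from the standard double count $r(k-1)=\lambda(v-1)$ or from the fact that the corollary asserts it. So condition (ii) holds whenever $\mathbb{D}$ is a $2$-design.

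Then I would treat the pair counts. Theorem~\ref{lem3.1}(iii) gives
\[
4\lambda_{p,q}=|B|-(|B|-2r_p)-(|B|-2r_q)+\sum_b(-1)^{f_b(p)+f_b(q)}.
\]
Once $r_p=r_q=r$ is constant, this reads $\sum_b(-1)^{f_b(p)+f_b(q)}=4\lambda-4r+|B|$, which is condition (iii). Conversely, assume (i)--(iii). Then (i) gives constant block size $k$, and (ii) gives a constant $r_p$. Substituting into (iii)'s formula shows that $\lambda_{p,q}=\lambda$ for all distinct $p,q$, so $\mathbb{D}$ is a $2$-$(v,k,\lambda)$ design.

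The only delicate point is interpretive: in (iii) the expression $4(\lambda-r_p)+|B|$ is constant only once (ii) holds. So I will prove the equivalence of (ii) and (iii) jointly, with the constancy of $r_p$ established first.
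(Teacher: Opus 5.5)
Your proposal is correct and follows the paper's route: the paper obtains the corollary directly from Theorem~\ref{lem3.1} and Definition~\ref{defn1}, and your matching of block sizes, replication numbers and pair counts (with $v=|P|$) spells that out. One fix: drop the justification ``from the fact that the corollary asserts it'' for constant $r_p$, since that is circular, and rely only on the double count $r_p(k-1)=\lambda(v-1)$, which is also what makes the expression $4(\lambda-r_p)+|B|$ in (iii) well defined.
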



A linear code with a complementary dual (or an LCD code) is defined to be a linear code $\C$ whose dual code $\C^{\perp}$ satisfies $\C\cap \C^{\perp}=\{\mathbf{0}\}$. See \cite{Ma} for more information.

\begin{proposition}\label{exam3.3} 
{\rm
Let $\C$ be an LCD linear code of length $m$, and let $f_b(x)=b\cdot x$ for $b\in \C^*$ be a Boolean function in $\mathcal{BF}_m$. Then
 the pair $(\C^*,\{f_b\in\mathcal{BF}_m:b\in \C^*\})$ induces a symmetric 2-design with parameters $(|\C|-1,|\C|/2,|\C|/4)$. 
 
 } 
 
\end{proposition}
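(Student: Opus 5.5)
The plan is to verify the three conditions of Corollary~\ref{cor3.2} directly. We take $P=B=\C^*$ and $f_b(x)=b\cdot x$. The only tool needed is the character sum identity $\chi_a(\C)=|\C|1_{\C^{\perp}}(a)$, together with the LCD hypothesis $\C\cap\C^{\perp}=\{\0\}$. That hypothesis says exactly that no nonzero codeword lies in $\C^{\perp}$, so $\chi_a(\C)=0$ for every $a\in\C^*$.

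\textbf{Condition (i).} For $b\in\C^*$ we compute
\[
W_{P,f_b}(\0)=\sum_{x\in\C^*}(-1)^{b\cdot x}=\chi_b(\C)-1=-1,
\]
using $b\notin\C^{\perp}$. Setting $v=|\C|-1$, the equation $v-2k=-1$ gives the constant block size $k=|\C|/2$.

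\textbf{Condition (ii).} By the symmetry of the dot product, for $p\in\C^*$ we get
\[
\sum_{b\in\C^*}(-1)^{b\cdot p}=\chi_p(\C)-1=-1,
\]
again because $p\notin\C^{\perp}$. Hence the replication number is the constant $r=(|B|+1)/2=|\C|/2$.

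\textbf{Condition (iii).} For distinct $p,q\in\C^*$, linearity gives $p+q\in\C^*$, so
\[
\sum_{b\in\C^*}(-1)^{b\cdot p+b\cdot q}=\chi_{p+q}(\C)-1=-1 .
\]
Solving $4(\lambda-r)+|B|=-1$ with $|B|=|\C|-1$ and $r=|\C|/2$ yields $\lambda=|\C|/4$. Equivalently, one can substitute into Theorem~\ref{lem3.1}(iii): $\frac14\bigl((|\C|-1)+1+1-1\bigr)=|\C|/4$.

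\textbf{Symmetry.} The number of blocks is $|B|=|\C^*|=v$, so the design is symmetric. As a consistency check, Theorem~\ref{lem3.1}(iv) gives the same computation for block intersections, each equal to $|\C|/4$.

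\textbf{Main point of care.} There is no real obstacle. The one thing to watch is that the parameters are positive integers, which needs $\dim\C\ge 2$ so that $|\C|/4$ is an integer. For $\dim\C\le 1$ there are fewer than two points, so the design is degenerate, and I would note that this tacit assumption is required.
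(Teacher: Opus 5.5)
Your proposal is correct and follows the paper's proof essentially verbatim. Both verify the three conditions of Corollary~\ref{cor3.2} using $\chi_a(\C)=|\C|1_{\C^{\perp}}(a)$ and the LCD hypothesis, so that each sum equals $-1$. Your caveat that $\dim\C\ge 2$ is tacitly needed is accurate: it ensures at least two points and an integer $\lambda=|\C|/4$, and the paper leaves this assumption implicit.
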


\begin{proof}
Let $f_b(x)=b\cdot x$ for $b\in \C^*$. We will check the three conditions of Corollary \ref{cor3.2}. 
First, we have that for any $b\in \C^*$,
\[
W_{\C^*,f_b}(\mathbf{0})=\sum_{p\in C^*}(-1)^{b\cdot p}=-1+|\C|1_{\C^{\perp}}(b)=-1,
\]
which is a constant.
Next, we have that for any $p\in \C^*$,
\[
\sum_{b\in \C^*}(-1)^{f_b(p)}=\sum_{{b}\in \C^*}(-1)^{b\cdot p}=-1,
\]
which is a constant. Finally, we have that for two distinct points $p,q\in \C^*$,
\[
\sum_{b\in \C^*}(-1)^{f_b(p)+f_b(q)}=\sum_{b\in \C^*}(-1)^{(p+q)\cdot b}
=-1+|\C|1_{\C^{\perp}}(p+q)=-1,
\]
which is a constant. The parameters are computed by using  Corollary \ref{cor3.2}.
\end{proof}

 \begin{remark}\label{rmk3.4}
 	{\rm
 	 Note that this symmetric $2$-design in Proposition \ref{exam3.3} is the complement of a point-hyperplane design which is already known in \cite[Theorem 6.20]{CD}.

  }
 \end{remark}

\section{Designs from Bent functions}   \label{AT}




In this section, we summarize existing results on designs derived from bent functions. We also present a further extension (Proposition \ref{thm3.8}) on quadratic bent functions and an alternative proof (Theorem \ref{thmB3}) for the classification of minimal rank translation designs of a special class of M-M bent functions.

\subsection{Some well-known results on designs derived from bent functions}

Bending \cite{B} constructed two families of symmetric 2-designs with the same parameters: so called the addition designs and the translation designs. We discuss these two constructions in the following remark.  

Let $f$ be a bent function in $\mathcal{BF}_m$ and let $f^*$ be its dual.

\begin{itemize}
\item  \textbf{Addition designs:} 
 For each $b\in \mathbb{F}^m_2$, let $f_b(x)=b\cdot x+f^*(b)+f(x)$ for $x\in \mathbb{F}^m_2$. A pair $(\mathbb{F}^m_2,\{f_b\in\mathcal{BF}_m:b\in \mathbb{F}^m_2\})$ induces a symmetric $2$-design with parameters 
 \[
(2^m,2^{m-1}-2^{\frac{m}{2}-1},2^{m-2}-2^{\frac{m}{2}-1}).~~~ (\dag)
\]
We call $\mathbb{AD}_f$ the addition design of a bent function $f$, and this concept will be generalized to any $r$-plateaued function with no nonzero linear structure.

\item  \textbf{Translation designs:} For each $b\in \mathbb{F}^m_2$, let $f_b(x)=f(x+b)$. A pair $(\mathbb{F}^m_2,\{f_b\in\mathcal{BF}_m:b\in \mathbb{F}^m_2\})$ induces a symmetric $2$-design with parameters $$(2^m,2^{m-1}-(-1)^{f^*(\mathbf{0})}2^{\frac{m}{2}-1},2^{m-2}{-}(-1)^
 {f^*(\mathbf{0})}2^{\frac{m}{2}-1}).$$ 
We call $\mathbb{TD}_f$ the translation design of a bent function $f$.

\end{itemize}

A family of symmetric designs with parameters $(\dag)$
was first introduced by Block~\cite{B1} referred to as a symplectic design by Kantor \cite{K} later, and ~\cite{CS}. Kantor \cite{K} conducted further studies on symplectic designs and discovered their inherent triple symmetric difference property (TSDP), and computed the automorphism groups of symplectic designs as well as proved that any TSDP design must have parameters of this form. In \cite{BM}, Bracken and McGuire characterized the TSDP designs generating four-weight spin models with two values, showing that symplectic TSDP designs are the only possible ones.

Notice that any symmetric 2-design with parameters $(\dag)$ has $2$-rank at least $m+2$  \cite{MW}, and minimal 2-rank designs with parameters $(\dag)$ have been characterized as follows.

\begin{lemma}\label{thm3.7}  {\rm $($\cite{B,K,DS,MW}$)$
Let $\mathbb{D}$ be a symmetric 2-design with $2^m$ points. The following statements are equivalent.
\begin{itemize}
    \item[(i)] $\mathbb{D}$ is a TSDP design with parameters $(\dag)$.

    \item[(ii)] $\mathbb{D}$ is isomorphic to some addition design derived from a bent function in $\mathcal{BF}_m$.
    
    \item[(iii)] The $2$-rank of $\mathbb{D}$ with parameters $(\dag)$ is $m+2$.

   \item[(iv)] The binary linear code $\mathcal{C}_{\mathbb{D}}$ of $\mathbb{D}$ is generated by the first order Reed-Muller code with parameters $[2^m,m,2^{m-1}]$ and the characteristic vector of a difference set.
\end{itemize}
}
\end{lemma}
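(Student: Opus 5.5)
The plan is to prove the cycle (ii)$\Rightarrow$(iii)$\Rightarrow$(iv)$\Rightarrow$(ii) and then attach (i) through (i)$\Leftrightarrow$(ii). The computational backbone is Lemma~\ref{lem2} together with Lemma~\ref{lem3.6}.

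\textbf{(ii)$\Rightarrow$(iii) and (ii)$\Rightarrow$(iv).} Let $\mathbb{D}=\mathbb{AD}_f$ with $f$ bent. The incidence matrix is $M=(b\cdot x+f^*(b)+f(x))_{x,b}$. Each row is a linear combination of the rows of
\[
G=\begin{pmatrix}(x_i)_{x}\\ (1)_x \\ (f(x))_x\end{pmatrix},
\]
read as functions of the block index, or equivalently of the column-side functions $b\mapsto b$, $b\mapsto f^*(b)$, $b\mapsto 1$. So it is cleaner to work with the code spanned by the columns $b\mapsto (b\cdot x+f^*(b)+f(x))_x$. The row space of $M$ lies in the span of the coordinate functions $b_i$, the constant $1$, and $f^*$. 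By Lemma~\ref{lem2}, applied with $\sigma=\mathrm{id}$ and the non-affine function $f^*$ (a bent function is non-affine), this span has dimension $m+2$.

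Conversely, the row space contains all of these functions. Indeed, $\sum_x$-type combinations are available: row$_x+$row$_0$ gives $b\cdot x+f(x)+f(0)$, which yields every linear function plus constants. The constant $1$ appears because some row has $f(x)\ne f(0)$; and then $f^*$ itself is obtained. This gives rank exactly $m+2$ and shows $\mathcal{C}_{\mathbb{D}}=\mathrm{RM}(1,m)+\langle f^*\rangle$. The support of $f^*$ (or of its complement) is a Hadamard difference set, which gives (iv).

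\textbf{(iv)$\Rightarrow$(iii)} is a dimension count, again using Lemma~\ref{lem2}: the support $D$ of a difference set gives a bent indicator, hence a non-affine one, so the dimension is $m+2$.

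\textbf{(iii)$\Rightarrow$(ii)} is the main obstacle. Given a symmetric design with parameters $(\dag)$ and 2-rank $m+2$, I would first show that the code contains a copy of $\mathrm{RM}(1,m)$ on the points. The route is to use the known minimum rank $m+2$ bound of \cite{MW}: the code of a symmetric design with these parameters has all block weights $2^{m-1}\pm2^{m/2-1}$, and a counting argument on weights shows that the weight-$2^{m-1}$ words form, together with $0$ and $\mathbf{1}$, a code of dimension $m+1$ with the weights of $\mathrm{RM}(1,m)$. Such a code is $\mathrm{RM}(1,m)$, since a linear $[2^m,m+1]$ code with weights $0,2^{m-1},2^m$ is the Hadamard code. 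This identifies the points with $\mathbb{F}_2^m$ so that the code is $\mathrm{RM}(1,m)\oplus\langle 1_D\rangle$.

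Each block then has the form $\{x:a\cdot x+\epsilon+g(x)=1\}$, where $g=1_D$ is forced to be bent by the block-size and $\lambda$ conditions of Corollary~\ref{cor3.2}. There are $2^m$ blocks, and intersection sizes pin down exactly one $\epsilon$ for each $a$. Matching $\epsilon$ against $f^*(a)$ for $f=g^*$ (up to complement) recovers an addition design. I expect this identification step to be the delicate one, and I would lean on \cite{DS} for it.

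\textbf{(i)$\Leftrightarrow$(ii).} For (ii)$\Rightarrow$(i), note that $B_1\Delta B_2\Delta B_3$ corresponds to the sum of three functions $a_i\cdot x+f^*(a_i)+f(x)$, which equals $(a_1+a_2+a_3)\cdot x+f(x)+c$. This is a block or the complement of a block according to $c$ versus $f^*(a_1+a_2+a_3)$. For (i)$\Rightarrow$(iii), the TSDP shows that the span of the blocks together with $\mathbf{1}$ consists of $\mathbf{1}$-translates of sums of at most two blocks. Counting the blocks and the symmetric differences of pairs, which form a set closed under sums, gives a space of size $2^{m+2}$. Kantor's argument \cite{K} can be cited for the remaining details.
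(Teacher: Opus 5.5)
The paper gives no proof of this lemma; it is quoted from \cite{B,K,DS,MW}. Your easy implications are essentially right. There is, however, a genuine gap at (iii)$\Rightarrow$(ii), and this is also the hub through which you route (i) and (iv).

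The sentence ``a counting argument on weights shows that the weight-$2^{m-1}$ words form, together with $0$ and $\mathbf{1}$, a code of dimension $m+1$'' is not a routine count. It is the Dillon--Schatz/McGuire--Ward characterization itself, and you give no argument for it. From the rank hypothesis you know only that blocks have weight $k$ and that sums of two distinct blocks have weight $2(k-\lambda)=2^{m-1}$. You do not know:
\begin{itemize}
\item the weights of sums of three or more blocks;
\item that $\mathbf{1}\in\mathcal{C}_{\mathbb{D}}$;
\item that the weight-$2^{m-1}$ words are closed under addition.
\end{itemize}
So there is no weight distribution to count with, and the lower bound of \cite{MW} gives no structure for a code attaining it. The paper itself flags this step: for $r\ge 1$, Theorem~\ref{thm4.7}(iii) has to \emph{assume} $\mathrm{RM}(1,m)\subseteq\mathcal{C}_{\mathbb{D}}$, and removing that assumption is Open Problem~\ref{openpro3}.

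By contrast, the step you call delicate is routine once $\mathcal{C}_{\mathbb{D}}=\mathrm{RM}(1,m)\oplus\langle g\rangle$. Since $k\notin\{0,2^{m-1},2^m\}$, every block has the form $g+a\cdot x+\epsilon$. For each $a$ at most one $\epsilon$ gives weight $k$, so the $2^m$ distinct blocks use every $a$ exactly once. Hence $(-1)^{\epsilon}W_g(a)=2^{m/2}$ for all $a$, so $g$ is bent and $\epsilon=g^*(a)$.

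The same observation shows you do not need (iii) as a hub. (iv)$\Rightarrow$(ii) is immediate, and so is (i)$\Rightarrow$(ii). Under the TSDP, the set $\{0,\mathbf{1}\}+\{B_0\Delta B: B\in\mathcal{B}\}$ is closed under addition, since $(B_0\Delta B_i)\Delta(B_0\Delta B_j)=B_0\Delta(B_0\Delta B_i\Delta B_j)$. It has $2^{m+1}$ elements and nonzero weights $2^{m-1}$ and $2^m$, hence is equivalent to $\mathrm{RM}(1,m)$. This is the labelling from \cite{B} used in the proof of Theorem~\ref{thm4.7}, (i)$\Rightarrow$(ii). Only (iii)$\Rightarrow$(ii) then genuinely needs \cite{DS,MW}.

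Smaller slips:
\begin{itemize}
\item In (ii)$\Rightarrow$(iii), ``some row has $f(x)\neq f(\mathbf{0})$'' does not put $\mathbf{1}$ in the row space, since it also holds for affine $f$. You need $x,y$ with $f(x)+f(y)+f(x+y)+f(\mathbf{0})=1$, i.e.\ non-affinity of $f$, exactly as in Lemmas~\ref{lem3.6}, \ref{lem2} and~\ref{lem4.3}.
\item In (i)$\Rightarrow$(iii), your count gives $\dim\langle\mathcal{B}\cup\{\mathbf{1}\}\rangle=m+2$, not the 2-rank. You still need $\mathbf{1}\in\langle\mathcal{B}\rangle$. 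Suppose no triple symmetric difference were ever a complement. Then $\{B_0\Delta B: B\in\mathcal{B}\}$ would itself be closed under addition, hence the code of linear functions in suitable coordinates. That forces $W_{1_{B_0}}(a)=2^{m/2}$ for all $a$, contradicting $\sum_a W_{1_{B_0}}(a)=\pm 2^m$.
\item In (iv)$\Rightarrow$(iii), note that $k=2^{m-1}$ is impossible for a symmetric design on $2^m$ points. So the parameters are $(\dag)$ or their complement.
\end{itemize}
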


In the following, we prove that the sufficient condition in \cite[Theorem 11.9]{B}  is  also necessary.


\begin{proposition}
    \label{thm3.8}{\rm  

Let $f$ be a bent function in $\mathcal{BF}_m$  with $f^*(\mathbf{0})=0$. The following statements are equivalent.
\begin{itemize}
    \item[(i)] $f$ is quadratic.
    \item[(ii)] The addition design $\mathbb{AD}_f$ and translation design $\mathbb{TD}_f$ derived from $f$ are isomorphic with an identity map on point sets and some linear permutation map on block sets.
    \end{itemize}
    }
 \end{proposition}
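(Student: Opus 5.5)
The plan is to write the isomorphism condition in (ii) as a functional identity and treat each direction separately. The blocks of $\mathbb{AD}_f$ are $B_b=\{x: f(x)+b\cdot x+f^*(b)=1\}$ and those of $\mathbb{TD}_f$ are $T_c=\{x: f(x+c)=1\}$, for $b,c\in\mathbb{F}_2^m$. An isomorphism that is the identity on points and a linear permutation $L$ on blocks satisfies $M_{p,B_b}=N_{p,T_{Lb}}$ for all $p$ and $b$. Since a block is the support of its Boolean function, this says $B_b=T_{Lb}$ for all $b$, which is equivalent to
\[
f(x+Lb)+f(x)=b\cdot x+f^*(b)\quad\text{for all }x,b\in\mathbb{F}^m_2. \qquad (\ast)
\]
Both directions reduce to $(\ast)$.

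For (ii)$\Rightarrow$(i), assume $(\ast)$ holds with $L$ a linear bijection. Then for every $c=Lb\in\mathbb{F}_2^m$, the derivative $D_cf(x)=f(x+c)+f(x)$ is affine in $x$. A Boolean function all of whose first derivatives are affine has degree at most $2$. A bent function on $m\ge 2$ variables is not affine, since its Walsh transform never vanishes, so $\deg f=2$.

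For (i)$\Rightarrow$(ii), write the quadratic $f$ as
\[
f(x+c)=f(x)+f(c)+f(\mathbf{0})+x\cdot Mc,
\]
where $x\cdot Mc$ is the symplectic bilinear form of $f$. Here $M$ is symmetric with zero diagonal, so $c\cdot Mc=0$. Because $f$ is bent, this form is nondegenerate: otherwise a nonzero $c$ in its radical would make $D_cf$ constant, i.e.\ a linear structure, which bent functions do not have. So $M$ is invertible.

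Next compute $f^*$. Given $b$, put $c=M^{-1}b$ and substitute $x=y+c$ in $W_f(b)$. This gives $f(y+c)+b\cdot(y+c)=f(y)+f(c)+f(\mathbf{0})+b\cdot c$, using $y\cdot Mc=b\cdot y$. Hence
\[
W_f(b)=(-1)^{f(c)+f(\mathbf{0})+c\cdot Mc}\,W_f(\mathbf{0})=(-1)^{f(c)+f(\mathbf{0})}W_f(\mathbf{0}).
\]
The hypothesis $f^*(\mathbf{0})=0$ then yields $f^*(b)=f(c)+f(\mathbf{0})$ with $c=M^{-1}b$. Taking $L=M^{-1}$, the expansion of $f$ gives
\[
f(x+Lb)+f(x)=f(c)+f(\mathbf{0})+x\cdot b=b\cdot x+f^*(b),
\]
which is exactly $(\ast)$. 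So $L$ is the required linear permutation on blocks.

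The step needing the most care is this dual computation. One has to track the constant $f(\mathbf{0})$ and use that the quadratic form is alternating ($c\cdot Mc=0$); the hypothesis $f^*(\mathbf{0})=0$ is precisely what removes the remaining sign, so that the constants in $(\ast)$ match.
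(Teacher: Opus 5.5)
Your proof is correct. For (ii)$\Rightarrow$(i) you use essentially the paper's idea. The paper sets $p=\mathbf{0}$ to get $f^*(\pi b)=f(b)+f(\mathbf{0})$ and rewrites the identity as $f(p)+f(b)+f(p+b)=p\cdot\pi b+f(\mathbf{0})$. It then reads off $\deg f$ from this as an identity in $2m$ variables, which uses that $\pi$ is linear. You instead observe directly that every derivative $D_cf$ is affine.

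Your version is slightly sharper in two ways. First, it needs only that the block map is onto. Linearity in (ii) is then forced anyway, since matching $x\cdot M\phi(b)$ with $b\cdot x$ gives $\phi=M^{-1}$. Second, it explicitly excludes the affine case. The paper's remark that $f(x)+f(y)+f(x+y)$ has the same degree as $f$ is only true when $\deg f\ge 2$.

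For (i)$\Rightarrow$(ii) the paper gives no argument and simply cites Bending's thesis [Theorem 11.9]. You give a genuinely self-contained route. You use the alternating form $x\cdot Mc$, which is nondegenerate because bent functions have no linear structure. From it you compute $f^*(b)=f^*(\mathbf{0})+f(M^{-1}b)+f(\mathbf{0})$ and exhibit the block map $L=M^{-1}$ explicitly. This also shows exactly where the hypothesis $f^*(\mathbf{0})=0$ enters. The cost is only a few extra lines of computation.
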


\begin{proof}
(i) $\Longrightarrow$ (ii): It was verified in \cite[Theorem 11.9]{B}.

(ii) $\Longrightarrow$ (i): Assume that $f(p)+f^*(\pi b)+p\cdot\pi b =f(p+b)$ for some linear permutation $\pi$ of $\mathbb{F}_2^m$. By plugging $p=\mathbf{0}$, we have $f^*(\pi b)=f(b)+f(\mathbf{0})$ for all $b\in\mathbb{F}^m_2$, and so  
\begin{align}\label{eqn:(6)}
    f(p)+f(b)+f(p+b)=p\cdot\pi b+f(\mathbf{0})
\end{align}
for all $p,b\in\mathbb{F}^m_2$. Let $g$ be the function of $x,y\in \mathbb{F}_2^m$ defined by
\begin{align}\label{eqn:(7)}
	g(x,y)=f(x)+f(y)+f(x+y)+x\cdot\pi y+f(\mathbf{0}).
\end{align}
Then $g(p,b)=0$ for all $p,b\in\mathbb{F}^m_2$ by Eq. (\ref{eqn:(6)}), and so the Boolean $g$ is identically zero, that is, $g\equiv 0$. Notice that the degree of $f(x)+f(y)+f(x+y)$ is equal to the degree of $f$. Observe that the degree of $x\cdot\pi y$ is two, since $\pi$ is linear. Hence $f$ is quadratic using Eq. (\ref{eqn:(7)}). This completes the proof.
\end{proof}

To illustrate our Proposition  \ref{thm3.8} we give the following example.
\begin{example}\label{ex2.9}
	
	{\rm
	{Let $f(x_1,x_2,x_3,y_1,y_2,y_3)=x_1y_1+x_2y_2+x_3y_3$ be an M-M bent function in $\mathcal{BF}_6$. Let $\pi$ be the permutation of $\mathbb{F}_2^6$ defined as $\pi(x_1,x_2,x_3,y_1,y_2,y_3)=(y_1,y_2,y_3,x_1,x_2,x_3)$. One can check that $f(x)+f^{*}(\pi b)+x\cdot\pi b=f(x+b),$}
	{where $x=(x_1,x_2,x_3,y_1,y_2,y_3)$ and $b=(a_1,a_2,a_3,b_1,b_2,b_3)$ in $\F^6_2$. Thus the addition design and translation design derived from $f$ are isomorphic with an identity map on point sets and $\pi$ being a linear permutation on block sets.}
 }
\end{example}

\subsection{ Classification of minimal rank translation designs of a special class of M-M bent functions }

 Weng, Feng, and Qiu \cite{WFQ} investigated the ranks of translation designs derived from various classes of bent functions. Recall that any symmetric 2-design with parameters $(\dag)$ has 2-rank at least $m + 2$. In this subsection, we are interested in classifying bent functions $f$ in $m$ variables for which $\mathbb{TD}_f$ has the {TSDP}, that is, $\mathbb{TD}_f$ has minimal rank $m+2$.  We classify the minimal rank translation designs of a special class of M-M bent functions (Theorem \ref{thmB3}).



\begin{lemma}\label{lem.SDP}{\rm
	{Let $P$ and $B$ be subsets of $\mathbb{F}^m_2$. 
 A 
 2-design induced the blocks $B^{f_b}$ for $b\in\mathbb{F}^m_2$ has the TSDP if and only if for any $b_1,b_2,b_3\in B$, there is a $b_4\in B$ such that $f_{b_1}(x)+f_{b_2}(x)+f_{b_3}(x)=f_{b_4}(x)+\varepsilon_{b_4}$ for all $x\in P,$ where $\varepsilon_{b_4}$ is a constant in $\mathbb{F}_2$.}
	}
\end{lemma}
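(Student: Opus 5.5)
The argument is a direct translation between set operations on blocks and sums of Boolean functions over $\mathbb{F}_2$, restricted to the point set $P$. The key observation is that for a block $B^{f_b}=\{p\in P: f_b(p)=1\}$, the characteristic function of $B^{f_b}$ on $P$ is exactly $f_b|_P$. I will proceed in three steps, then verify the two directions.

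\textbf{Step 1: symmetric differences become sums.} For any blocks, the characteristic function of $B^{f_{b_1}}\Delta B^{f_{b_2}}\Delta B^{f_{b_3}}$ on $P$ is $f_{b_1}+f_{b_2}+f_{b_3}$ restricted to $P$. This holds because membership of a point in a symmetric difference of sets is the mod-$2$ sum of its memberships.

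\textbf{Step 2: complements become adding $1$.} The complement $P\setminus B^{f_{b_4}}$ has characteristic function $f_{b_4}+1$ on $P$.

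\textbf{Step 3: equal subsets mean equal characteristic functions.} Two subsets of $P$ coincide if and only if their characteristic functions agree at every $x\in P$.

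\emph{Forward direction.} Assume TSDP. Given $b_1,b_2,b_3$, the set $B^{f_{b_1}}\Delta B^{f_{b_2}}\Delta B^{f_{b_3}}$ equals either some block $B^{f_{b_4}}$ or its complement. Comparing characteristic functions gives $f_{b_1}+f_{b_2}+f_{b_3}=f_{b_4}+\varepsilon_{b_4}$ on $P$, with $\varepsilon_{b_4}=0$ in the first case and $\varepsilon_{b_4}=1$ in the second.

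\emph{Converse.} Reading the functional identity back through Steps 1–3 shows that the triple symmetric difference is $B^{f_{b_4}}$ when $\varepsilon_{b_4}=0$, or $P\setminus B^{f_{b_4}}$ when $\varepsilon_{b_4}=1$. Either way the TSDP holds.

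The main point needing care is the quantifier over the $b_i$. The TSDP in Definition~\ref{defn2@} refers to pairwise distinct blocks, whereas the lemma as stated ranges over all $b_1,b_2,b_3\in B$. I would make the two match in one of two ways:
\begin{itemize}
\item Interpret the lemma's $b_i$ as indexing pairwise distinct blocks.
\item Observe that degenerate cases are automatic. If $b_1=b_2$, the sum collapses to $f_{b_3}$, so $b_4=b_3$ and $\varepsilon=0$ work. The same holds whenever two of the labels give the same block, since then their functions agree on $P$.
\end{itemize}

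With the second observation, the "for all $b_1,b_2,b_3$" formulation is equivalent to the pairwise-distinct one, and both directions go through.
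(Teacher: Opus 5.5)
Your proof is correct and follows the paper's argument: $x$ lies in $B^{f_{b_1}}\Delta B^{f_{b_2}}\Delta B^{f_{b_3}}$ exactly when $f_{b_1}(x)+f_{b_2}(x)+f_{b_3}(x)=1$ on $P$, and taking the complement of a block corresponds to adding the constant $1$. Your extra remark is sound and makes explicit something the paper leaves implicit: the definition requires ``pairwise distinct'' blocks while the lemma quantifies over all $b_1,b_2,b_3$, and the degenerate cases are automatic.
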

\begin{proof}
	{The result follows by noting that $x\in B^{f_{b_1}}\Delta B^{f_{b_2}}\Delta B^{f_{b_3}}$ if and only if $x$ is in exactly one of $B^{f_{b_1}}$, $B^{f_{b_2}}$ and $B^{f_{b_3}}$ or in all of three of them, which is $f_{b_1}(x)+f_{b_2}(x)+f_{b_3}(x)=1$.}
\end{proof}


\begin{theorem}\label{thmB3}{\rm
{Let $f(x,y)=x\cdot y+ g(y)$ in $\mathcal{BF}_{2m}$ be an M-M bent function. Then the translation design $\mathbb{TD}_f$ derived from $f$ has the TSDP, {which is equivalent to having minimal rank $2m+2$}, if and only if the degree of $g$ is less than or equal to three.

 } 
}
\end{theorem}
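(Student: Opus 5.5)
Via Lemma~\ref{lem.SDP}, the TSDP becomes a functional identity: for all $b_1,b_2,b_3\in\F_2^{2m}$ there must exist $b_4$ and $\varepsilon\in\F_2$ with
\[
f(z+b_1)+f(z+b_2)+f(z+b_3)=f(z+b_4)+\varepsilon \quad\text{for all } z.
\]
Write $z=(x,y)$ and $b_i=(u_i,v_i)$. Since $f(x+u,y+v)=x\cdot y+x\cdot v+u\cdot y+u\cdot v+g(y+v)$, the left side equals
\[
x\cdot y+x\cdot(v_1+v_2+v_3)+(u_1+u_2+u_3)\cdot y+\text{const}+g(y+v_1)+g(y+v_2)+g(y+v_3).
\]
Matching the $x$-linear parts with those of $f(z+b_4)$ forces $v_4=v_1+v_2+v_3$. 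The term $u_4\cdot y$ is then free to absorb any linear function of $y$. Hence the TSDP is equivalent to the following condition:
\[
(\ast)\quad g(y+v_1)+g(y+v_2)+g(y+v_3)+g(y+v_1+v_2+v_3) \text{ is affine in } y \text{ for all } v_1,v_2,v_3.
\]
Conversely, if $(\ast)$ holds, we choose $u_4$ and $\varepsilon$ to match the resulting affine function of $y$, and the identity above is satisfied.

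Next reformulate $(\ast)$ with derivatives. Substitute $y\mapsto y+v_1$, $a=v_1+v_2$ and $c=v_1+v_3$. The expression in $(\ast)$ becomes
\[
g(y)+g(y+a)+g(y+c)+g(y+a+c)=D_aD_cg(y),
\]
the second derivative of $g$. So $(\ast)$ says that every second derivative $D_aD_cg$ is affine. This is equivalent to every third derivative $D_eD_aD_cg$ being constant, and hence to every fourth derivative vanishing. The standard fact is that $\deg g\le d$ if and only if all $(d+1)$-th derivatives vanish identically. Therefore $(\ast)$ holds if and only if $\deg g\le 3$.

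For the direction $\deg g\le3\Rightarrow(\ast)$, it suffices that each derivative lowers degree by at least one, so $D_aD_cg$ has degree at most $1$. For the converse, I expect this step to be the main obstacle, since it needs the precise iterated-derivative characterization of degree. Suppose $\deg g\ge4$ and pick a monomial $y_{i_1}\cdots y_{i_k}$ of maximal degree $k\ge4$. Taking $a=e_{i_1}$ and $c=e_{i_2}$, the second derivative $D_aD_cg$ contains the monomial $y_{i_3}\cdots y_{i_k}$ of degree $k-2\ge2$. This monomial cannot cancel, because it arises only from monomials containing $y_{i_1}y_{i_2}$, and maximality of $k$ rules out other contributions. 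So $D_aD_cg$ is not affine and $(\ast)$ fails.

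Finally, the stated equivalence between the TSDP and minimal rank $2m+2$ follows from Lemma~\ref{thm3.7}, because $\mathbb{TD}_f$ is a symmetric design with parameters $(\dag)$ in dimension $2m$.
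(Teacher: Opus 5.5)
Your proof is correct and is essentially the paper's argument: Lemma~\ref{lem.SDP}, the forced relation $v_4=v_1+v_2+v_3$ (the paper's $\beta=b+d$), and reduction to affineness of $g(y)+g(y+a)+g(y+c)+g(y+a+c)$. Where you differ is presentation: you state this reduction as an equivalence and settle both directions via the derivative characterization of degree, with the choice $a=e_{i_1}$, $c=e_{i_2}$ for the converse; this is a cleaner version of the paper's degree-by-degree monomial expansions and its generic-coefficient argument.

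One small correction to your last paragraph: $\mathbb{TD}_f$ has parameters $(\dag)$ only when $f^*(\mathbf{0})=g(\mathbf{0})=0$. If $g(\mathbf{0})=1$, replace $g$ by $g+1$. This complements every block and changes neither the TSDP nor $\deg g$. It also leaves the $2$-rank unchanged: the row spaces of $\mathbb{TD}_f$ and $\mathbb{TD}_{f+1}$ are the ideals of $\mathbb{F}_2[\mathbb{F}_2^{2m}]$ generated by $f$ and $f+1$, every nonzero ideal there contains $\mathbf{1}$, so the two coincide.
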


\begin{proof}
	From this point onward, we represent $f_1(x)\fallingdotseq f_2(x)$ if $f_1(x)=f_2(x)+s$ for all $x\in\mathbb{F}_2^n$ and a  constant $s\in\mathbb{F}_2$.
	
	{($\Leftarrow$):  Let $f(x,y)=x\cdot y+g(y)$ with deg $g \le 3$.}
		{By Lemma \ref{lem.SDP}, we have to show that for any $a_1,a_2,a_3,b_1,b_2,b_3\in\mathbb{F}_2^m$, there are $a_4,b_4\in\mathbb{F}_2^m$ such that $f(x+a_1,y+b_1)+f(x+a_2,y+b_2)+f(x+a_3,y+b_3)\fallingdotseq f(x+a_4,y+b_4)$. Observe that $x+a_1$ and $y+b_1$ can be treated as $x$ and $y$, respectively.  It is thus enough to show that for any $a,b,c,d\in\mathbb{F}_2^m$, there are $\alpha,\beta\in\mathbb{F}_2^m$ such that}
	\begin{align}\label{eqn:SDP}
		&f(x,y)+f(x+a,y+b)+f(x+c,y+d)\nonumber\\
		&\fallingdotseq f(x+\alpha,y+\beta)
	\end{align}
{for all $ x,y\in \mathbb{F}_2^m$.}
{Since $f(x,y)=x\cdot y +g(y)$, then}
\begin{multline*}
	f(x,y)+f(x+a,y+b)+f(x+c,y+d)\\
	\fallingdotseq (x+a+c)\cdot(y+b+d)+g(y)+g(y+b)+g(y+d).
\end{multline*}
{If $g$ is a constant or linear, then $g(y)+g(y+b)+g(y+d)\fallingdotseq g(y+b+d)$, and Eq. (\ref{eqn:SDP}) holds when $\alpha=a+c$ and $\beta=b+d$.}
{Next we consider the quadratic Boolean function $h_1(x_1,x_2)=x_1 x_2$. Then}
\begin{align*}
&h_1(x_1,x_2)+h_1(x_1+a_1,x_2+a_2)+h_1(x_1+b_1,x_2+b_2)\\ &=x_1x_2+(a_2+b_2)x_1+(a_1+b_1)x_2+a_1a_2+b_1b_2\\ &\fallingdotseq h_1(x_1+a_1+b_1,x_2+a_2+b_2).
\end{align*}
{If $g$ is a quadratic Boolean function, then $g(y)+g(y+b)+g(y+d)\fallingdotseq g(y+b+d)$ since $g$ is a linear combination of monomials whose degree is less than or equal to two. Therefore Eq. (\ref{eqn:SDP}) holds when $\alpha=a+c$ and $\beta=b+d$.}
Next we consider the cubic Boolean function $h_2(x_1,x_2,x_3)=x_1x_2x_3$. Then,
\begin{align*}
	&h_2(x_1,x_2,x_3)+h_2(x_1+a_1,x_2+a_2,x_3+a_3)\\
	&+h_2(x_1+b_1,x_2+b_2,x_3+b_3)\\
	&=x_1x_2x_3+(a_3+b_3)x_1x_2+(a_2+b_2)x_1x_3+(a_1+b_1)x_2x_3\\
	&\quad+(a_2a_3+b_2b_3)x_1+(a_1a_3+b_1b_3)x_2+(a_1a_2+b_1b_2)x_3\\
	&+a_1a_2a_3+b_1b_2b_3\\
	&\fallingdotseq h_2(x_1+a_1+b_1,x_2+a_2+b_2,x_3+a_3+b_3)\\
	&\quad +(a_2b_3+a_3b_2)x_1+(a_1b_3+a_3b_1)x_2+(a_1b_2+a_2b_1)x_3.
\end{align*}
Thus if $g$ is a cubic Boolean function, then
\begin{align*}
	g(y)+g(y+b)+g(y+d)\fallingdotseq g(y+b+d)+\sum_{i=1}^{m}\gamma_i y_i
\end{align*}
where $\gamma_i\in\mathbb{F}_2$ are constants which depend on $b$ and $d$, and $y\in\mathbb{F}_2^m$.
That is, 
\begin{align*}
	&f(x,y)+f(x+a,y+b)+f(x+c,y+d)\\
	&\fallingdotseq (x+a+c)\cdot(y+b+d)+g(y+b+d)+\sum_{i=1}^{m}\gamma_i y_i\\
	&\fallingdotseq (x+a+c+\gamma)\cdot(y+b+d)+g(y+b+d)\\
	&=f(x+a+c+\gamma,y+b+d)
\end{align*}
Therefore, $f(x,y)=x\cdot y+g(y)$ induces a {TSDP} design when deg $g \le 3$.

	{($\Rightarrow$): Let $f(x,y)=x\cdot y+g(y)$ be an M-M bent function with deg $g \ge 4$. Assume that $f$ induces a {TSDP} design. { By Lemma \ref{lem.SDP}}, for all $a,b,c,d\in\mathbb{F}_2^m$, there exist $\alpha,\beta\in\mathbb{F}_2^m$ such that}
\begin{align}\label{eqn:SDP2}
	&f(x,y)+f(x+a,y+b)+f(x+c,y+d)\nonumber\\
	&\fallingdotseq f(x+\alpha,y+\beta).
\end{align}
If $y={\bf 0}$ in Eq. (\ref{eqn:SDP2}), 
\begin{align*}
	&f(x,0)+f(x+a,b)+f(x+c,d)\\
	&=(x+a)\cdot b+(x+c)\cdot d +g(0)+g(b)+g(d)\fallingdotseq x\cdot (b+d).
\end{align*}
Since $f(x+\alpha,\beta)=(x+\alpha)\cdot \beta+g(\beta)\fallingdotseq x\cdot \beta$, we obtain that $x\cdot (b+d)\fallingdotseq x\cdot \beta$. So it follows that $\beta=b+d$.
If $x={\bf 0}$ in Eq. (\ref{eqn:SDP2}),
\begin{align*}
	&f(0,y)+f(a,y+b)+f(c,y+d)\\
	&=a\cdot(y+b)+c\cdot(y+d)+g(y)+g(y+b)+g(y+d)\\
	&\fallingdotseq(a+c)\cdot y +g(y)+g(y+b)+g(y+d)
\end{align*}
Note that $f(\alpha,y+\beta)=\alpha\cdot(y+\beta)+g(y+\beta)\fallingdotseq \alpha\cdot y+g(y+\beta)$. Since $\beta=b+d$, we have
\begin{align*}
	&(a+c)\cdot y+g(y)+g(y+b)+g(y+d)\\
	&\fallingdotseq \alpha \cdot y+g(y+b+d),
\end{align*}
which implies that
\begin{align}{\label{eqn:SDP3}}
	&g(y)+g(y+b)+g(y+d)+g(y+b+d)\nonumber\\
	&\fallingdotseq (a+c+\alpha)\cdot y.
\end{align}
Let $g$ be a Boolean function of degree $k\geq 4$ that contains at least one monomial $h_1(y_1,\ldots,y_m):=\prod_{i=1}^{k}y_i$ of degree $k$. Then the degree of $h_1(y)+h_1(y+b)+h_1(y+d)+h_1(y+b+d)$ is $k-2$, which implies that the degree of $g(y)+g(y+b)+g(y+d)+g(y+b+d)$ in Eq. \eqref{eqn:SDP3}  is at most $k-2$. 

{Now let $g(x)=h(x)+r(x),$ where $h(x)$ is a Boolean function with degree $k$ terms and each term involves $\prod_{i=1}^{k-2}y_i$, and $r(x)$ is remained terms of $g(x)$. Then $h(x)=\prod_{i=1}^{k-2}y_i\sum_{(i,j)\in I}y_iy_j$ for some subset $I\subset \{1,\ldots, m\}\times\{1,\ldots, m\}$. Note that $I$ is nonempty, since $(k-1,k-2)\in I$.}

{It is obvious that $h_1(y)+h_1(y+b)+h_1(y+d)+h_1(y+b+d)$ consists of a $y_1y_2\cdots y_{k-2}$ term with coefficient $b_{k-1}d_k+b_kd_{k-1}$. Then the coefficient of $\prod_{i=1}^{k-2}y_i$ in $g(y)+g(y+b)+g(y+d)+g(y+b+d)$, which is the same as the coefficient of $\prod_{i=1}^{k-2}y_i$ in $h(y)+h(y+b)+h(y+d)+h(y+b+d)$, is $\sum_{(i,j)\in I}{b_id_j+b_jd_i}$. Therefore the coefficient of $\prod_{i=1}^{k-2}y_i$ is not identically zero, and it implies that the degree of $g(y)+g(y+b)+g(y+d)+g(y+b+d)$ is $k-2$ which is at least $2$. This is a contradiction to (\ref{eqn:SDP3}) and completes the proof.}  
\end{proof}

\begin{remark}
{\rm During the revision of this manuscript, the authors observed that the rank of $f(x,y) = x \cdot y + g(y)$ in Theorem 2 had already been established in \cite[Theorem 3.5]{WFQZ}. Thus, Theorem 2 can be derived from \cite[Theorem 3.5]{WFQZ} and Lemma \ref{thm3.7}. However, as our proof differs entirely from that in \cite{WFQZ}, we have chosen to keep it.
}
 
\end{remark}

We finish  this {section} with the following problems:
\begin{problem}\label{openproblemnew} 
{\rm 
{ Classify the minimal rank translation designs derived from bent functions.}
}
\end{problem}




\section{ Designs derived from plateaued functions}\label{s4}

Recall that bent functions give rise to symmetric simple $2$-designs with identical parameters: addition designs (which satisfy the TSDP) and translation designs (which, in general, do not). This naturally raises the question for $r$-plateaued functions. To address this, in the first subsection we provide an alternative proof (Lemma \ref{thm 2.23}) for addition designs $\mathbb{AD}_f$ (not necessarily simple or symmetric) derived from $r$-plateaued functions $f$ in $m$ variables, and we show (Proposition \ref{prop2}) that plateaued functions of degree at most one induce precisely translation designs. In the second subsection, we focus on characterizing simple $2$-designs with $2^{m-r}$ points derived from $r$-plateaued functions (Theorem \ref{thm4.7}), in analogy with Lemma \ref{thm3.7}. As a byproduct, we prove (Corollary \ref{cor2}) that the $\mathbb{AD}_f$ satisfies the TSDP, producing non-symmetric simple 2-designs that satisfy the triple symmetric difference property while failing the double symmetric difference property (Example \ref{Example 4} right above).

\subsection{Addition designs from plateaued functions: a new proof}

In \cite{DN}, Dempwolff and Neumann constructed  the addition designs from $r$-plateaued functions, which is a generalization of the Bending's work presented. In this subsection, we provide  a very simple proof for this in our framework. 

\begin{lemma}{\rm  \label{thm 2.23} \cite{DN} 
For an integer $r$ $(0\leq r < m)$, let $f$ be an $r$-plateaued function in $\mathcal{BF}_m,$ and let $P=S_f$ be the Walsh support of $f$. We can write $W_f(x)=(-1)^{g(x)}2^{\frac{m+r}{2}}1_P(x)$ for some $g$ in $\mathcal{BF}_m$. For each $b\in \mathbb{F}^m_2$, let $f_b(x)=b\cdot x+f(b)+g(x)$. Then a pair $(P,\{f_b\in\mathcal{BF}_m:b\in\mathbb{F}^m_2\})$ induces a 2-design $\mathbb{AD}_f=(P,\{B^{f_b}:b\in \mathbb{F}^m_2\})$ with parameters $(2^{m-r},2^{m-r-1}-2^{\frac{m-r-2}{2}},2^{m-2}-2^{\frac{{m+r-2}}{2}})$. The intersection numbers are $2^{m-r-2}-2^{\frac{m-r-2}{2}}+2^{-2-r}(-1)^{f(a)+f(b)}C_f(a+b)$
for $a,b\in\mathbb{F}^m_2$. If $r\neq0$, then $\mathbb{AD}_f$ is a non-symmetric 2-design. We call $\mathbb{AD}_f$ the addition design of $f$.
}
\end{lemma}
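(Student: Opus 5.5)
The plan is to check the three conditions of Corollary \ref{cor3.2} for the pair $(P,\{f_b\})$, with $v=|P|$, and to get the intersection numbers from Theorem \ref{lem3.1}(iv). Everything reduces to sums of $(-1)^{g(x)}$ against characters over $P$, and these are controlled by the inversion formula for $f$.

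\textbf{Step 0 (point count and inversion).} Parseval's identity gives $|P|\,2^{m+r}=2^{2m}$, so $v=|P|=2^{m-r}$. Since $(-1)^{g(x)}1_P(x)=2^{-\frac{m+r}{2}}W_f(x)$, the inversion formula yields, for every $a\in\mathbb{F}^m_2$,
\[
\sum_{x\in P}(-1)^{g(x)+a\cdot x}=2^{-\frac{m+r}{2}}\sum_{x}W_f(x)(-1)^{a\cdot x}=2^{\frac{m-r}{2}}(-1)^{f(a)}.
\]

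\textbf{Step 1 (block size).} We have $W_{P,f_b}(\mathbf{0})=(-1)^{f(b)}\sum_{x\in P}(-1)^{g(x)+b\cdot x}=2^{\frac{m-r}{2}}$, independent of $b$. Theorem \ref{lem3.1}(i) then gives block size $k=2^{m-r-1}-2^{\frac{m-r-2}{2}}$.

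\textbf{Step 2 (replication).} For $p\in P$,
\[
\sum_{b}(-1)^{f_b(p)}=(-1)^{g(p)}\sum_b(-1)^{f(b)+b\cdot p}=(-1)^{g(p)}W_f(p)=2^{\frac{m+r}{2}},
\]
which is a constant. Hence $r_p=2^{m-1}-2^{\frac{m+r-2}{2}}$.

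\textbf{Step 3 (pairs).} For distinct $p,q\in P$, $\sum_b(-1)^{f_b(p)+f_b(q)}=(-1)^{g(p)+g(q)}\sum_b(-1)^{b\cdot(p+q)}=0$ because $p+q\neq\mathbf{0}$. Theorem \ref{lem3.1}(iii) then gives
\[
\lambda=\tfrac14\bigl(2^m-2\cdot 2^{\frac{m+r}{2}}\bigr)=2^{m-2}-2^{\frac{m+r-2}{2}}.
\]
Corollary \ref{cor3.2} now yields the 2-design with the stated parameters.

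\textbf{Step 4 (intersection numbers).} By Theorem \ref{lem3.1}(iv), the two middle sums are each $2^{\frac{m-r}{2}}$ by Step 1. The remaining term is
\[
\sum_{p\in P}(-1)^{f_a(p)+f_b(p)}=(-1)^{f(a)+f(b)}\sum_{p\in P}(-1)^{(a+b)\cdot p}.
\]
Here I would use the correlation formula: since $W_f^2=2^{m+r}1_P$,
\[
C_f(c)=2^{-m}\sum_x W_f(x)^2(-1)^{c\cdot x}=2^{r}\sum_{x\in P}(-1)^{c\cdot x}.
\]
So the remaining term equals $2^{-r}(-1)^{f(a)+f(b)}C_f(a+b)$. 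Dividing by $4$ gives the stated intersection numbers.

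\textbf{Step 5 (non-symmetric).} The number of blocks indexed by $b\in\mathbb{F}_2^m$ is $2^m$, while $v=2^{m-r}<2^m$ when $r\neq 0$, so the design is not symmetric. The step needing most care is Step 4, specifically recognizing that the character sum over $P$ is $2^{-r}C_f$; the rest is direct substitution. I would also track that the design may have repeated blocks, since blocks are indexed by $b$ rather than taken as a set, which is consistent with the statement not claiming simplicity.
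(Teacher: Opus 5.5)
Your proof is correct and follows the paper's own route: Parseval gives $|P|=2^{m-r}$, the inversion formula evaluates the three character sums, and these go into Theorem~\ref{lem3.1}/Corollary~\ref{cor3.2}. You also spell out two things the paper's proof leaves implicit: the intersection-number formula, via the identity $C_f(c)=2^{r}\sum_{x\in P}(-1)^{c\cdot x}$, and the non-symmetry, by counting $2^m$ blocks against $2^{m-r}$ points.
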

\begin{proof}
In view of the identity that $\sum_{x\in\mathbb{F}^m_2}W_f(x)^2=2^{2m}$, we have $|P|=2^{m-r}$. 
By the inversion formula for $f$, we have 
\[
2^m(-1)^{f(b)}=\sum_{p\in \mathbb{F}^m_2}W_f(p)(-1)^{b\cdot p}=2^{\frac{m+r}{2}}\sum_{p\in P}(-1)^{g(p)+b\cdot p}.
\]
It follows that for $b\in\mathbb{F}^m_2$,
\begin{eqnarray*}
&W_{P,f_b}(\mathbf{0})=\sum_{p\in P}(-1)^{b\cdot p+f(b)+g(p)}\\
&=(-1)^{f(b)}2^{\frac{m-r}{2}}(-1)^{f(b)}=2^{\frac{m-r}{2}},
\end{eqnarray*}
for $p\in P$,
\[
\sum_{b\in\mathbb{F}^m_2}(-1)^{f_b(p)}=\sum_{b\in \mathbb{F}^m_2}(-1)^{b\cdot p+f(b)+g(p)}=2^{\frac{m+r}{2}}
\]
and for two distinct points $p,q\in P$,
\[
\sum_{b\in \mathbb{F}^m_2}(-1)^{f_b(p)+f_b(q)}=(-1)^{g(p)+g(q)}\sum_{b\in \mathbb{F}^m_2}(-1)^{(p+q)\cdot b}=0.
\]
The parameters are computed by those three values of summations using  Theorem \ref{lem3.1}.
\end{proof}

We are interested in finding the translation design corresponding to an $r$-plateaued function not being bent. 
\begin{proposition}\label{prop2}{\rm 
    Let $f$ be an $r$-plateaued function in $\mathcal{BF}_m$ which is not bent and let for each $b \in \mathbb{F}_{2}^{m}$, $f_{b}(x)=f(x+b) \text{ for all } x \in \mathbb{F}_{2}^{m}$. Then the third condition in Corollary \ref{cor3.2} holds if and only if either $f$ is an affine function in  $\mathcal{BF}_1$ or $f$ is a constant function. 
    }
   \end{proposition}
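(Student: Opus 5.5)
The plan is to turn the third condition of Corollary \ref{cor3.2} into a statement about the auto-correlation of $f$, and then use the plateaued Walsh spectrum to pin down the possibilities. Throughout, the point set is $P=\mathbb{F}_2^m$ and the blocks are indexed by $B=\mathbb{F}_2^m$.

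\textbf{Step 1 (reformulation).} For distinct $p,q$, substitute $x=p+b$:
\[
\sum_{b\in\mathbb{F}_2^m}(-1)^{f(p+b)+f(q+b)}=\sum_{x}(-1)^{f(x)+f(x+p+q)}=C_f(p+q).
\]
As $p\neq q$ range over $\mathbb{F}_2^m$, the difference $p+q$ runs over all of $\mathbb{F}_2^{m*}$. So condition (iii) of Corollary \ref{cor3.2} is equivalent to the following: there is an integer $c$ with $C_f(a)=c$ for every $a\neq\mathbf{0}$.

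\textbf{Step 2 (Walsh spectrum).} Invert the correlation identity $C_f(a)=2^{-m}\sum_x W_f(x)^2(-1)^{a\cdot x}$, using $C_f(\mathbf{0})=2^m$. This gives
\[
W_f(u)^2=\sum_{a}C_f(a)(-1)^{a\cdot u}=2^m-c+c\,2^m\,\delta_{\mathbf{0},u}.
\]
Since $f$ is $r$-plateaued, each $W_f(u)^2$ lies in $\{0,2^{m+r}\}$. If $c=0$, then $W_f(u)^2=2^m$ for all $u$, so $f$ is bent, which is excluded. Hence $c\neq 0$ and $r\geq 1$.

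\textbf{Step 3 (case analysis).} The common value $2^m-c$ for $u\neq\mathbf{0}$ must be $0$ or $2^{m+r}$.

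\emph{Case $2^m-c=0$.} Then $C_f(a)=2^m$ for all $a$. This forces $f(x+a)=f(x)$ for all $x$ and $a$, so $f$ is constant.

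\emph{Case $2^m-c=2^{m+r}$ (assuming $m\geq 2$, so that some $u\neq\mathbf{0}$ exists).} Then $c=2^m(1-2^r)<0$. The value at $u=\mathbf{0}$ is
\[
W_f(\mathbf{0})^2=2^{m+r}+2^{2m}(1-2^r).
\]
This cannot equal $2^{m+r}$, because $c\neq 0$. Setting it equal to $0$ gives $2^m(2^r-1)=2^r$. Since $2^r-1$ is odd and divides a power of $2$, we get $r=1$, and then $2^m=2$, i.e.\ $m=1$. This contradicts $m\geq 2$. Hence for $m\geq 2$ only constant functions survive.

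\emph{Case $m=1$.} Every Boolean function in $\mathcal{BF}_1$ is affine, which is the other alternative in the statement.

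\textbf{Step 4 (converse).} If $f$ is constant, then $C_f\equiv 2^m$, which is trivially constant on $\mathbb{F}_2^{m*}$. If $m=1$, there is only one pair of distinct points, so condition (iii) holds vacuously.

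The main obstacle is the arithmetic of the second case in Step 3. There the value at $u=\mathbf{0}$ must be handled separately to rule out every $(m,r)$ except $m=r=1$, and one must check that this exceptional case is genuinely the "affine in $\mathcal{BF}_1$" alternative of the statement.
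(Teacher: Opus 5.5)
Your proof is correct, but it takes a different route from the paper's. The paper works with the agreement counts $\ell=|\{x: f(x)=f(x+a)\}|$ for $a\neq\mathbf{0}$ and sums them over all $a$ to get $(2^m-1)\ell+2^m=2^{2m-1}+\frac12 W_f(\mathbf{0})^2$. It then splits on whether $\mathbf{0}\in S_f$ and uses that $\ell$ is an integer: $2^m-1$ must divide either $2^m(2^{m-1}-1)$, which forces $m=r=1$, or $2^m(2^{m-1}+2^{r-1}-1)$, which forces $r=m$.

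That summation is exactly the $u=\mathbf{0}$ instance of your inversion formula $\sum_a C_f(a)(-1)^{a\cdot u}=W_f(u)^2$. So to pin down $(m,r)$ the paper uses only the single spectral value $W_f(\mathbf{0})^2$ together with integrality of $\ell$. You instead invert the autocorrelation completely and read off the whole squared spectrum $W_f(u)^2=2^m-c+c\,2^m\delta_{\mathbf{0},u}$.

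What your route buys is a more structural argument. The plateaued constraint at any $u\neq\mathbf{0}$ immediately restricts $c$ to $0$ (bent), $2^m$ (constant), or $2^m(1-2^r)$. The last option is eliminated by a one-line parity check at $u=\mathbf{0}$, so you need no divisibility-by-$(2^m-1)$ argument. You also verify the converse explicitly, which the paper leaves implicit.

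One minor wording point: nonzero $u$ already exist when $m=1$. What you actually need $m\geq 2$ for in the second case is that the conclusion $m=1$ be a contradiction. Since you treat $m=1$ separately, the logic is unaffected.
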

\begin{proof}
    The third condition in Corollary \ref{cor3.2} is equivalent to that $|\{x \in \mathbb{F}_{2}^{m}: f(x)+f(x+a)=0\}|$ is a constant for any $a \neq \bf{0}$. Assume that $|\{x \in \mathbb{F}_{2}^{m}: f(x)+f(x+a)=0\}|$ is a constant for any $a \neq\bf{0}$ and denoted the constant by $\ell$. Then 
\begin{equation*}
\begin{split}
& (2^m-1)\ell+2^m=\sum_{a \in \mathbb{F}_{2}^{m}}|\{x \in \mathbb{F}_{2}^{m}: f(x)+f(x+a)=0\}|\\
 &=\frac{1}{2}\sum_{a, x \in \mathbb{F}_{2}^{m}}\sum_{z \in \mathbb{F}_{2}}(-1)^{(f(x)+f(x+a))z}\\
 &=\frac{1}{2}\sum_{x \in \mathbb{F}_{2}^{m}}\sum_{z \in \mathbb{F}_{2}}(-1)^{zf(x)}\sum_{a \in \mathbb{F}_{2}^{m}}(-1)^{zf(x+a)}\\
 &=\frac{1}{2}\sum_{z \in \mathbb{F}_{2}}\sum_{x \in \mathbb{F}_{2}^{m}}(-1)^{zf(x)}\sum_{a \in \mathbb{F}_{2}^{m}}(-1)^{zf(a)}\\
 &=2^{2m-1}+\frac{1}{2}W_{f}({\bf{0}})^{2},
\end{split}
\end{equation*}
which implies that $\ell=\frac{2^{2m-1}-2^m+\frac{1}{2}W_{f}({\bf{0}})^{2}}{2^m-1}$. If ${\bf{0}} \notin S_{f}$, then $2^m-1 \mid 2^{m}(2^{m-1}-1)$, which implies that $m=r=1$, and so $f(x)=x+s$ for $x,s \in \mathbb{F}_{2}$. If ${\bf{0}} \in S_{f}$ (that is, $W_{f}({\bf{0}})^{2}=2^{m+r}$), then $2^m-1 \mid 2^m(2^{m-1}+2^{r-1}-1)$, which implies that $r=m$, and so $f$ is a constant function as ${\bf{0}} \in S_{f}$. This completes  the proof.
\end{proof}



Theorem 3.5 in \cite{DN} shows that if $\mathbb{AD}_f$ derived from a $1$-plateaued function is not simple, then it has block multiplicity $2$. We are interested in $r$-plateaued functions $f$ for which $\mathbb{AD}_f$ is simple.  

\begin{lemma}\label{lem4.3}
 {\rm For an integer $r$ $(0\leq r\leq m)$, let $f$ be an $r$-plateaued function in $\mathcal{BF}_m$ and let $S_f$ be represented as $S_f=a+E$ for some subset $E$ of $\mathbb F_2^m$ and $a\in S_f$. Then the following statements are equivalent.
\begin{itemize}
    \item [(i)] The addition design  $\mathbb{AD}_f$ is simple.

     \item [(ii)] $f$ has no nonzero linear structure.

      \item [(iii)] The maximal number of linearly independent vectors in $E$ is equal to $m$. 

      \item [(iv)] A generator matrix of the linear code $\mathcal{C}_{\mathbb{AD}_{f}}$ of the design $\mathbb{AD}_{f}$ is $G_{f}$, where 
      \[
    G_f=\begin{pmatrix}
   \cdots & 1 & \cdots\\
   \cdots & x^T & \cdots \\
   \cdots & f(x) & \cdots
  \end{pmatrix}_{x\in\mathbb{F}^m_2}.\] 
\end{itemize}
}   
\end{lemma}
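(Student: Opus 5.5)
The plan is to prove (i)$\Leftrightarrow$(ii) and (ii)$\Leftrightarrow$(iii) using the Walsh/correlation machinery already set up, and then (iii)$\Leftrightarrow$(iv) by comparing row spaces via Lemma~\ref{lem3.6} and Lemma~\ref{lem2}. Throughout I use the notation of Lemma~\ref{thm 2.23}: $P=S_f$, $W_f(x)=(-1)^{g(x)}2^{\frac{m+r}{2}}1_P(x)$, and $f_b(x)=b\cdot x+f(b)+g(x)$.

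\emph{(i)$\Leftrightarrow$(ii).} Two blocks $B^{f_a},B^{f_b}$ with $a\neq b$ coincide exactly when their intersection number equals the block size $k=2^{m-r-1}-2^{\frac{m-r-2}{2}}$. By the intersection formula in Lemma~\ref{thm 2.23}, this happens if and only if
\[
(-1)^{f(a)+f(b)}C_f(a+b)=2^m .
\]
Since $|C_f(c)|\le 2^m$, with equality exactly when $c$ is a linear structure, a repeated block forces $a+b$ to be a nonzero linear structure. Conversely, suppose $c\neq\mathbf 0$ satisfies $f(x)+f(x+c)=\varepsilon$ for all $x$. Then $C_f(c)=(-1)^\varepsilon 2^m$ and $f(b)+f(b+c)=\varepsilon$, so the blocks indexed by $b$ and $b+c$ coincide.

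\emph{(ii)$\Leftrightarrow$(iii).} I will use the identity $W_{f(\cdot+c)}(u)=(-1)^{u\cdot c}W_f(u)$ together with injectivity of the Walsh transform. These show that $c$ is a linear structure with constant $\varepsilon$ if and only if $u\cdot c=\varepsilon$ for all $u\in S_f$. Writing $S_f=a+E$ with $a\in S_f$, we have $\mathbf 0\in E$, and this condition becomes: $c\cdot e=0$ for all $e\in E$ (with $\varepsilon=a\cdot c$). Hence a nonzero linear structure exists if and only if $\LI E\RI^\perp\neq\{\mathbf 0\}$, that is, if and only if $E$ does not contain $m$ linearly independent vectors.

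\emph{(iii)$\Leftrightarrow$(iv).} The row of the incidence matrix indexed by $p\in P$ is $(b\cdot p+f(b)+g(p))_{b}$, which equals $(g(p),p,1)G_f$ once the rows of $G_f$ are ordered as (constant, $x^T$, $f$). By Lemma~\ref{lem3.6},
\[
\C_{\mathbb{AD}_f}=\{uG_f:u\in \LI\{(g(p),p,1):p\in P\}\RI\}.
\]
I first dispose of the degenerate case $r=m$, where $f$ is affine and $|P|=1$; this must be checked by hand or excluded, as in Lemma~\ref{thm 2.23}. For $r<m$, $f$ is non-affine, so Lemma~\ref{lem2} with $\sigma=\mathrm{id}$ gives $\operatorname{rank}G_f=m+2$. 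Consequently $G_f$ generates $\C_{\mathbb{AD}_f}$ if and only if $V:=\LI\{(g(p),p,1):p\in P\}\RI=\F_2^{m+2}$.

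The two directions then go as follows.
\begin{itemize}
\item If $E$ does not span $\F_2^m$: the differences $(g(p)+g(a),p+a,0)$ project onto $\LI E\RI$ in the middle coordinates. Every element of $V$ is a sum of generators, and the last coordinate of that sum is the parity of the number of generators used. So the middle coordinates of $V$ lie in $\LI E\RI\cup(a+\LI E\RI)$, which spans a space of dimension at most $\dim\LI E\RI+1\le m$. Hence $\dim V\le m+1$ and (iv) fails.
\item If $E$ spans $\F_2^m$: then $\dim V\ge m+1$. Suppose $\dim V=m+1$. Then the first coordinate is an affine function of the rest on the generators, so $g(p)=c\cdot p+d$ on $P$ for some $c,d$. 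The inversion formula from the proof of Lemma~\ref{thm 2.23} gives $\sum_{p\in P}(-1)^{g(p)+c\cdot p}=\pm 2^{\frac{m-r}{2}}$, whereas the left side equals $\pm|P|=\pm2^{m-r}$. This is a contradiction when $r<m$, so $V=\F_2^{m+2}$.
\end{itemize}

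The main obstacle I expect is this last step. Its delicate points are making the span argument rigorous, handling the affine/degenerate case $r=m$ cleanly, and matching the row ordering of $G_f$ with the vector $(g(p),p,1)$.
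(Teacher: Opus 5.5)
Your proposal is correct. For (i)$\Leftrightarrow$(ii) and (iii)$\Leftrightarrow$(iv) it is essentially the paper's argument. In (i)$\Leftrightarrow$(ii) you also make explicit the converse (pairing the blocks indexed by $b$ and $b+c$), which the paper leaves implicit.

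In (iii)$\Leftrightarrow$(iv) you work primally with the span $V$ of the vectors $(g(p),p,1)$, whereas the paper works dually. It shows by a weight computation that no nonzero $(t,u,s)$ annihilates all columns $(1,x,g(x))$, $x\in S_f$. There the case $s=1$ is excluded by the inversion formula, and the case $s=0$ gives exactly $u\in E^{\perp}$ with $t=u\cdot a$. Your two bullets use the same two ingredients.

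The genuine difference is (ii)$\Leftrightarrow$(iii), which the paper only cites from \cite{HP}. Your shift identity $W_{f(\cdot+c)}(u)=(-1)^{u\cdot c}W_f(u)$, together with injectivity of the Walsh transform, gives a self-contained proof. It also shows that this equivalence holds for every Boolean function, plateaued or not.

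There are two places to tighten. First, in your first bullet, knowing only that the middle coordinates of $V$ lie in a space of dimension at most $m$ yields $\dim V\le m+2$, not $m+1$. You must use the coupling with the last coordinate: the projection of $V$ onto the last $m+1$ coordinates lies in $\{(e+sa,s): e\in\langle E\rangle,\ s\in\F_2\}$, which has dimension $\dim\langle E\rangle+1\le m$. Equivalently, for $c\neq\mathbf{0}$ in $\langle E\rangle^{\perp}$, the nonzero functional $(t,u,s)\mapsto u\cdot c+s\,(a\cdot c)$ kills every generator $(g(p),p,1)$, so $V$ lies in a hyperplane. This is exactly the paper's weight-zero word.

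Second, the case $r=m$ needs no exclusion. There $f$ is affine, $S_f=\{a\}$ and $E=\{\mathbf{0}\}$, and every $f_b$ vanishes on $S_f$. So $\C_{\mathbb{AD}_f}=\{\mathbf{0}\}$ while $\operatorname{rank}G_f=m+1$. Hence all four statements fail for $m\ge 1$, consistent with the paper's remark that (iii) and (iv) each force $r<m$.
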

\begin{proof}
    (i) $\Longleftrightarrow$ (ii): {A design is not simple if and only if there is exactly one intersection number which is equal to size of a block $k$. In Lemma \ref{thm 2.23}, the design is not simple if and only if \begin{eqnarray*}&2^{m-r-1}-2^{\frac{m-r-2}{2}}\\
    &=2^{m-r-2}-2^{\frac{m-r-2}{2}}+2^{-2-r}(-1)^{f(a)+f(b)}C_f(a+b),\end{eqnarray*} for some distinct $a,b\in\mathbb{F}_2^m$. That is,
	$2^m=(-1)^{f(a)+f(b)}C_f(a+b)$ and $|C_f(a+b)|=2^m$ for some distinct $a,b\in\mathbb{F}_2^m$ which implies that $f$ has a nonzero linear structure. Therefore, the design in Lemma \ref{thm 2.23} is simple if and only if $f$ has no nonzero linear structure.}

    (ii) $\Longleftrightarrow$ (iii): See  \cite{HP}.

   (iii) $\Longleftrightarrow$ (iv): Note that (iii) and (iv) both imply that $f$ is non-affine, namely $r<m$. Then the rank of $G_{f}$ is $m+2$ whenever (iii) or (iv) holds.
    For each $p\in S_{f}$, let $f_b(p)=b\cdot p+g(p)+f(b)$ for any $b\in\mathbb{F}^m_2$ by our assumption. Notice that
	${(f_b(p))_{b\in \mathbb{F}_{2}^{m}}}=(g(p),p,1)G_f$. By Lemma \ref{lem3.6}, the binary linear code $\mathcal{C}_{\mathbb{AD}_{f}}$ is $\LI\{(g(p), p, 1)G_{f}: p \in S_{f}\}\RI
 =\{c(t, u, s)=(t, u, s)G_{f}: (t, u, s) \in \LI\{(g(p), p, 1): p \in S_{f}\}\RI\}$. Define 
 \[
G_{g}=\begin{pmatrix}
\cdots & 1 & \cdots\\
\cdots & x^T & \cdots \\
\cdots & g(x) & \cdots
\end{pmatrix}_{x\in S_{f}}.
\]
Therefore, (iv) holds if and only if the rank of $G_{g}$ is $m+2$.
The Hamming weight of the vector $d(t, u, s)=(t, u, s)G_{g}$ is given as follows:
\begin{multline*}
 w(d(t,u,s))=2^{m-r}-\sum_{x\in S_{f}}\delta_{0,t+u\cdot x+sg(x)}\\
 =2^{m-r}-\frac{1}{2}\sum_{x\in S_{f}}(1+(-1)^{t+u\cdot x+sg(x)})
  \\=\begin{cases}
  2^{m-r-1}-\frac{1}{2}(-1)^{t}\sum_{x \in S_{f}}(-1)^{g(x)+u \cdot x} & \text{ if } s=1,\\
  2^{m-r-1}-\frac{1}{2}(-1)^t\sum_{x \in S_{f}} (-1)^{u \cdot x} & \text{ if } s=0.
 \end{cases}
\end{multline*}
By the inversion formula for $f$, $\sum_{x \in S_{f}}(-1)^{g(x)+u \cdot x}=2^{\frac{m-r}{2}}(-1)^{f(u)}$, and then $w(d(t, u, 1)) \neq 0$ by $r<m$. By $\textbf{0} \in E$ and $|E|=2^{m-r}$, $w(d(t, u, s))=0$ if and only if $s=0, t=u \cdot a, u \cdot x=0$ for all $x \in E$. Thus the rank of $G_{g}$ is $m+2$ if and only if (iii) holds.
\end{proof}

\begin{remark}{\rm  \label{remarkff}
Let $\mathcal{C}_{f}$ be the linear code generated by $G_{f}$, that is, $\mathcal{C}_{f}=\{c(t, u, s)=(t, u, s)G_{f}: (t, u, s) \in \mathbb{F}_{2} \times \mathbb{F}_{2}^{m} \times \mathbb{F}_{2}\}$. By Lemma \ref{lem4.3} and its proof, if $\mathbb{AD}_{f}$ is not simple, that is, $f$ has nonzero linear structure, then $\mathcal{C}_{\mathbb{AD}_{f}}$ is a subcode of $\mathcal{C}_{f}$.}
\end{remark}


\subsection{Designs with $2^{m-r}$ points }

In analogy with the construction of symmetric $2$-designs on $2^m$ points from bent functions, this subsection is devoted to characterizing simple $2$-designs on $2^{m-r}$ points arising from $r$-plateaued functions (Theorem \ref{thm4.7}). Applying this characterization and Lemma \ref{thm 2.23},  we obtain non-symmetric addition designs derived from $r$-plateaued functions that possess the TSDP (Corollary \ref{cor2}).


   





\begin{theorem}\label{thm4.7}{\rm 

     
Let $\mathbb{D}$ be a simple $2$-design with $2^{m-r}$ points for some integer $0 \leq r < m$. The following statements are equivalent.
\begin{itemize}
    \item[(i)] $\mathbb{D}$ has the TSDP with parameters 
    $2$-$(2^{m-r},2^{m-r-1}-2^{\frac{m-r-2}{2}},2^{m-2}-2^{\frac{{m+r-2}}{2}})$.

    \item[(ii)] $\mathbb{D}$ is isomorphic to some addition design derived from an $r$-plateaued function in $\mathcal{BF}_m$ with no nonzero linear structure.
    
    \item[(iii)] The 2-rank of 
    $\mathbb{D}$ with parameters  $2$-$(2^{m-r},2^{m-r-1}-2^{\frac{m-r-2}{2}},2^{m-2}-2^{\frac{{m+r-2}}{2}})$ is $m+2$, namely, $\operatorname{rank}_2(M(\mathbb{D}))=m+2$, and the first order Reed-Muller code $\mathrm{RM}(1,m)$ is a subcode of $\mathcal{C}_{\mathbb{D}}$.
\end{itemize}
}
\end{theorem}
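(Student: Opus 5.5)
I plan to prove the cycle (ii) $\Rightarrow$ (i), (ii) $\Rightarrow$ (iii), (iii) $\Rightarrow$ (ii) and (i) $\Rightarrow$ (ii). Throughout I write blocks through Boolean functions as in Eq. \eqref{eqn:(1)}.

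\textbf{(ii) $\Rightarrow$ (i) and (iii).} Let $f$ be $r$-plateaued with no nonzero linear structure, and set $P=S_f$ and $f_b(x)=b\cdot x+f(b)+g(x)$. Lemma \ref{thm 2.23} gives the parameters, and Lemma \ref{lem4.3} gives simplicity. For the TSDP I use Lemma \ref{lem.SDP}:
\[
f_{b_1}+f_{b_2}+f_{b_3}=(b_1+b_2+b_3)\cdot x+g(x)+\text{const},
\]
which equals $f_{b_4}+\varepsilon$ on $P$ with $b_4=b_1+b_2+b_3$. For (iii), Lemma \ref{lem4.3}(iv) says $\mathcal C_{\mathbb{AD}_f}$ is generated by $G_f$, so it has dimension $m+2$ and contains the span of the rows $1$ and $x^T$. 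This span is exactly $\mathrm{RM}(1,m)$, viewed on the block index set $\mathbb{F}_2^m$. Both properties are invariant under isomorphism.

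\textbf{(iii) $\Rightarrow$ (ii).} Index the $b$ blocks and use the row space. Since $\operatorname{rank}=m+2$ and $\mathrm{RM}(1,m)\subseteq\mathcal C_{\mathbb D}$ has dimension $m+1$, I first count blocks. Comparing $r_p(k-1)=\lambda(v-1)$ with $bk=vr_p$ forces $b=2^m$, so the blocks can be labeled by $\mathbb{F}_2^m$ and $\mathrm{RM}(1,m)$ lives on this label set. Then $\mathcal C_{\mathbb D}=\mathrm{RM}(1,m)\oplus\langle h\rangle$ for some $h\in\mathcal{BF}_m$, and every point row has the form $x\mapsto t_p+u_p\cdot x+s_p h(x)$.

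Next I show that $s_p=1$ for all $p$. The replication number is $r_p=2^{m-1}-2^{(m+r-2)/2}\neq 2^{m-1}$, while affine rows have weight $0$, $2^m$ or $2^{m-1}$; so $s_p=1$. Because $\mathbb D$ is simple and $\lambda<r_p$, distinct points give distinct rows; otherwise two points would lie in the same blocks. A difference of two rows is affine, and the condition $\lambda=$ constant forces $u_p\ne u_q$, since the row difference must have weight $2(r_p-\lambda)=2^{m-1}$. So $p\mapsto u_p$ is injective; call its image $S$, with $|S|=2^{m-r}$.

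Take $f=h$ up to the affine normalization. The row weight $\tfrac12(2^m-(-1)^{t_p}W_h(u_p))=r_p$ gives $W_h(u_p)=(-1)^{t_p}2^{(m+r)/2}$ for every $p$. Parseval then gives $\sum_{u\in S}W_h(u)^2=2^{2m}$, so $W_h$ vanishes off $S$. Hence $h$ is $r$-plateaued with $S_h=S$ and $g(u_p)=t_p$. The incidence is then $t_p+u_p\cdot b+h(b)=f_b(u_p)$, i.e.\ exactly $\mathbb{AD}_h$. Absence of linear structure follows from simplicity via Lemma \ref{lem4.3}.

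\textbf{(i) $\Rightarrow$ (ii).} This is the main obstacle. My plan is to show that TSDP forces the rank to be $m+2$ and forces $\mathrm{RM}(1,m)$ to lie in $\mathcal C_{\mathbb D}$, and then to apply (iii) $\Rightarrow$ (ii). Fix a block $B_0$ and let $\mathcal B$ be the set of blocks together with their complements. By TSDP, the set $V=\{B\Delta B_0: B\in\mathcal B\}\cup\{\emptyset, P\}$ is closed under symmetric difference: indeed $(B\Delta B_0)\Delta(B'\Delta B_0)=B\Delta B'\Delta B_0\Delta B_0$ is again of the form $B''\Delta B_0$. So $V$ is a subspace of column-space vectors, of size at most $2b+2$. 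Counting gives $|V|=2^{m+1}$, with $b=2^m$ forced by the parameters as above. The column space is $V+\langle B_0\rangle$, so its dimension is $m+2$.

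The row code is the dual picture. $V$ corresponds to a code of $2^{m+1}$ words of length $2^{m-r}$ whose nonconstant words have weight $2^{m-r-1}$; this follows from the 2-design intersection computation, since $|B\Delta B_0|=2(k-|B\cap B_0|)$ together with the quasi-symmetric intersection values. Transposing, the row code restricted to combinations not involving $h$ is a first order Reed--Muller code. Here the blocks get labeled by $\mathbb{F}_2^m\cong V/\langle P\rangle$ via $B\mapsto B\Delta B_0$, and coordinates of rows become affine functions of the label. I expect verifying this labeling carefully, including the handling of complements and the well-definedness of the label, to be the delicate step.
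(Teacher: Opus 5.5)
Your arguments for (ii)$\Rightarrow$(i), (ii)$\Rightarrow$(iii) and (iii)$\Rightarrow$(ii) are sound and essentially coincide with the paper's. The gap is in (i)$\Rightarrow$(ii). Your route through (iii) rests on claims that are either false or unproved, and the step you postpone as ``delicate'' is in fact the whole content of this direction.

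\textbf{The false claim.} For $r\ge 1$ the nonconstant words of $V$ do \emph{not} all have weight $2^{m-r-1}$, because addition designs of $r$-plateaued functions are not quasi-symmetric in general. By Lemma~\ref{thm 2.23},
$|B^{f_a}\Delta B^{f_b}|=2^{m-r-1}-2^{-1-r}(-1)^{f(a)+f(b)}C_f(a+b)$.
In Example~\ref{Example 4} ($m=5$, $r=1$) the paper itself exhibits a symmetric difference of size $4\neq 8$; the intersection numbers there are $0,2,4$.

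\textbf{The unproved steps.} The equality ``the column space is $V+\langle B_0\rangle$'' assumes that $P$ lies in the $\mathbb{F}_2$-span of the blocks. TSDP gives this only if some triple produces the complement of a block, so you have shown merely that the column space sits inside an $(m+2)$-dimensional space. Moreover, no ``transposition'' turns the column code $V$ into the row code. Consequently neither $\operatorname{rank}_2(M(\mathbb{D}))=m+2$ nor $\mathrm{RM}(1,m)\subseteq\mathcal{C}_{\mathbb{D}}$ has actually been derived.

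\textbf{How to repair it.} No rank or weight information is needed. Your closure argument, together with the trivial degenerate cases, simplicity, and $k\neq v-k$, does show two things: $V$ is an $(m+1)$-dimensional subspace of $\mathbb{F}_2^{P}$ containing $P$, and the $2^m$ blocks correspond bijectively to $V/\langle P\rangle\cong\mathbb{F}_2^m$. Write $B_x$ for the block labelled $x$ and fix a point $p_0$.
\begin{itemize}
\item For each point $p$, the functional $v\mapsto v(p)+v(p_0)$ on $V$ kills $P$. It therefore descends to a linear functional $x\mapsto u_p\cdot x$ on $\mathbb{F}_2^m$.
\item Evaluate it on $B_x\Delta B_0$ and put $f(x)=[p_0\in B_x]$. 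This gives $[p\in B_x]=c_p+u_p\cdot x+f(x)$ with $c_p=[p\in B_0]+[p_0\in B_0]$.
\end{itemize}
Now your own (iii)$\Rightarrow$(ii) computation applies verbatim. The replication number gives $|W_f(u_p)|=2^{\frac{m+r}{2}}$. Constancy of $\lambda$ gives injectivity of $p\mapsto u_p$. Parseval then makes $f$ $r$-plateaued with $S_f=\{u_p\}$ and $g(u_p)=c_p$, so $\mathbb{D}\cong\mathbb{AD}_f$. Lemma~\ref{lem4.3} then excludes nonzero linear structures and yields rank $m+2$ a posteriori.

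This is exactly the paper's proof. The paper imports Bending's labelling $B_0\triangle B_x\triangle B_y\triangleq B_{x+y}$, labels points by $z_i=[p\in B_0\triangle B_{e_i}]\oplus[0\in B_0\triangle B_{e_i}]$, sets $f(x)=[0\in B_x]$, and concludes with the replication number and Parseval as above.
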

\begin{proof}
	(i) $\Longrightarrow$ (ii): 
Let $\mathbb{D}=({P},\mathcal{B})$ be a 2-design such that ${P}=\{0,1,\ldots,2^{m-r}-1\}$ ($0\leq r<m$) is a point set of $\mathbb{D}$ and $\mathcal{B}$ is a collection of block sets satisfying the TSDP with $|\mathcal{B}| = 2^m$. To avoid confusion in this proof, we let $p$ denote a point in ${P}$. Since $\mathbb{D}$ satisfies the TSDP, by the proof of \cite[Theorem 9.11]{B}, for a fixed block $B_{0} \in \mathcal{B}$, we can label all blocks by vectors in $\mathbb{F}_2^m$ such that for all $x, y \in \mathbb{F}_2^m$, {up to complementation,}
\begin{equation}\label{labelB}
B_{0} \triangle B_x \triangle B_y \triangleq B_{x+y},
\end{equation}
and then we label each point {$p\in {P}$ 
with a vector $z = \sum_{i=1}^{{m}} z_i e_i \in \mathbb{F}_2^m$, where $\{e_{1}, \dots, e_{m}\}$ is a standard basis of $\mathbb{F}_{2}^{m}$ (that is, $e_{1}=(1, 0, \dots, 0, 0), \dots, e_{m}=(0, 0, \dots, 0, 1)$) and 
$$z_i = [{p} \in B_{0} \triangle B_{e_i}] \oplus [0 \in B_{0} \triangle B_{e_i}],$$
where $[\cdot]$ is $1$ if the condition holds and $0$ otherwise, and $\oplus$ is addition mod $2$. Thus there is an injection $\varphi: P \to \mathbb{F}_2^m$ and for each point $p\in P$, 
\begin{equation}\label{labelP}
\varphi(p)= \sum_{i=1}^{m} z_i e_i \in \mathbb{F}_2^m.
\end{equation}

For the point $0 \in {P}$,  we define a Boolean function $f: \mathbb{F}_2^m \to \mathbb{F}_2$ by $f(x)=[0 \in B_{x}]$.
From the labeling procedure, we observe that 
for each nonzero vector $x \in \mathbb{F}_2^m$, the value $\varphi(p) \cdot x$ is a constant 
for all $p \in B_{0} \triangle B_x$. Indeed, {by \eqref{labelP}, for any nonzero $x \in \mathbb{F}_{2}^{m}$ (denoting $x=e_{i_{1}}+\dots+e_{i_{t}}$) and $p \in {P}$, we have $\varphi(p) \cdot x=\sum_{j=1}^{t}([p \in B_{0} \triangle B_{e_{i_{j}}}]\oplus [0 \in B_{0} \triangle B_{e_{i_{j}}}])$. From \eqref{labelB}, the value of $\varphi(p) \cdot x$ are determined by $[p \in B_{0} \triangle B_{x}]$ and $[0 \in B_{0} \triangle B_{x}]$.}
{ Using this observation and defining $f_{b}(\varphi(p))=[p \in B_{b}],$ we have that $f_{b}(\varphi(p))+\varphi(p) \cdot b+f(b)$ depends only on $p\in {P}$. Denote $g(\varphi(p))=f_{b}(\varphi(p))+\varphi(p) \cdot b+f(b)$.}
{ To improve readability in the proof, let $\varphi(P)=\{\overline p\in\mathbb{F}^m_2:p\in P \}$.} Then the incidence matrix of $\mathbb{D}$ is represented as $(f_b(\overline p))_{\overline p \in \varphi(P), b \in \mathbb{F}_2^m}$, 
where
\[
f_b(\overline p) = b \cdot {\overline p} + f(b) + g(\overline p)
\]
for some Boolean function $g: \varphi({P})\to \mathbb{F}_2$. The parameters of the design implies that the replication number is $2^{\frac{m+r}{2}}$ and $W_f(x)=(-1)^{g(x)}2^{\frac{m+r}{2}}$ for all $x\in {\varphi({P})}$. Note that $|{P}|=2^{m-r}$ and the Parseval's identity guarantees that $W_f(x)=0$  for $x\in\mathbb{F}_2^m\backslash {\varphi({P})}$. Thus $f$ is an $r$-plateaued function with $S_{f}=\varphi({P})$. Thus $\mathbb{D}$ is isomorphic to a design derived from $f$ by Lemma \ref{thm 2.23}. Since $\mathbb{D}$ is simple, $f$ has no nonzero linear structure by Lemma \ref{lem4.3}.


	(ii) $\Longrightarrow$ (i):   By assumption, we have $\mathbb{D}=\mathbb{AD}_f$ for some plateaued function $f$ with no nonzero linear structure. 
    By Lemma \ref{thm 2.23}, we also have $f_b(p)=b\cdot p+f(b)+g(p)$. Let us denote by $D_f$ the $f^{-1}(1)$ for $f\in \mathcal{BF}_m$. For $a\in\mathbb{F}^m_2$, we define $\ell_a(x)=a\cdot x$ for all $x\in\mathbb{F}^m_2$. Then for $b\in\mathbb{F}^m_2$, $B^{f_b}$ equals $D_{g+\ell_b}$ if $f(b)=0$ and $D_{g+\ell_b+1}$ otherwise. Set $h_{b,\varepsilon}=g+\ell_b+\varepsilon$ for some constant $\varepsilon\in\mathbb{F}^m_2$. Then for each $b, b', b''\in\mathbb{F}_2^m$ and $\varepsilon=\varepsilon_1+\varepsilon_2+\varepsilon_3$, we have
         \begin{eqnarray*}
    &&D_{h_{b,\varepsilon_1}}\Delta D_{h_{b',\varepsilon_2}}\Delta D_{h_{b'',\varepsilon_3}}\\
       & =&\{p\in\mathbb{F}^m_2:(h_{b,\varepsilon_1}(p),h_{b',\varepsilon_2}(p),h_{b'',\varepsilon_3}(p))\in\{(1,0,0),(0,1,0),(0,0,1),(1,1,1)\}\}\\
       & =&\{p\in\mathbb{F}^m_2:h_{b,\varepsilon_1}(p)+h_{b',\varepsilon_2}(p)+h_{b'',\varepsilon_3}(p)=1\}\\&=&\{p\in\mathbb{F}^m_2:h_{b+b'+b'',\varepsilon}(p)=1\}=D_{b+b'+b''+\varepsilon},   
        \end{eqnarray*}
      which is a block labeled by $b+b'+b''$ or its complement {by assuming that the three blocks $D_{h_{b,\varepsilon_i}}$ for $b\in\mathbb{F}^m_2$ and $i=1,2,3$ are  distinct.} This prove that  $\mathbb{D}=\mathbb{D}_f$ has the TSDP. }

(ii) $\Longrightarrow$ (iii): It follows from Lemma \ref{lem4.3}.
 (iii) $\Longrightarrow$ (ii): Since $\mathbb{D}$ is a $2$-design with parameters $(2^{m-r},2^{m-r-1}-2^{\frac{m-r-2}{2}},2^{m-2}-2^{\frac{{m+r-2}}{2}})$, then the number of blocks is \begin{eqnarray*}&&\lambda\frac{v(v-1)}{k(k-1)}\\
 &&=(2^{m-2}-2^{\frac{{m+r-2}}{2}})\frac{2^{m-r}(2^{m-r}-1)}{(2^{m-r-1}-2^{\frac{m-r-2}{2}})(2^{m-r-1}-2^{\frac{m-r-2}{2}}-1)}\\
 &&=2^m.\end{eqnarray*} Denote $M(\mathbb{D})=(m_{i, j})_{1 \leq i \leq 2^{m-r}, 1 \leq j \leq 2^m}$. Let $P$ be a subset of $\mathbb{F}_{2}^{m}$ with size $2^{m-r}$ and $P=\{p_{1}, \dots, p_{2^{m-r}}\}$, and let $\mathbb{F}_{2}^{m}=\{b_{1}, \dots, b_{2^m}\}$. Define $f_{b_{j}}(p_{i})=m_{i, j}, 1 \leq i \leq 2^{m-r}, 1 \leq j \leq 2^m$. By Theorem \ref{lem3.1}, 
    \begin{align}
        & \sum_{p \in P}(-1)^{f_{b}(p)}=2^{\frac{m-r}{2}} \text{ for any } b \in \mathbb{F}_{2}^{m},  \label{2}\\ 
        & \sum_{b \in \mathbb{F}_{2}^{m}}(-1)^{f_{b}(p)}=2^{\frac{m+r}{2}} \text{ for any } p \in P, \label{3}\\
        & \sum_{b \in \mathbb{F}_{2}^{m}}(-1)^{f_{b}(p)+f_{b}(q)}=0 \text{ for any distinct } p, q \in P \label{4}.
    \end{align}
Note that a generator matrix of  $\mathrm{RM}(1, m)$ is
\begin{equation*}
    \begin{pmatrix}
		\cdots & 1 & \cdots\\
		\cdots & b^T & \cdots \\
	\end{pmatrix}_{b\in \mathbb{F}_{2}^{m}}
\end{equation*}
As the dimension of $\mathcal{C}_{\mathbb{D}}$ is $m+2$, we have a codeword $\alpha=(\alpha_{1}, \dots, \alpha_{2^m}) \in \mathcal{C}_{\mathbb{D}} \backslash \mathrm{RM}(1, m)$. Define $f(b_{j})=\alpha_{j}, 1 \leq j \leq 2^{m}$. In the following, we prove that $f$ is $r$-plateaued. By the above arguments, a generator matrix of $\mathcal{C}_{\mathbb{D}}$ is
\begin{equation*}
    \begin{pmatrix}
		\cdots & 1 & \cdots\\
		\cdots & b^T & \cdots \\
		\cdots & f(b) & \cdots
	\end{pmatrix}_{b\in \mathbb{F}_{2}^{m}}.
\end{equation*}
Note that $(f_{b}(p))_{b \in \mathbb{F}_{2}^{m}}$ is a codeword. Then there is unique $(t(p), u(p), s(p)) \in \mathbb{F}_{2} \times \mathbb{F}_{2}^{m} \times \mathbb{F}_{2}$ such that $f_{b}(p)=t(p)+u(p)\cdot b+s(p)f(b)$. If $s(p)=0$, then $\sum_{b \in \mathbb{F}_{2}^{m}}(-1)^{f_{b}(p)}=\sum_{b \in \mathbb{F}_{2}^{m}}(-1)^{t(p)+u(p)\cdot b}=(-1)^{t(p)}\delta_{\textbf{0}}(u(p))2^m$, which contradicts Eq. \eqref{3}. Therefore, for any $p \in P, b \in \mathbb{F}_{2}^{m}$, 
\begin{equation} \label{5}
    f_{b}(p)=t(p)+u(p) \cdot b+f(b).
\end{equation}
By Eqs. \eqref{4} and \eqref{5}, for any distinct $p, q \in P$, we have
\begin{eqnarray*}&0=\sum_{b \in \mathbb{F}_{2}^{m}}(-1)^{f_{b}(p)+f_{b}(q)}\\
&=\sum_{b \in \mathbb{F}_{2}^{m}}(-1)^{t(p)+t(q)+(u(p)+u(q)) \cdot b}\\
&=(-1)^{t(p)+t(q)}\delta_{\textbf{0}}(u(p)+u(q))2^m,\end{eqnarray*} which implies that $u(p) \neq u(q)$ for any distinct $p, q \in P$. Then the size of $S\triangleq\{u(p): p \in P\}$ is $2^{m-r}$. By Eqs. \eqref{3} and \eqref{5}, we have \begin{eqnarray*}&2^{\frac{m+r}{2}}=\sum_{b \in \mathbb{F}_{2}^{m}}(-1)^{f_{b}(p)}
=\sum_{b \in \mathbb{F}_{2}^{m}}(-1)^{t(p)+u(p)\cdot b+f(b)}\\
&=(-1)^{t(p)}\sum_{b \in \mathbb{F}_{2}^{m}}(-1)^{f(b)+u(p) \cdot b},\end{eqnarray*} which implies that $|W_{f}(u(p))|=2^{\frac{m+r}{2}}$ for any $u(p) \in S$. The Parseval's identity guarantees that $W_{f}(a)=0$ for any $a \in \mathbb{F}_{2}^{m} \backslash S$. Thus $f$ is $r$-plateaued with $S_{f}=S$ and $t(p)=g(u(p))$, where $W_{f}(x)=2^{\frac{m+r}{2}}(-1)^{g(x)}1_{S_{f}}(x)$. Then by Eq. \eqref{5},
$f_{b}(p)=f(b)+g(u(p))+u(p) \cdot b.$
Hence $\mathbb{D}$ is isomorphic to the addition design $\mathbb{AD}_{f}$ derived from $f$. 
 By Lemma \ref{lem4.3}, $f$ has no nonzero linear structure.

 This completes the proof.
\end{proof}

In the following, we construct the non-symmetric 2-designs satisfying the TSDP, which follow from Lemma \ref{thm 2.23} and Theorem \ref{thm4.7}.

\begin{corollary}\label{cor2} {\rm 
 For an integer $r$ $(0\leq r < m)$, let $f$ be an $r$-plateaued function in $\mathcal{BF}_m$ with no nonzero linear structure. Then the following statements are true.
    \begin{itemize}
    \item[(i)] $\mathbb{AD}_f$ has the TSDP with parameters 
    $2$-$(2^{m-r},2^{m-r-1}-2^{\frac{m-r-2}{2}},2^{m-2}-2^{\frac{{m+r-2}}{2}})$.

    
    \item[(ii)] The 2-rank of 
    $\mathbb{AD}_f$ with parameters  $2$-$(2^{m-r},2^{m-r-1}-2^{\frac{m-r-2}{2}},2^{m-2}-2^{\frac{{m+r-2}}{2}})$ is $m+2$, and the first order Reed-Muller code $\mathrm{RM}(1,m)$ is a subcode of $\mathcal{C}_{\mathbb{AD}_f}$. 
\end{itemize}
}
\end{corollary}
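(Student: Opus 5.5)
The plan is to read Corollary~\ref{cor2} off Theorem~\ref{thm4.7}, using Lemma~\ref{thm 2.23} and Lemma~\ref{lem4.3} to check its hypotheses. Let $f$ be $r$-plateaued with $0\le r<m$ and with no nonzero linear structure.

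First, Lemma~\ref{thm 2.23} shows that $\mathbb{AD}_f=(S_f,\{B^{f_b}:b\in\mathbb{F}^m_2\})$ is a $2$-design. It has $2^{m-r}$ points and the parameters $2$-$(2^{m-r},2^{m-r-1}-2^{\frac{m-r-2}{2}},2^{m-2}-2^{\frac{m+r-2}{2}})$. Since $f$ has no nonzero linear structure, the equivalence (i)$\Leftrightarrow$(ii) of Lemma~\ref{lem4.3} shows that $\mathbb{AD}_f$ is simple.

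Second, take $\mathbb{D}=\mathbb{AD}_f$. Then $\mathbb{D}$ is a simple $2$-design on $2^{m-r}$ points, and it satisfies condition (ii) of Theorem~\ref{thm4.7} trivially, with the identity isomorphism. The implication (ii)$\Rightarrow$(i) of Theorem~\ref{thm4.7} then gives the TSDP, which is part (i) of the corollary. The implication (ii)$\Rightarrow$(iii) gives part (ii): $\operatorname{rank}_2 M(\mathbb{AD}_f)=m+2$ and $\mathrm{RM}(1,m)\subseteq\mathcal{C}_{\mathbb{AD}_f}$.

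For part (ii) I would also argue directly rather than lean only on the brief step "(ii)$\Rightarrow$(iii) follows from Lemma~\ref{lem4.3}". By Lemma~\ref{lem4.3}(iv), $\mathcal{C}_{\mathbb{AD}_f}$ is the row space of $G_f$. Lemma~\ref{lem2}, applied with $\sigma$ the identity, shows that $G_f$ has rank $m+2$ because $f$ is non-affine; indeed $r<m$ rules out $f$ being affine. The first $m+1$ rows of $G_f$, namely the all-ones row and the rows $x^T$, generate exactly $\mathrm{RM}(1,m)$. This gives the subcode statement.

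I expect no real obstacle. The only point needing care is in the TSDP argument of Theorem~\ref{thm4.7}: it needs the three blocks indexed by $b,b',b''$ to be pairwise distinct as sets, and the resulting $b+b'+b''$ to give a genuine block up to complement. Simplicity, which Lemma~\ref{lem4.3} guarantees, makes distinct labels give distinct blocks, so this is covered.
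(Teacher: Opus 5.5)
Your proposal is correct and takes the paper's route: the paper gives no separate proof and reads the corollary off Lemma~\ref{thm 2.23} and Theorem~\ref{thm4.7}, with Lemma~\ref{lem4.3} supplying the simplicity hypothesis, and your direct rank argument via $G_f$ and Lemma~\ref{lem2} simply unpacks the paper's one-line step (ii)$\Rightarrow$(iii). One small correction to your closing remark: for arbitrary labels, $B^{f_b}\Delta B^{f_{b'}}\Delta B^{f_{b''}}$ equals $B^{f_{b+b'+b''}}$ or its complement, so simplicity is needed only to meet the hypothesis of Theorem~\ref{thm4.7}, not for the TSDP computation itself.
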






In \cite{CD}, a 2-design is said to have the double  symmetric difference property or be a DSDP design if the symmetric difference $B_1 \Delta B_2$ is either a block of the design or the complement of a block for any pair of distinct blocks $B_1$ and $B_2$ of the design.
We provide an example of a non-symmetric TSDP design constructed from Corollary \ref{cor2} that does not satisfy the DSDP defined in \cite{CD}.

\begin{example}\label{Example 4}
{\rm
Let us consider $f(x_{1}, \dots, x_{5})=x_{1}x_{3}+x_{2}x_{4}+x_{1}x_{2}x_{5}$ which is given in \cite[Example 4.1]{HP}. Then $f$ is a $1$-plateaued function with no nonzero linear structure, $S_{f}=\{(x_{1}, x_{2}, x_{3}, x_{4}, x_{3}x_{4}): x_{1}, x_{2}, x_{3}, x_{4} \in \mathbb{F}_{2}\}$ and $g(x_{1}, \dots, x_{5})=x_{1}x_{3}+x_{2}x_{4}$ for $(x_{1}, \dots, x_{5}) \in S_{f}$.  For $b=(b_{1}, \dots, b_{5}) \in \mathbb{F}_{2}^{5}$ and $p=(p_{1}, \dots, p_{5}) \in P$, we have $P=S_{f}$ and $f_{b}(p)=f(b)+g(p)+b \cdot p
=b_{1}b_{3}+b_{2}b_{4}+b_{1}b_{2}b_{5}+p_{1}p_{3}+p_{2}p_{4}+b_{1}p_{1}+b_{2}p_{2}+b_{3}p_{3}+b_{4}p_{4}+b_{5}p_{3}p_{4}$. One can verify that $\mathbb{AD}_f$ is a $2$-$(16,6,4)$ design satisfying the TSDP. Further, we have \begin{eqnarray*}&B^{f_{(0,0,0,0,0)}}\Delta B^{f_{(0,0,0,0,1)}}\\
&=\{(0,0,1,1,1),(0,1,1,1,1),(1,0,1,1,1),(1,1,1,1,1)\},\end{eqnarray*} which is not a block or the complement of a block. Thus $\mathbb{AD}_f$ does not satisfy the DSDP.
}
\end{example}

For the case $r=0$,  the condition (iii) in Theorem \ref{thm4.7} can be simplified to ``the 2-rank of $\mathbb{D}$ with parameters  $2$-$(2^{m-r},2^{m-r-1}-2^{\frac{m-r-2}{2}},2^{m-2}-2^{\frac{{m+r-2}}{2}})$ is $m+2$" by Lemma \ref{thm3.7}.  We left it as an open problem for the case $r \geq 1$.

\begin{problem}{\rm   \label{openpro3} 
When $r \geq 1$, simplify the condition (iii) in Theorem \ref{thm4.7} as in Lemma \ref{thm3.7}.
}
    
\end{problem}

\section{Designs, linear codes, plateaued functions and their links}\label{s5}

Equivalent relationships between addition designs, their linear codes, the designs held in those linear codes, and bent functions have been established in \cite{B, DS, EP, DMT, DN} (Corollary \ref{cor4}). In the first subsection, we partially generalize these results (Theorem \ref{thm4.8}, Proposition \ref{propro1}, and Remark \ref{rmk5}) to any $r$-plateaued function with no nonzero linear structure. 
In the second subsection, we first generalize Open Problem 1 concerning affine equivalence to $r$-plateaued functions, and then by constructing new quasi-symmetric designs (Theorem \ref{theorembent}), we also obtain a unified result (Theorem \ref{cor5}) that is analogous to Corollary \ref{cor4}. As a byproduct, we settle two Open Problems 1 and 2.


\subsection{On the addition design $\mathbb{AD}_f$ derived from an $r$-plateaued function}

In this subsection, we consider the  equivalent relationships between addition designs of $r$-plateaued functions, linear codes of the addition designs, and $r$-plateaued functions.

\begin{theorem}\label{thm4.8}

	{\rm Let $r$ be an integer with $0\leq r<m$. Let $f$ and $g$ be $r$-plateaued functions in $\mathcal{BF}_m$ with no nonzero  linear structures. Then the following  statements are equivalent.
 \begin{itemize}
     \item[(i)] $f$ and $g$ are equivalent.
     
     \item[(ii)] $\mathbb{AD}_f$ and $\mathbb{AD}_g$ defined in  Lemma \ref{thm 2.23} are isomorphic.
     
     \item[(iii)] $\mathcal{C}_{\mathbb{AD}_f}$ and $\mathcal{C}_{\mathbb{AD}_g}$ are equivalent.

     

 \end{itemize}
 }
\end{theorem}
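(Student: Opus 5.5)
I will prove the cycle (i) $\Rightarrow$ (ii) $\Rightarrow$ (iii) $\Rightarrow$ (i). Throughout, Lemma \ref{lem4.3} lets me identify $\mathcal{C}_{\mathbb{AD}_f}$ with the code $\mathcal{C}_f$ generated by $G_f$. Its coordinates are indexed by the blocks $b\in\mathbb{F}_2^m$, and it has dimension $m+2$. By Lemma \ref{lem2}, $\mathcal{C}_f$ is spanned by $\mathrm{RM}(1,m)$ together with the codeword $(f(b))_b$.

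For (i) $\Rightarrow$ (ii), suppose $g(x)=f(xA+p)+x\cdot c+\varepsilon$ with $A$ invertible. I will compute the Walsh transform of $g$. The substitution $y=xA+p$ gives $W_g(u)=\pm W_f((u+c)(A^{-1})^T)$, with sign $(-1)^{\varepsilon+(u+c)\cdot (pA^{-1})^T}$ or similar. Hence $S_g$ is an affine image of $S_f$, and the sign function of $g$ equals the sign function of $f$ composed with that affine map, plus an affine function. I then map the points $u\in S_g$ to $(u+c)(A^{-1})^T\in S_f$ and the blocks $b\mapsto bA+p$. I will check that $g_b(u)=f_{bA+p}(\text{image of }u)+\ell(b)$ for some affine term. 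If the constant term $\ell(b)$ is nonzero, a block becomes its complement, and this has to be repaired. Since both designs have the same block size $2^{m-r-1}-2^{(m-r-2)/2}$, which is less than half of the point set, complementation cannot occur. Therefore $\ell$ must vanish identically once the correct correspondence is chosen, and I will verify this directly from the sign bookkeeping. The resulting pair of bijections is the isomorphism.

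For (ii) $\Rightarrow$ (iii): isomorphic designs have incidence matrices related by $M=P_1NP_2$. Their row spaces are therefore equivalent under the column permutation $P_2$, which gives (iii).

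The main step is (iii) $\Rightarrow$ (i). Let $\pi$ be a coordinate permutation of $\mathbb{F}_2^m$ carrying $\mathcal{C}_g$ onto $\mathcal{C}_f$. I will first show that $\pi$ maps $\mathrm{RM}(1,m)$ into $\mathcal{C}_f$ onto $\mathrm{RM}(1,m)$. To do this I use weights. Codewords of $\mathrm{RM}(1,m)$ have weights in $\{0,2^{m-1},2^m\}$. By the computation in Lemma \ref{lem2}, the codewords outside $\mathrm{RM}(1,m)$ have weights $2^{m-1}\pm 2^{(m+r)/2-1}$ or $2^{m-1}$. The weight-$2^{m-1}$ words outside $\mathrm{RM}(1,m)$ cause trouble, so I will instead characterize $\mathrm{RM}(1,m)$ intrinsically. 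The plan is this: the words of weight $2^{m-1}\pm2^{(m+r)/2-1}$ all lie in the coset $\mathcal{C}_f\setminus\mathrm{RM}(1,m)$, and for $r<m$ they exist. Sums of two such words lie in $\mathrm{RM}(1,m)$, and these sums span it, because $S_f$ has $m$ independent differences $E$ by Lemma \ref{lem4.3}(iii) (no linear structure). So $\mathrm{RM}(1,m)$ is the span of sums of pairs of codewords with weight in $\{2^{m-1}\pm 2^{(m+r)/2-1}\}$, and this description is preserved by $\pi$. Next, a permutation preserving $\mathrm{RM}(1,m)$ lies in $\mathrm{Aut}(\mathrm{RM}(1,m))=\mathrm{AGL}(m,2)$ (the standard result for $m\ge 3$; I will treat small $m$ by hand), so $\pi(b)=bA+p$. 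Finally, $\pi$ sends $(g(b))_b$ to a word of $\mathcal{C}_f$ outside $\mathrm{RM}(1,m)$, that is, to $f+$ an affine function. Hence $g(b)=f(bA+p)+b\cdot c+\varepsilon$, which is (i). The obstacle is the intrinsic characterization of $\mathrm{RM}(1,m)$ inside $\mathcal{C}_f$; if the span argument fails, I will fall back on the fact that $\pi$ preserves $\mathcal{C}_f$ and then use Lemma \ref{lem2} to rule out non-affine images.
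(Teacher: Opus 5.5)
Your proof is correct. For (i)$\Rightarrow$(ii) and (ii)$\Rightarrow$(iii) it is essentially the paper's argument. In (i)$\Rightarrow$(ii) the term $\ell(b)$ actually vanishes identically, because $bA\cdot(u+c)(A^{-1})^T=b\cdot(u+c)$; so your block-size argument against complementation is valid but unnecessary.

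For (iii)$\Rightarrow$(i) you take a genuinely different route. The paper notes that $G_f$ generates $\mathcal{C}_{\mathbb{AD}_f}$ (Lemma \ref{lem4.3}) and that this code is the dual of the Edel--Pott code $\mathcal{C}_1(f)$. It then quotes \cite[Theorem 9]{EP}. That result rests on the fact that a permutation equivalence between codes with full-rank generator matrices $G_g$, $G_f$ induces an affine map carrying the graph of $g$ onto that of $f$. This is CCZ-equivalence, which for Boolean functions coincides with EA-equivalence.

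You instead characterize $\mathrm{RM}(1,m)$ inside $\mathcal{C}_f$ intrinsically. The words of weight $2^{m-1}\pm2^{(m+r)/2-1}$ are exactly $f+u\cdot x+t$ with $u\in S_f$. Their pairwise sums are the affine functions with linear part in $S_f+S_f\supseteq E$. Lemma \ref{lem4.3}(iii) gives $\langle E\rangle=\mathbb{F}_2^m$, so the span step does work and your fallback is not needed. You then use $\mathrm{Aut}(\mathrm{RM}(1,m))=\mathrm{AGL}(m,2)$ for $m\ge 3$; for $m=2$ every permutation of $\mathbb{F}_2^2$ is affine anyway.

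Your version stays within coding theory. It shows directly that the coordinate permutation itself is affine and yields the EA-relation explicitly, without passing through graphs or CCZ-equivalence. The price is that it uses the plateaued weight spectrum and the no-linear-structure hypothesis a second time, and it relies on the classical automorphism group of $\mathrm{RM}(1,m)$. The Edel--Pott argument, by contrast, applies to any non-affine $f$ once $G_f$ is known to generate the code.
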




\begin{proof}


    (i) $\Longrightarrow$ (ii): We can write $W_f(x)=(-1)^{\tilde{f}(x)}2^{\frac{m+r}{2}}$ for $x\in S_f$ and $W_g(x)=(-1)^{\tilde{g}(x)}2^{\frac{m+r}{2}}$ for $x\in S_g$.
    Since $f$ and $g$ are equivalent, there is an affine permutation $\sigma$ of $\mathbb{F}_{2}^{m}$ and $y \in \mathbb{F}_{2}^{m}, s \in \mathbb{F}_{2}$ such that
    $
        f(x)=g(\sigma x)+y \cdot x+s \text{ for all }x \in \mathbb{F}_{2}^{m}.
   $
    For $p \in S_{f}$, we have
    \begin{eqnarray*}&W_{f}(p)=\sum_{x \in \mathbb{F}_{2}^{m}}(-1)^{f(x)+p \cdot x}\\
    &
            =(-1)^{s}\sum_{x \in \mathbb{F}_{2}^{m}}(-1)^{g(\sigma x)+(y+p) \cdot x}\\
            &
            =(-1)^{s}\sum_{x \in \mathbb{F}_{2}^{m}}(-1)^{g(x)+(y+p) \cdot \sigma^{-1} x}.
           \end{eqnarray*}
 Here $\sigma^{-1}$ is  an affine permutation of $\mathbb{F}_{2}^{m}$. Denote $\sigma^{-1} x=\tau x+\gamma$, where $\tau$ is a linear permutation of $\mathbb{F}_{2}^{m}$ and $\gamma \in \mathbb{F}_{2}^{m}$. 
Let $\tau^{*}$ \footnote{Since $\tau$ is a linear function of $\mathbb{F}_{2}^{m}$, then  $y \cdot \tau(x)$ is a linear Boolean function for any $y \in \mathbb{F}_{2}^{m}$. Hence, there is $\tau^{*}(y) \in \mathbb{F}_{2}^{m}$ such that $y \cdot \tau(x)=\tau^{*}(y) \cdot x$. If $\tau^{*}(y)=\tau^{*}(y')$, then $y \cdot \tau(x)=y' \cdot \tau(x)$ for all $x \in \mathbb{F}_{2}^{m}$. As $\tau$ is a permutation, we obtain $y=y'$, and thus $\tau^{*}$ is a permutation of $\mathbb{F}_{2}^{m}$.} be the adjoint permutation of $\tau$, that is, $y \cdot \tau(x)=\tau^{*}(y) \cdot x$.  Then
    \begin{eqnarray*}
         &W_{f}(p)=(-1)^{s+(y+p) \cdot \gamma} \sum_{x \in \mathbb{F}_{2}^{m}}(-1)^{g(x)+\tau^{*}(y+p) \cdot x}\\
         &=(-1)^{s+(y+p) \cdot \gamma}W_{g}(\tau^{*}(y+p)),
    \end{eqnarray*}
    which implies that $\tau^{*}(y+p) \in S_{g}$ and
    $
        \widetilde{f}(p)=\widetilde{g}(\tau^{*}(y+p))+(y+p) \cdot \gamma+s.
    $
    Then for $p \in S_{f}$ and $b \in \mathbb{F}_{2}^{m}$, we have 
    \begin{equation*}
        f(b)+ \widetilde{f}(p)+b \cdot p=g(\sigma b)+ \widetilde{g}(\tau^{*}(y+p))+(y+p) \cdot (b+\gamma).
    \end{equation*}
    Since $\sigma(b) \cdot \tau^{*}(y+p)=\tau(\sigma b) \cdot (y+p)=(b+\gamma) \cdot (y+p)$, 
     \begin{equation*}
        f(b)+ \widetilde{f}(p)+b \cdot p=g(\sigma b)+ \widetilde{g}(\tau^{*}(y+p))+\sigma(b) \cdot \tau^{*}(y+p),
    \end{equation*}
    which implies that $\mathbb{AD}_f$ and $\mathbb{AD}_g$ are isomorphic.

(ii) $\Longrightarrow$ (iii):
Let $M_{f}$ and $M_{g}$ be the incidence matrices of the designs $\mathbb{AD}_f$ and  $\mathbb{AD}_g$, respectively. 
By the definition of a code equivalence, we have that 
\begin{equation*}
    \begin{split}
        & \mathbb{AD}_f \text{ and }\mathbb{AD}_g \text{ are isomorphic } \\
         &\Longrightarrow M_{g}=PM_{f}Q \text{ for some permutation matrices } P \text{ and }Q\\
         & \Longrightarrow \mathcal{C}_{\mathbb{AD}_g}=\mathcal{C}_{\mathbb{AD}_f}Q \text{ for some permutation matrix } Q\\
         & \Longrightarrow \mathcal{C}_{\mathbb{AD}_g} \text{ and }\mathcal{C}_{\mathbb{AD}_f} \text{ are equivalent}.
    \end{split}
\end{equation*}	

(iii) $\Longrightarrow$ (i): By Lemma \ref{lem4.3},  a generator matrix of $\mathcal{C}_{\mathbb{AD}_f}$ is $G_{f}$. Then $\mathcal{C}_{\mathbb{AD}_f}$ is the dual of the code $\mathcal C_1(f)$ defined in \cite{EP}. The result follows from the fact that two codes are equivalent if and only if their duals are equivalent, together with \cite[Theorem 9]{EP}.

This completes the proof.
\end{proof}

\begin{remark}{\rm  \label{remarkff22} Theorem \ref{thm4.8} does not hold for plateaued functions with linear structure. Indeed,  if $f$ has nonzero linear structure, then $\mathcal{C}_{\mathbb{AD}_{f}}$ is a subcode of $\mathcal{C}_{f}$, where $\mathcal{C}_{f}=\{c(t, u, s)=(t, u, s)G_{f}: (t, u, s) \in \mathbb{F}_{2} \times \mathbb{F}_{2}^{m} \times \mathbb{F}_{2}\}$ and $G_f$ was defined in Lemma \ref{lem4.3}. In general, the existence of equivalent subcodes does not suffice to guarantee the equivalence of the two codes.

}
\end{remark}




 Prior studies \cite{DMT, DT0, DT, DT1, T,TD, WTD} rely on using the (generalized) Assmus-Mattson theorem or investigating the transitivity of the automorphism group of a linear code. Here, we give a direct proof.

\begin{proposition}\label{propro1} {\rm
    Let $f$ be a bent function in $\mathcal{BF}_m$. The following  statements hold:
    \begin{itemize}
        \item [(i)] The design held by the minimum weight codewords of $\mathcal{C}_{\mathbb{AD}_f}$ is $\mathbb{AD}_{f^{*}}$.
        \item[(ii)] The design held by  the codewords with Hamming weight $2^{m-1}+2^{m/2-1}$ of $\mathcal{C}_{\mathbb{AD}_f}$ is the complementary design of $\mathbb{AD}_{f^{*}}$ in (i). 
    \end{itemize} }
\end{proposition}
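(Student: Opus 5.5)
For a bent function $f$ there are no nonzero linear structures, so Lemma \ref{lem4.3} (with $r=0$, $S_f=\mathbb{F}_2^m$, $g=f^*$) tells us that $\mathcal{C}_{\mathbb{AD}_f}$ is generated by $G_f$. Hence its codewords are exactly $c(t,u,s)=(t,u,s)G_f=(t+u\cdot b+sf(b))_{b\in\mathbb{F}_2^m}$, indexed by $b\in\mathbb{F}_2^m$. Note that the coordinates of the code are indexed by the blocks $b$ of $\mathbb{AD}_f$, not by its points. The plan is to write down the weight of every codeword, isolate those of weight $2^{m-1}\mp 2^{m/2-1}$, and identify their supports with the blocks of $\mathbb{AD}_{f^*}$ and of its complement.

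\textbf{Step 1: the weights.} Exactly as in the proof of Lemma \ref{lem2} (with $\sigma$ the identity), we have
\[
w(c(t,u,1))=2^{m-1}-\tfrac12(-1)^tW_f(u)=2^{m-1}-(-1)^{t+f^*(u)}2^{m/2-1},
\]
while $w(c(t,u,0))\in\{0,2^{m-1},2^m\}$. So for $m\ge 4$, where $2^{m/2-1}<2^{m-1}$, the minimum nonzero weight is $2^{m-1}-2^{m/2-1}$. It is attained exactly by the $2^m$ codewords $c(f^*(u),u,1)$, $u\in\mathbb{F}_2^m$. Likewise, weight $2^{m-1}+2^{m/2-1}$ is attained exactly by the codewords $c(f^*(u)+1,u,1)$. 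We should check separately that no $s=0$ codeword has these weights. This holds because $2^{m/2-1}\neq 0$ and, for $m\ge 4$, $2^{m-1}+2^{m/2-1}<2^m$. The tiny case $m=2$ must be examined by hand, or excluded.

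\textbf{Step 2: the supports.} The support of $c(f^*(u),u,1)$ is
\[
\{b: f^*(u)+u\cdot b+f(b)=1\}.
\]
Now $\mathbb{AD}_{f^*}$ is built from the bent function $f^*$, whose dual is $f^{**}=f$. Its blocks, labelled by $u$, are $\{b: u\cdot b+f^*(u)+f^{**}(b)=1\}$, which is exactly this support. Thus the map $u\mapsto$ support is a bijection onto the block set of $\mathbb{AD}_{f^*}$, the points being $b\in\mathbb{F}_2^m$. This proves (i). Distinct $u$ give distinct supports, since the addition design is simple by Lemma \ref{lem4.3}.

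For (ii), the support of $c(f^*(u)+1,u,1)$ is the complement of the support above. So the design is $\overline{\mathbb{AD}_{f^*}}$.

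\textbf{Main obstacle.} The main care point is bookkeeping rather than difficulty. First, we must recognize that the coordinate set of the code is the block index set of $\mathbb{AD}_f$, which becomes the point set of $\mathbb{AD}_{f^*}$. Second, we must use $f^{**}=f$ to match the roles of $f$ and $f^*$ correctly. Finally, we must make sure the $s=0$ codewords never contribute the relevant weights, handling small $m$ explicitly.
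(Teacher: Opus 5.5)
Your proof is correct and follows the paper's route. Like the paper, you identify the minimum-weight (resp.\ weight $2^{m-1}+2^{m/2-1}$) codewords of the code generated by $G_f$ and match their supports with the blocks of $\mathbb{AD}_{f^*}$ (resp.\ their complements) via $f^{**}=f$; the one difference is that you compute the weights directly from the Walsh values, where the paper cites \cite[Theorem 12]{DMT}. Your separate worry about $m=2$ is unnecessary: $0<2^{m/2-1}<2^{m-1}$ already holds for every even $m\ge 2$, so no $s=0$ codeword ever has weight $2^{m-1}\pm 2^{m/2-1}$.
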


\begin{proof}

Observe that the design held by  codewords of weight $j\neq 0$ in $\C$ generated by $M(\mathbb{D})$ for any 2-design in terms of $(f_b(p))_{p\in P, b\in B}=uG$; the codewords of weight $j$ relies on $u$. By \cite[Theorem 12]{DMT}, the weight distribution of $\mathcal{C}_{\mathbb{AD}_f}$ is given by $$1+2^{m}z^{2^{m-1}-2^{\frac{m}{2}-1}}+2^{m} z^{2^{m-1}+2^{\frac{m}{2}-1}}+(2^{m+1}-2)z^{2^{m-1}}+z^{2^{m}}.$$

(i). The minimum weight codewords of $\mathcal{C}_{\mathbb{AD}_f}$ are $(f^{*}(p)+b \cdot p+f(b))_{p \in \mathbb{F}_{2}^{m}}$, where $b \in \mathbb{F}_{2}^{m}$. Since the dual $f^{*}$ satisfies $(f^{*})^{*}(x)=f(x)$, we have $f^{*}_{b}(p)=f^{*}(p)+b \cdot p+f(b)$. Thus, the support of the codeword $(f^{*}(p)+b \cdot p+f(b))_{p \in \mathbb{F}_{2}^{m}}$ is $\{p \in \mathbb{F}_{2}^{m}: f^{*}_{b}(p)=1\}$, which shows the design of the minimum weight codewords of $\mathcal{C}_{\mathbb{AD}_f}$ is $\mathbb{AD}_{f^{*}}$.

(ii). The codewords with weight $2^{m-1}+2^{m/2-1}$ of $\mathcal{C}_{\mathbb{AD}_f}$ are $(f^{*}(p)+b \cdot p+f(b)+1)_{p \in \mathbb{F}_{2}^{m}}$, where $b \in \mathbb{F}_{2}^{m}$. Thus, the support of the codeword $(f^{*}(p)+b \cdot p+f(b)+1)_{p \in \mathbb{F}_{2}^{m}}$ is $\{p \in \mathbb{F}_{2}^{m}: f^{*}_{b}(p)=0\}$, which shows the design of the codewords with weight $2^{m-1}+2^{m/2-1}$ of $\mathcal{C}_{\mathbb{AD}_f}$ is the complement of $\mathbb{AD}_{f^{*}}$.
\end{proof}

\begin{corollary} \label{cor4} {\rm (\cite{DS,EP, DMT})
Let $f$ and $g$ be bent functions in $\mathcal{BF}_m$. 
     The following  statements are equivalent: 
 \begin{itemize}
     \item[(i)] $f$ and $g$ are equivalent.
     
     \item[(ii)] $\mathbb{AD}_f$ and $\mathbb{AD}_g$ defined in  Lemma \ref{thm 2.23} are isomorphic.
     
     \item[(iii)] $\mathcal{C}_{\mathbb{AD}_f}$ and $\mathcal{C}_{\mathbb{AD}_g}$ are equivalent.
     
      \item[(iv)] The designs held by the minimum weight codewords of $\mathcal{C}_{\mathbb{AD}_f}$ and $\mathcal{C}_{\mathbb{AD}_g}$ are isomorphic.

      \item[(v)]  The designs held by the codewords with Hamming weight  $2^{m-1}+2^{m/2-1}$ of $\mathcal{C}_{\mathbb{AD}_f}$ and $\mathcal{C}_{\mathbb{AD}_g}$ are isomorphic. 
      
 \end{itemize}

}

\end{corollary}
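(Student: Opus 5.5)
The plan is to close the cycle of implications by specializing Theorem \ref{thm4.8} to $r=0$ and then using Proposition \ref{propro1} to move between $f$ and its dual. A bent function has no nonzero linear structure, since $C_f(a)=0$ for every $a\neq\mathbf{0}$. Hence Theorem \ref{thm4.8} with $r=0$ gives (i) $\Longleftrightarrow$ (ii) $\Longleftrightarrow$ (iii) at once; here the design of Lemma \ref{thm 2.23} is exactly Bending's addition design with $g=f^*$. What remains is to tie (iv) and (v) to these.

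\textbf{Linking (iv) and (v) to the duals.} By Proposition \ref{propro1}(i), the design held by the minimum weight codewords of $\mathcal{C}_{\mathbb{AD}_f}$ is $\mathbb{AD}_{f^*}$, and likewise the one for $g$ is $\mathbb{AD}_{g^*}$. So (iv) says exactly that $\mathbb{AD}_{f^*}\cong\mathbb{AD}_{g^*}$. By Proposition \ref{propro1}(ii), (v) says that the complements $\overline{\mathbb{AD}_{f^*}}$ and $\overline{\mathbb{AD}_{g^*}}$ are isomorphic. Complementation preserves isomorphism in both directions, since the same point and block permutations work. Hence (iv) $\Longleftrightarrow$ (v).

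\textbf{Connecting back to (i).} Since $f^*$ and $g^*$ are bent, Theorem \ref{thm4.8} applied to them shows that (iv) is equivalent to $f^*$ and $g^*$ being equivalent. It then suffices to show that $f\sim g$ if and only if $f^*\sim g^*$. The ``if'' direction follows from the ``only if'' direction via $f^{**}=f$, so only one direction needs proof.

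\textbf{Main step: $f\sim g$ implies $f^*\sim g^*$.} Suppose $f(x)=g(\sigma x)+y\cdot x+s$ with $\sigma x=\tau^{-1}(x+\gamma)$, following the notation in the proof of Theorem \ref{thm4.8}. That computation, with $S_f=S_g=\mathbb{F}_2^m$, gives
\[
f^*(p)=g^*(\tau^*(y+p))+(y+p)\cdot\gamma+s .
\]
The map $p\mapsto\tau^*(p+y)$ is an affine permutation, and $(y+p)\cdot\gamma+s$ is affine in $p$. Therefore $f^*$ and $g^*$ are extended affine equivalent.

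The main obstacle is being careful that this identity is exactly the $\widetilde{f}$ relation already derived in the proof of Theorem \ref{thm4.8}, specialized to the bent case. Once that is checked, the rest is bookkeeping. Alternatively, one could note that the designs in (ii) and (iv) are the transposes of each other, since $f(b)+f^*(p)+b\cdot p$ is symmetric under the swap $(f,b)\leftrightarrow(f^*,p)$. That observation gives (ii) $\Longleftrightarrow$ (iv) directly and avoids the dual-equivalence computation.
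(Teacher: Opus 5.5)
Your argument is correct and follows the route the paper intends. The paper states this corollary without proof, credited to \cite{DS,EP,DMT}, right after Theorem~\ref{thm4.8}, which at $r=0$ gives (i)$\Leftrightarrow$(ii)$\Leftrightarrow$(iii), and Proposition~\ref{propro1}, which identifies the designs in (iv) and (v) as $\mathbb{AD}_{f^*}$ and its complement. The one link the paper leaves implicit is that (iv) concerns $f^*,g^*$ rather than $f,g$, and you close it correctly. Your transpose remark is the most economical way to do so: the incidence matrix of $\mathbb{AD}_{f^*}$ in the sense of Lemma~\ref{thm 2.23} is exactly the transpose of that of $\mathbb{AD}_f$, so (ii)$\Leftrightarrow$(iv) is immediate without appealing to $f\sim g\Leftrightarrow f^*\sim g^*$.
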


We remark that the number of non-isomorphic TSDP designs with parameters $(\dag)$ grows exponentially when $m$ grows to infinity \cite{K1}. Further when $f$ is bent in $\mathcal{BF}_m$, $f, \mathbb{AD}_f, \mathcal{C}_{\mathbb{AD}_f}$, and the designs held by the minimum weight codewords of $\mathbb{AD}_f$ grow exponentially when $m$ grows to infinity \cite{DS,DMT}. 

\begin{remark}\label{rmk5}{\rm 
Note that item (iv) in Corollary \ref{cor4} cannot be applied to the case of $r$-plateaued functions, where $r>0$. We write $W_{f}(x)=(-1)^{g(x)}2^{\frac{m+r}{2}}1_{S_{f}}(x)$ for $g \in \mathcal{BF}_{m}$. By  \cite[Theorem 5]{WH}, the minimum weight codewords of $\mathcal{C}_{\mathbb{AD}_f}$ are $(g(b)+b \cdot p+f(p))_{p \in \mathbb{F}_{2}^{m}}$, where $b \in S_{f}$. Let $f'_{b}(p)=g(b)+b \cdot p+f(p)$, where $b \in S_{f}$ and $p \in \mathbb{F}_{2}^{m}$. Thus the support of the minimum weight codeword $(g(b)+b \cdot p+f(p))_{p \in \mathbb{F}_{2}^{m}}$ is $\{p \in \mathbb{F}_{2}^{m}: f'_{b}(p)=1\}$. By Corollary \ref{cor3.2}, if the minimum weight codewords support a design, then for any distinct $p, q \in \mathbb{F}_{2}^{m}$, $\sum_{b \in S_{f}}(-1)^{f'_{b}(p)+f'_{b}(q)}$ should be a constant. However, $\sum_{b \in S_{f}}(-1)^{f'_{b}(p)+f'_{b}(q)}=(-1)^{f(p)+f(q)}\sum_{b \in S_{f}}(-1)^{(p+q) \cdot b}$, which is not a constant in general.
}
\end{remark}

\subsection{On the design $\mathbb{D}_f$ derived from an $r$-plateaued function}


In this subsection, we will present a new construction of a linear code $\widetilde{\mathcal{C}}_{D_f}$ by using an $r$-plateaued function and then consider some equivalent problems. 

For $f\in \mathcal{BF}_m$, denote by $D_f=f^{-1}(1)$ an ordered subset of $\mathbb{F}^m_2$. We define a linear code $\widetilde{\mathcal{C}}_{D_f}$ of length $|D_f|$ by
\begin{align}\label{code}
    \widetilde{\mathcal{C}}_{D_f}=\{(x\cdot y)_{y\in D_f}+s{\bf 1}:x\in\mathbb{F}^m_2, s\in\mathbb{F}_2\},
\end{align}
where the all-one vector is denoted as $\bf{1}$.
If $f$ is $r$-plateaued, then the  weight distribution of $\widetilde{\mathcal{C}}_{D_f}$ could be determined by \cite[Theorem 1]{DC}, \cite{DT0}. Notice that
\[G_f'=\begin{pmatrix}
\cdots & x^T & \cdots \\
\cdots & f(x) & \cdots
\end{pmatrix}_{x\in D_f} 
\]
is a generator matrix of $\widetilde{\mathcal{C}}_{D_f}$.
 In the following, we characterize the equivalence of $\widetilde{\mathcal{C}}_{D_f}$.


\begin{theorem}{\rm \label{thmmm}
Let $r$ be an integer with $0 \leq r <m$. Let $f$ and $g$ be $r$-plateaued functions in $\mathcal{BF}_{m}$ with no nonzero linear structure. Then $\widetilde{\mathcal{C}}_{D_f}$ and $\widetilde{\mathcal{C}}_{D_g}$ defined in Eq.  (\ref{code}) are equivalent if and only if $f$ and $g$ are affine equivalent. In particular, if $\widetilde{\mathcal{C}}_{D_f}$ and $\widetilde{\mathcal{C}}_{D_g}$ are equivalent, then  $\mathcal{C}_{\mathbb{AD}_f}$ and $\mathcal{C}_{\mathbb{AD}_g}$ are equivalent. 
}
\end{theorem}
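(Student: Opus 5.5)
The plan is to use one observation throughout: $\widetilde{\mathcal{C}}_{D_f}$ is exactly the set of restrictions to $D_f$ of the affine Boolean functions $x\mapsto u\cdot x+s$, i.e.\ the code $\mathrm{RM}(1,m)$ punctured to the coordinates in $D_f$. Code equivalence then becomes a statement about a bijection between $D_g$ and $D_f$ under which affine functions correspond.

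\emph{Sufficiency.} Suppose $g(x)=f(\sigma x)$ with $\sigma x=xA+a$ and $A$ nonsingular. Then $\sigma$ maps $D_g$ bijectively onto $D_f$. Order the coordinates accordingly, so that position $y\in D_g$ corresponds to $\sigma y\in D_f$. A codeword of $\widetilde{\mathcal{C}}_{D_f}$ read at $\sigma y$ is
\[
u\cdot(yA+a)+s=(uA^T)\cdot y+(u\cdot a+s),
\]
which is again an affine function of $y$. So the coordinate permutation induced by $\sigma$ maps $\widetilde{\mathcal{C}}_{D_f}$ into $\widetilde{\mathcal{C}}_{D_g}$. Applying the same argument to $\sigma^{-1}$ gives the reverse inclusion, hence equality.

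\emph{Necessity.} Let $\pi:D_g\to D_f$ be a bijection realizing the equivalence, so that $\{h\circ\pi|_{D_g}: h \text{ affine}\}=\{h|_{D_g}: h\text{ affine}\}$. In particular, each coordinate function $y\mapsto(\pi y)_i$ agrees on $D_g$ with some affine function. Hence there are a matrix $A$ and a vector $a$ with $\pi y=yA+a$ for all $y\in D_g$. Define $L(y)=yA+a$ on all of $\mathbb{F}_2^m$. If $A$ is nonsingular, then $L$ is an affine permutation with $L(D_g)=D_f$. Since $f$ and $g$ take only the values $0,1$, this gives $g=f\circ L$, so $f$ and $g$ are affine equivalent.

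To get $A$ nonsingular, I will show that $D_f$ is not contained in any affine hyperplane. Then the image $L(D_g)=D_f$ affinely spans $\mathbb{F}_2^m$, which forces $\operatorname{rank}A=m$. The same fact shows that restriction from affine functions to $D_f$ is injective, so both codes have dimension $m+1$ (rows of $G_f'$ with the all-one row). This affine-spanning claim is the step I expect to be the main obstacle.

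My first attempt is via the Walsh transform. Suppose $D_f\subseteq\{x:a\cdot x=c\}$ with $a\neq\mathbf 0$. Then $f$ vanishes on the complementary coset $H'=\{x: a\cdot x=c+1\}$. Splitting $W_f(u)$ into its sums over the two cosets, the sum over $H'$ equals $\pm 2^{m-1}$ when $u\in\{\mathbf 0,a\}$ and $0$ otherwise. This yields
\[
W_f(\mathbf 0)-(-1)^{c}W_f(a)=2^{m}.
\]
With $W_f\in\{0,\pm2^{\frac{m+r}{2}}\}$ and $r<m$, the left side has absolute value at most $2^{\frac{m+r}{2}+1}$. This is less than $2^m$ unless $r\ge m-2$, so those boundary cases need separate treatment. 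There I would use that $f$ has no nonzero linear structure, which rules out $f$ being the product of affine functions that would otherwise arise. If this computation becomes messy, the fallback is Lemma~\ref{lem4.3}(iii) applied to the Walsh support, combined with Lemma~\ref{lem2}.

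Finally, the ``in particular'' assertion follows immediately. Affine equivalence implies (extended affine) equivalence, and Theorem~\ref{thm4.8}, (i)$\Rightarrow$(iii), then gives that $\mathcal{C}_{\mathbb{AD}_f}$ and $\mathcal{C}_{\mathbb{AD}_g}$ are equivalent.
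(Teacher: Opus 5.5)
\textbf{The spanning lemma fails at $(m,r)=(2,0)$.} Most of your route is sound, and it is more direct than the paper's. The paper builds the coefficient map $\mathcal{T}$, proves it additive (using that $E_g$ spans) and injective (using that $E_f$ spans), and then passes to the adjoint $\mathcal{T}^{*}$. You instead read off from the codewords $x\mapsto x_i$ that $\pi$ is the restriction of an affine map $L$. You then need only that the image $D_f$ affinely spans $\mathbb{F}_2^m$.

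That spanning lemma, however, is false for $(m,r)=(2,0)$. Take $f=x_1x_2$: it is bent, has no nonzero linear structure, and $D_f=\{(1,1)\}$ lies in the hyperplane $x_1=1$. Your sentence ``no nonzero linear structure rules out $f$ being the product of affine functions'' is exactly what breaks at $m=2$. Your side claim that both codes have dimension $m+1$ also fails there, since the code has length $1$. The paper treats $(m,r)=(2,0)$ separately. In your setup the patch is easy:
\begin{itemize}
\item Equal code lengths give $|D_f|=|D_g|\in\{1,3\}$.
\item If $|D_g|=1$, any affine function with the right value at the single point will do, so choose $A=I_2$ and $a=y_0+\pi(y_0)$; this $L$ is a translation.
\item If $|D_f|=3$, the complement of a point does affinely span $\mathbb{F}_2^2$, so your argument applies.
\end{itemize}

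\textbf{The case $m\ge 3$, $r=m-2$ is asserted, not proved.} You say $f$ ``would'' be a product of affine functions, but this needs an argument. Here is one.
\begin{itemize}
\item Your identity $W_f(\mathbf 0)-(-1)^cW_f(a)=2^m$ forces $W_f(\mathbf 0)=2^{m-1}$, hence $|D_f|=2^{m-2}$.
\item For $u\in S_f\setminus\{\mathbf 0\}$ one has $W_f(u)=-2\sum_{y\in D_f}(-1)^{u\cdot y}=\pm 2^{m-1}$. So $|\sum_{y\in D_f}(-1)^{u\cdot y}|=|D_f|$, and $u\cdot y$ is constant on $D_f$.
\item Hence $S_f\subseteq E^{\perp}$, where $E=D_f+y_0$ for some $y_0\in D_f$.
\item Since $|E|=2^{m-2}$, we get $\dim E^{\perp}\le 2$. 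But $|S_f|=4$, so $\dim E^{\perp}=2$ and $\dim\langle E\rangle=m-2$. Therefore $E=\langle E\rangle$ is a subspace, and $D_f$ is an affine subspace of codimension $2$.
\item Because $m\ge 3$, $E$ contains a nonzero $v$, and $D_f+v=D_f$ gives $f(x+v)=f(x)$ for all $x$. This contradicts the assumption of no nonzero linear structure.
\end{itemize}
This is essentially the paper's argument, which concludes instead via $S_f=E^{\perp}$ and Lemma~\ref{lem4.3}(iii).

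With these two repairs your proof of the equivalence is complete. The ``in particular'' clause via Theorem~\ref{thm4.8} is fine as you wrote it.
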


\begin{proof}
($\Longleftarrow$):
If $f$ and $g$ are affine equivalent, then there is an affine permutation $\pi$ of $\mathbb{F}_{2}^{m}$ such that $f(x)=g(\pi x)$. Thus $x \in D_{f}$ if and only if $x \in \pi^{-1}D_{g}$. For any $a \in \mathbb{F}_{2}^{m}$ and $b \in \mathbb{F}_{2}$, 
\begin{equation*}
(a \cdot x+b)_{x \in D_{f}}=(a \cdot x+b)_{x \in \pi^{-1}D_{g}}=(a \cdot \pi^{-1}x+b)_{x \in D_{g}},
\end{equation*}
which implies that $\widetilde{\mathcal{C}}_{D_f}$ and $\widetilde{\mathcal{C}}_{D_g}$ are equivalent.

($\Longrightarrow$): Let $\widetilde{\mathcal{C}}_{D_f}$ and $\widetilde{\mathcal{C}}_{D_g}$ be equivalent. If $(m, r)=(2, 0)$, then 
by $|D_{f}|=|D_{g}|$ and all bent functions are with the form $x_{1}x_{2}+c_{1}x_{1}+c_{2}x_{2}+c_{0}$ (where $c_{0}, c_{1}, c_{2} \in \mathbb{F}_{2}$), we have $f(x_{1}, x_{2}), g(x_{1}, x_{2}) \in \{x_{1}x_{2}, x_{1}(x_{2}+1), (x_{1}+1)x_{2}, (x_{1}+1)(x_{2}+1)\}$, or $f(x_{1}, x_{2}), g(x_{1}, x_{2}) \in \{x_{1}x_{2}+1, x_{1}(x_{2}+1)+1, (x_{1}+1)x_{2}+1, (x_{1}+1)(x_{2}+1)+1\}$. Hence, $f$ and $g$ are affine equivalent. In the following, we consider the case $(m, r) \neq (2, 0)$. Since $\widetilde{\mathcal{C}}_{D_f}$ and $\widetilde{\mathcal{C}}_{D_g}$ are equivalent, then there is a bijection $\pi: D_{f} \rightarrow D_{g}$ such that for any $(a, b) \in \mathbb{F}_{2}^{m} \times \mathbb{F}_{2}$, there is a unique $(\mathcal{T}(a, b), \mathcal{U}(a, b)) \in \mathbb{F}_{2}^{m} \times \mathbb{F}_{2}$ satisfying 
$
    a \cdot x+b=\mathcal{T}(a, b) \cdot \pi x+\mathcal{U}(a, b) 
    \text{ for all } x \in D_{f}.
$
Simply denote $\mathcal{T}(a, 0)=\mathcal{T}(a)$ and $\mathcal{U}(a, 0)=\mathcal{U}(a)$. Then
\begin{equation}\label{eq14}
a \cdot x=\mathcal{T}(a) \cdot \pi x+\mathcal{U}(a) \text{ for all } a \in \mathbb{F}_{2}^{m} \text{ and } x \in D_{f}.
\end{equation}
By Eq.  \eqref{eq14}, we have
\begin{equation}\label{eq15}
\mathcal{T}(a) \cdot y=a \cdot \pi^{-1}y+\mathcal{U}(a) \text{ for all } a \in \mathbb{F}_{2}^{m} \text{ and } y \in D_{g}.
\end{equation}
Denote $D_{g}=y_{0}+E_{g}$, where $y_{0} \in D_{g}$. Note that ${\bf{0}} \in E_{g}$. By Eq. \eqref{eq15}, for any $a \in \mathbb{F}_{2}^{m}$, we have
\begin{equation} \label{eq16}
    \mathcal{T}(a) \cdot y_{0}=a \cdot \pi^{-1}(y_{0})+\mathcal{U}(a),
\end{equation}
\begin{equation}\label{eq17}
\mathcal{T}(a) \cdot (y_{0}+y)=a \cdot \pi^{-1}(y_{0}+y)+\mathcal{U}(a) \text{ for all } y \in E_{g}.
\end{equation}
Adding Eqs. \eqref{eq16} and \eqref{eq17}, for any $a \in \mathbb{F}_{2}^{m}$,
\begin{equation}\label{eq18}
    \mathcal{T}(a) \cdot y=a \cdot (\pi^{-1}(y_{0})+\pi^{-1}(y_{0}+y)) \text{ for all } y \in E_{g}.
\end{equation}
By Eq.  \eqref{eq18}, for any $a, a' \in \mathbb{F}_{2}^{m}$ and $y \in E_{g}$, we obtain
\begin{equation}\label{eq19}
\begin{split}
&(\mathcal{T}(a+a')+\mathcal{T}(a)+\mathcal{T}(a')) \cdot y\\
&=\mathcal{T}(a+a') \cdot y+\mathcal{T}(a) \cdot y+\mathcal{T}(a') \cdot y\\
& =(a+a') \cdot  (\pi^{-1}(y_{0})+\pi^{-1}(y_{0}+y))\\
&+a \cdot (\pi^{-1}(y_{0})+\pi^{-1}(y_{0}+y))\\
& \ \ \ +a' \cdot  (\pi^{-1}(y_{0})+\pi^{-1}(y_{0}+y))=0.
\end{split}
\end{equation}
We claim that for any $r$-plateaued function $g \in \mathcal{BF}_{m}$ with no nonzero structure (where $(m, r) \neq (2, 0)$), $E_{g}$ contains a basis of $\mathbb{F}_{2}^{m}$. For any $a \in \mathbb{F}_{2}^{m}$, we have
\begin{equation}\label{eq20}
    \begin{split}
        W_{g}(a)&=\sum_{y \in \mathbb{F}_{2}^{m}}(-1)^{g(y)+a \cdot y}=-\sum_{y \in D_{g}}(-1)^{a \cdot y}+\sum_{y \in \mathbb{F}_{2}^{m} \backslash D_{g}}(-1)^{a \cdot y}\\
        &=\sum_{y \in \mathbb{F}_{2}^{m}}(-1)^{a \cdot y}-2\sum_{y \in D_{g}}(-1)^{a \cdot y}\\
        &=2^m \delta_{{\bf{0}},a}-2(-1)^{a \cdot y_{0}}\sum_{y \in E_{g}}(-1)^{a \cdot y}.
    \end{split}
\end{equation}
Denote $E_{g}^{\bot}=\{a \in \mathbb{F}_{2}^{m}: a \cdot y=0 \text{ for all } y \in E_{g}\}$. As easily seen, $E_{g}^{\bot}$ is a linear subspace and $E_{g}^{\bot}=\LI E_{g}\RI^{\bot}$. Assume that $E_{g}$ does not contain any basis of $\mathbb{F}_{2}^{m}$. Then there is nonzero $a_{0} \in E_{g}^{\bot}$, and by Eq. \eqref{eq20}, 
\begin{equation}\label{eq21}
|W_{g}(a_{0})|=2|E_{g}|.
\end{equation}
Since $|E_{g}| \in \{2^{m-1}, 2^{m-1}+2^{\frac{m+r}{2}-1}, 2^{m-1}-2^{\frac{m+r}{2}-1}\}$ and $|W_{g}(a_{0})| \in \{0, 2^{\frac{m+r}{2}}\}$, then Eq. \eqref{eq21} implies that $r=m-2$ and $|E_{g}|=2^{m-2}$. Note that when $|E_{g}|=2^{m-2}$, then $g$ is unbalanced (that is, ${\bf{0}} \in S_{g}$), and $|E_{g}^{\bot}| \in \{2, 4\}$ as the dimension of $\LI E_{g}\RI$ is not less than $m-2$. If $r=m-2$ and $|E_{g}|=2^{m-2}$, then for any $a \in S_{g} \backslash \{{\bf{0}}\}$, by Eq. \eqref{eq20}, we have $ \pm 2^{m-1}=2(-1)^{a \cdot y_{0}+1}\sum_{y \in E_{g}}(-1)^{a \cdot y}$, which implies that $\sum_{y \in E_{g}}(-1)^{a \cdot y}= \pm 2^{m-2}=\pm|E_{g}|$. As ${\bf{0}} \in E_{g}$, then for any $a \in S_{g} \backslash \{{\bf{0}}\}$, we have $a \cdot y=0$ for all $y \in E_{g}$, that is, $a \in E_{g}^{\bot} \backslash \{{\bf{0}}\}$. Hence $S_{g} \subseteq E_{g}^{\bot}$. By $|S_{g}|=2^{m-r}=4$ and $|E_{g}^{\bot}| \in \{2, 4\}$, we have $S_{g}=E_{g}^{\bot}$ is a $2$-dimensional subspace of $\mathbb{F}_{2}^{m}$, which contradicts that $g$ has no nonzero linear structure and $(m, r) \neq (2, 0)$. By the above arguments, $E_{g}$ contains a basis of $\mathbb{F}_{2}^{m}$. 

Since $E_{g}$ contains a basis of $\mathbb{F}_{2}^{m}$, then by Eq. \eqref{eq19}, we obtain $\mathcal{T}(a+a')+\mathcal{T}(a)+\mathcal{T}(a')=0$ for any $a, a' \in \mathbb{F}_{2}^{m}$. Therefore $\mathcal{T}(a)$ is a linear function from $\mathbb{F}_{2}^{m}$ to $\mathbb{F}_{2}^{m}$. For $a, a' \in \mathbb{F}_{2}^{m}$, if $\mathcal{T}(a)=\mathcal{T}(a')$, then $\mathcal{U}(a) \neq \mathcal{U}(a')$, and by Eq. \eqref{eq14}, $(a+a') \cdot x=1$ for all $x \in D_{f}$. Denote $D_{f}=x_{0}+E_{f}$, where $x_{0} \in D_{f}$. Then $(a+a') \cdot x=1+(a+a') \cdot x_{0}$ for all $x \in E_{f}$. Since ${\bf{0}} \in E_{f}$, we obtain $1+(a+a') \cdot x_{0}=(a+a') \cdot {\bf{0}}=0$, and then $(a+a') \cdot x=0$ for all $x \in E_{f}$. Recall that we have proved that $E_{f}$ contains a basis of $\mathbb{F}_{2}^{m}$, thus $a=a'$, and $\mathcal{T}(a)$ is a linear permutation of $\mathbb{F}_{2}^{m}$. By Eq. \eqref{eq14}, for any $a \in \mathbb{F}_{2}^{m}$,
\begin{equation}\label{eq22}
    a \cdot x_{0}=\mathcal{T}(a) \cdot \pi x_{0}+\mathcal{U}(a),
\end{equation}
\begin{equation}\label{eq23}
    a \cdot (x_{0}+x)=\mathcal{T}(a) \cdot \pi(x_{0}+x)+\mathcal{U}(a) \text{ for all } x \in E_{f}.
\end{equation}
Let $\mathcal{T}^{*}(a)$ be the adjoint permutation of $\mathcal{T}(a)$, that is, for any $a, z \in \mathbb{F}_{2}^{m}$, $\mathcal{T}(a) \cdot z=a \cdot \mathcal{T}^{*}(z)$. Note that $\mathcal{T}^{*}$ is linear. Adding Eqs. \eqref{eq22} and \eqref{eq23}, for any $a \in \mathbb{F}_{2}^{m}$ and $x \in E_{f}$, we have
\begin{eqnarray*}
    &a \cdot x =\mathcal{T}(a) \cdot (\pi x_{0}+\pi(x_{0}+x))\\
    &=a \cdot \mathcal{T}^{*}(\pi x_{0}+\pi(x_{0}+x)) \text{ for all } a\in\F^m_2,
\end{eqnarray*}
which implies that 
\begin{equation}\label{eq24}
    x=\mathcal{T}^{*}(\pi x_{0}+\pi(x_{0}+x)) \text{ for all } x \in E_{f}.
\end{equation}
By Eq. \eqref{eq24}, 
$
    \pi x=(\mathcal{T}^{*})^{-1}(x_{0}+x)+\pi x_{0} \text{ for all } x \in D_{f}.
$
Extending $\pi: \mathbb{F}_{2}^{m} \rightarrow \mathbb{F}_{2}^{m}$ by
$\pi x=(\mathcal{T}^{*})^{-1}(x_{0}+x)+\pi x_{0} \text{ for any } x \in \mathbb{F}_{2}^{m}$. Since $(\mathcal{T}^{*})^{-1}$ is a linear permutation of $\mathbb{F}_{2}^{m}$, then it is easy to verify that $\pi$ is an affine permutation of $\mathbb{F}_{2}^{m}$. Since $\pi$ is an affine permutation with $\pi D_{f}=D_{g}$, then $f(x)=g(\pi x) \text{ for all } x \in \mathbb{F}_{2}^{m}$, that is, $f$ and $g$ are affine equivalent.
\end{proof}

 
 Note that  the converse statement  \lq\lq if $\widetilde{\mathcal{C}}_{D_f}$ and $\widetilde{\mathcal{C}}_{D_g}$ are equivalent, then  $\mathcal{C}_{\mathbb{AD}_f}$ and $\mathcal{C}_{\mathbb{AD}_g}$ are equivalent\rq\rq\; in Theorem \ref{thmmm} is not true, see the following counterexample.

\begin{example}{\rm
Let $f, g$ be two Boolean functions in $\mathcal{BF}_{5}$ be given by $f(x_{1}, x_{2}, x_{3}, x_{4}, x_{5})=x_{1}x_{3}+x_{2}x_{4}+x_{1}x_{2}x_{5}$,  and $g(x_{1}, x_{2}, x_{3}, x_{4}, x_{5})=f(x_{1}, x_{2}, x_{3}, x_{4}, x_{5})+x_{1}+x_{2}+x_{3}+x_{4}$. Then $f, g$ are $1$-plateaued functions with no nonzero linear structure by  \cite[Example 4.1]{HP}. Since $f$ and $g$ are equivalent, then $\mathcal{C}_{\mathbb{AD}_f}$ and $\mathcal{C}_{\mathbb{AD}_g}$ are equivalent by Theorem \ref{thm4.8}. However, we have $|D_{f}|=12$ and $|D_{g}|=16$ and thus $\widetilde{\mathcal{C}}_{D_f}$ and $\widetilde{\mathcal{C}}_{D_g}$ are not equivalent.
}
\end{example}

{ To obtain the equivalence problem analogous to Corollary \ref{cor4}, we construct a new 2-design. Only in the following theorem, we use $\mathbb{F}^m_2\setminus\{\bf{0}\}$ instead of $\mathbb{F}^{m*}_2$.}
\begin{theorem}{\rm  \label{theorembent}

    Let $f$ be a bent function in $\mathcal{BF}_m$. For each $b\in \mathbb{F}^m_2$, let $f_b(x)=b\cdot x+f^*(b)$ for $x\in \mathbb{F}^m_2$. Then  a pair $(D_f,\{f_b\in\mathcal{BF}_m:b\in \mathbb{F}^{m}_2\setminus\{\bf{0}\}\})$ induces a $2$-design $\mathbb{D}_f=(D_f,\{B^{f_b}:b\in\mathbb{F}^{m}_2\setminus\{\bf{0}\}\})$.

    \begin{itemize}
		\item [(i)] The parameters of the $2$-design $\mathbb{D}_f$ are 
     $2$-$(2^{m-1}-(-1)^{f^{*}(\mathbf{0})}2^{\frac{m}{2}-1}, 2^{m-2}-2^{\frac{m}{2}-2}((-1)^{f^{*}(\mathbf{0})}-1), 2^{m-2}{+2^{\frac{m}{2}-1}}+\frac{1}{2}((-1)^{f^{*}(\mathbf{0})}-1))$.

    \item [(ii)] The block intersection numbers of two distinct vectors $a, b \in {\mathbb{F}_{2}^{m}}\setminus\{\mathbf{0}\}$ are 
$$2^{m-3}+2^{\frac{m}{2}-2}-2^{\frac{m}{2}-3}((-1)^{f^{*}(\mathbf{0})}+(-1)^{f^{*}(a)+f^{*}(b)+f^{*}(a+b)}),$$
so that $\mathbb{D}_f$ is quasi-symmetric.

\item [(iii)] The  2-rank of $\mathbb{D}_f$ is $m+1$.

\end{itemize}

    }
\end{theorem}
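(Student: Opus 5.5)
The plan is to verify the three conditions of Corollary \ref{cor3.2} with $P=D_f$ and block index set $B=\mathbb{F}_2^m\setminus\{\mathbf{0}\}$, computing each sum by character-sum manipulations; then part (ii) comes from Theorem \ref{lem3.1}(iv) and part (iii) from a rank computation as in Lemma \ref{lem2}.

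\textbf{Point count, block sizes, replication.} We have $|D_f|=2^{m-1}-\tfrac12 W_f(\mathbf{0})=2^{m-1}-(-1)^{f^*(\mathbf{0})}2^{\frac m2-1}$. Writing $\sum_{x\in D_f}(-1)^{u\cdot x}=\tfrac12\bigl(2^m\delta_{\mathbf{0},u}-W_f(u)\bigr)$, we get for $b\neq\mathbf{0}$
\[
W_{D_f,f_b}(\mathbf{0})=(-1)^{f^*(b)}\cdot\bigl(-\tfrac12\bigr)(-1)^{f^*(b)}2^{m/2}=-2^{\frac m2-1},
\]
a constant. Hence every block has size $\tfrac12\bigl(|D_f|+2^{\frac m2-1}\bigr)$, which should match the stated $k$. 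For condition (ii), for $p\in D_f$,
\[
\sum_{b\neq\mathbf{0}}(-1)^{b\cdot p+f^*(b)}=W_{f^*}(p)-(-1)^{f^*(\mathbf{0})}=(-1)^{f(p)}2^{m/2}-(-1)^{f^*(\mathbf{0})}=-2^{m/2}-(-1)^{f^*(\mathbf{0})},
\]
using $f^{**}=f$ and $f(p)=1$ on $D_f$. This is constant.

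\textbf{Pair condition.} For distinct $p,q\in D_f$,
\[
\sum_{b\neq\mathbf{0}}(-1)^{f_b(p)+f_b(q)}=\sum_{b\neq\mathbf{0}}(-1)^{(p+q)\cdot b}=-1,
\]
so Corollary \ref{cor3.2} gives a $2$-design. The parameters of (i) then follow from Theorem \ref{lem3.1}(ii),(iii); this is routine arithmetic, and I would double-check the $\lambda$ formula against $(-1)^{f^*(\mathbf{0})}=\pm1$ separately.

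\textbf{Part (ii).} Use Theorem \ref{lem3.1}(iv) with $|P|=|D_f|$, and with each single sum equal to $-2^{\frac m2-1}$ as above. The cross term is
\[
\sum_{p\in D_f}(-1)^{(a+b)\cdot p+f^*(a)+f^*(b)}=-\tfrac12(-1)^{f^*(a)+f^*(b)}W_f(a+b)=-(-1)^{f^*(a)+f^*(b)+f^*(a+b)}2^{\frac m2-1},
\]
valid since $a+b\neq\mathbf{0}$. Substituting gives the displayed intersection numbers. They take only two values, depending on the sign $(-1)^{f^*(a)+f^*(b)+f^*(a+b)}$, so $\mathbb{D}_f$ is quasi-symmetric.

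\textbf{Part (iii), the main obstacle.} The rows of the incidence matrix are $(b\cdot p+f^*(b))_{b\neq\mathbf{0}}=(p,1)G$, where $G$ has columns $(b^T,f^*(b))$ for $b\neq\mathbf{0}$. By Lemma \ref{lem3.6} the code is $\{(u,s)G:(u,s)\in\LI\{(p,1):p\in D_f\}\RI\}$, so the rank is at most $m+1$. To show equality I need two facts.

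First, $\LI\{(p,1):p\in D_f\}\RI=\mathbb{F}_2^{m+1}$. Otherwise some nonzero $(u,t)$ satisfies $u\cdot p=t$ on all of $D_f$. Then $\sum_{p\in D_f}(-1)^{u\cdot p}=\pm|D_f|$, which gives $|W_f(u)|=2|D_f|$ for $u\ne\mathbf{0}$, or $|D_f|=0$ when $u=\mathbf{0}$. The value $2|D_f|=2^m\mp 2^{m/2}$ is not $2^{m/2}$ unless $m=2$, which is a small case to handle directly.

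Second, $G$ has rank $m+1$. Suppose $u\cdot b+sf^*(b)=0$ for all $b\neq\mathbf{0}$ with $(u,s)\ne\mathbf{0}$. If $s=0$ this forces $u=\mathbf{0}$. If $s=1$, then $f^*$ agrees with a linear function on all $b\neq\mathbf{0}$. Such a function differs from an affine function in at most one point, so its Walsh values are within $2$ of $\{0,\pm2^m\}$, contradicting bentness for $m\ge4$.

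The small case $m=2$ needs separate checking, and is where I expect care to be required.
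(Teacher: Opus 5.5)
Your proposal is correct for $m\ge 4$ and is essentially the paper's proof. Parts (i)--(ii) use the same three character sums fed into Corollary~\ref{cor3.2} and Theorem~\ref{lem3.1}(iv), and the $\lambda$ you were unsure of does come out as stated. Part (iii) uses the same factorization $(f_b(p))_{b\neq\mathbf 0}=(p,1)G$ together with Lemma~\ref{lem3.6}. The one difference is that the paper cites \cite[Theorem 18.3.4]{M} for $\operatorname{rank}G=m+1$, and the dimension $m+1$ of $\widetilde{\mathcal C}_{D_f}$ for $\LI\{(p,1):p\in D_f\}\RI=\mathbb F_2^{m+1}$, whereas you prove both facts directly from Walsh values.

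Do not try to handle $m=2$ separately, because (iii) is actually false there. If $f^*(\mathbf 0)=0$, then $|D_f|=1$ and the incidence matrix is the all-one $1\times 3$ matrix, of rank $1$. If $f^*(\mathbf 0)=1$, then $\mathbb D_f$ is the $2$-$(3,2,1)$ design, whose incidence rows sum to zero, giving rank $2$. So the statement, and the facts the paper cites, implicitly require $m\ge 4$, which is exactly where your arguments apply.
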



\begin{proof}
(i). We will check the three conditions of Corollary \ref{cor3.2}. Firstly, for $b\in\mathbb{F}^{m}_2\setminus\{{\bf{0}}\},$ we have
 \begin{eqnarray*}
&W_{D_f,f_b}(\mathbf{0})=\sum_{p\in D_f}(-1)^{b\cdot p+f^*(b)}\\
&=(-1)^{f^*(b)}(2^{m-1}\delta_{{\bf{0}},b}-2^{\frac{m}{2}-1}(-1)^{f^*(b)})=-2^{\frac{m}{2}-1}.
 \end{eqnarray*}
Next,  by symmetry that for $p \in D_{f}$, we have
 \begin{eqnarray*}
&\sum_{b\in\mathbb{F}^{m}_2\setminus\{{\bf{0}}\}}(-1)^{f_b(p)}=-(-1)^{f^{*}(\mathbf{0})}+W_{\mathbb{F}^m_2,f^*}(p)\\
&=-(-1)^{f^{*}(\mathbf{0})}-2^{\frac{m}{2}}.
 \end{eqnarray*}
Finally, for two distinct points $p,q\in D_f$,
\[
\sum_{b\in \mathbb{F}^{m}_2\setminus\{{\bf{0}}\}}(-1)^{f_b(p)+f_b(q)}=\sum_{b\in \mathbb{F}^{m}_2\setminus\{{\bf{0}}\}}(-1)^{(p+q)\cdot b}=-1.
\] 
The parameters are computed by those three values of summations using   Corollary \ref{cor3.2}. 

(ii). The point-block intersection numbers are 
  \begin{equation*}
    \begin{split}
   & \frac{1}{4}(|D_{f}|-\sum_{p \in D_{f}}(-1)^{f_{a}(p)}-\sum_{p \in D_{f}}(-1)^{f_{b}(p)}+\sum_{p \in D_{f}}(-1)^{f_{a}(p)+f_{b}(p)})\\
   &=\frac{1}{4}(2^{m-1}-(-1)^{f^{*}(\mathbf{0})}2^{\frac{m}{2}-1}+2^{\frac{m}{2}-1}+2^{\frac{m}{2}-1}\\
   &+(-1)^{f^{*}(a)+f^{*}(b)}(2^{m-1}\delta_{\mathbf{0}, a+b}-2^{\frac{m}{2}-1}(-1)^{f^{*}(a+b)}))\\
   &=2^{m-3}+2^{\frac{m}{2}-2}-2^{\frac{m}{2}-3}((-1)^{f^{*}(\mathbf{0})}+(-1)^{f^{*}(a)+f^{*}(b)+f^{*}(a+b)}).
    \end{split}
\end{equation*}  

(iii).  
Notice that for any $p \in D_{f}$, $(f_{b}(p))_{b \in {\mathbb{F}_{2}^{m}}\setminus\{{\bf{0}}\}}=(p, f(p))G_{f^{*}}^{''}$, where $G_{f^{*}}^{''}=\begin{pmatrix}
\cdots & b^T & \cdots \\
\cdots & f^{*}(b) & \cdots
\end{pmatrix}_{b\in {\mathbb{F}_{2}^{m}}\setminus\{{\bf{0}}\}}$. By \cite[Theorem 18.3.4]{M}, the rank of $G_{f^{*}}^{''}$ is $m+1$. By Lemma \ref{lem3.6}, the binary linear code $\mathcal{C}_{\mathbb{D}_{f}}$ is $\LI\{(p, f(p))G_{f^{*}}^{''}: p \in D_{f}\}\RI=\{(t, u)G_{f^{*}}^{''}: (t, u) \in \LI\{(p, f(p)): p \in D_{f}\}\RI\}$. 
Since the dimension of $\widetilde{\mathcal{C}}_{D_f}$ is $m+1$, that is, $\LI\{(p, f(p)): p \in D_{f}\}\RI=\mathbb{F}_{2}^{m} \times \mathbb{F}_{2}$, then $G_{f^{*}}^{''}$ is a generator matrix of $\mathcal{C}_{\mathbb{D}_{f}}$, and the rank of $\mathbb{D}_{f}$ is $m+1$.
This completes the proof.
\end{proof}

\begin{remark}
    {\rm  The block graph of a quasi-symmetric design with intersection number $i,j~ (i<j)$, where two
blocks are adjacent if they intersect in $i$ points, is strongly regular  \cite{CD}.

    }
\end{remark}


\begin{proposition}\label{propro}{\rm 
    Let $f$ be a bent function in $\mathcal{BF}_m$. The following  statements hold:
    \begin{itemize}

     \item[(i)] The design {held} by the middle weight codewords of $\widetilde{\mathcal{C}}_{D_f}$ is $\mathbb{D}_f$ in Theorem \ref{theorembent}.
     
        \item [(ii)] The design  {held} by the minimum weight codewords of $\widetilde{\mathcal{C}}_{D_f}$ is the complement of $\mathbb{D}_f$.

    \end{itemize}
    }
\end{proposition}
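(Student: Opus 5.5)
The plan is to compute the weight of a general codeword of $\widetilde{\mathcal{C}}_{D_f}$ explicitly, read off which codewords have middle weight and which have minimum weight, and then identify their supports with the blocks of $\mathbb{D}_f$ and of its complement.

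\textbf{Step 1: weights.} A codeword of $\widetilde{\mathcal{C}}_{D_f}$ is $c(b,s)=(b\cdot p+s)_{p\in D_f}$ with $b\in\mathbb{F}^m_2$ and $s\in\mathbb{F}_2$. Its weight is
\[
w(c(b,s))=\frac12\Big(|D_f|-(-1)^s\sum_{p\in D_f}(-1)^{b\cdot p}\Big).
\]
As in the proof of Theorem~\ref{theorembent}(i),
\[
\sum_{p\in D_f}(-1)^{b\cdot p}=2^{m-1}\delta_{\mathbf{0},b}-2^{\frac m2-1}(-1)^{f^*(b)} .
\]
For $b=\mathbf{0}$ we obtain only the weights $0$ and $|D_f|$. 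For $b\neq\mathbf{0}$ we obtain
\[
w(c(b,s))=\frac12\Big(|D_f|+(-1)^{s+f^*(b)}2^{\frac m2-1}\Big).
\]

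\textbf{Step 2: which weights occur.} With $|D_f|=2^{m-1}-(-1)^{f^*(\mathbf{0})}2^{\frac m2-1}$, the nonzero codewords with $b\neq\mathbf{0}$ take exactly two weights, each attained by $2^m-1$ codewords. The larger weight occurs exactly when $s=f^*(b)$. It equals $\frac12(|D_f|+2^{\frac m2-1})$, which matches the block size in Theorem~\ref{theorembent}(i), since $\frac12 W_{D_f,f_b}(\mathbf{0})=-2^{\frac m2-2}$. The smaller weight occurs exactly when $s=f^*(b)+1$.

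I will read ``middle weight'' as the larger of the two weights, which is the value strictly between the minimum weight and $|D_f|$. This convention should be checked against the weight distribution cited from \cite{DC}, separately for $f^*(\mathbf{0})=0$ and for $f^*(\mathbf{0})=1$. One also has to confirm that the all-one vector, of weight $|D_f|$, is excluded from both designs.

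\textbf{Step 3: identify the supports.} For $s=f^*(b)$, the support of $c(b,f^*(b))$ is $\{p\in D_f:b\cdot p+f^*(b)=1\}=B^{f_b}$. Letting $b$ range over $\mathbb{F}^m_2\setminus\{\mathbf{0}\}$ gives exactly the blocks of $\mathbb{D}_f$, which proves (i). For $s=f^*(b)+1$, the support is $D_f\setminus B^{f_b}$, which gives (ii).

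It remains to see that each codeword with $b\neq\mathbf{0}$ is counted once. The blocks $B^{f_b}$ are distinct by Theorem~\ref{theorembent}(ii): the intersection numbers are strictly smaller than the block size, because $2^{m-3}+2^{\frac m2-2}+2^{\frac m2-3}<2^{m-2}+2^{\frac m2-2}$ for $m\ge4$. The case $m=2$ can be checked directly. So both designs are simple and carry the stated parameters.

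\textbf{Main obstacle.} The only delicate point is the bookkeeping of the weight labels as $f^*(\mathbf{0})$ varies. The parity term shifts $|D_f|$, so one must make sure that the minimum weight is $\frac12(|D_f|-2^{\frac m2-1})$ in both cases and that no codeword with $b=\mathbf{0}$ shares either of the two weights.
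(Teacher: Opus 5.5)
Your proposal is correct and follows the paper's route. The paper takes from \cite[Theorem 1]{DC} that the middle (resp.\ minimum) weight codewords are $(b\cdot p+f^*(b))_{p\in D_f}$ (resp.\ $(b\cdot p+f^*(b)+1)_{p\in D_f}$) with $b\neq\mathbf{0}$, whereas you derive this directly from the Walsh transform; it then reads off the supports exactly as you do. The ordering you flag as the main obstacle is immediate: $0<w_1<w_2<|D_f|$ reduces to $2^{\frac m2-1}<|D_f|$, which holds for $m\ge 4$.

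One slip: your displayed inequality for distinctness of blocks compares the wrong quantities. The largest intersection number is $2^{m-3}+2^{\frac m2-2}$, against block size $2^{m-2}$, when $f^*(\mathbf{0})=0$. It is $2^{m-3}+2^{\frac m2-1}$, against $2^{m-2}+2^{\frac m2-1}$, when $f^*(\mathbf{0})=1$. The step is unnecessary anyway: for $b\neq b'$, the sum $c(b,f^*(b))+c(b',f^*(b'))=c(b+b',f^*(b)+f^*(b'))$ is nonzero by your Step 1, and a binary vector is determined by its support.
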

\begin{proof}

 (i). By \cite[Theorem 1]{DC}, the middle weight codewords of $\widetilde{\mathcal{C}}_{D_f}$ are $(b \cdot p+f^{*}(b))_{p \in D_{f}}$, where $b \in {\mathbb{F}_{2}^{m}}^{*}$. Recall that $f_{b}(x)=b \cdot x+f^{*}(b)$. Then the support of the middle weight codeword $(b \cdot p+f^{*}(b))_{p \in D_{f}}$ is $\{p \in D_{f}: f_{b}(p)=1\}$, where $b \in {\mathbb{F}_{2}^{m}}^{*}$. Thus the design of the middle weight codewords of $\widetilde{\mathcal{C}}_{D_f}$ is  $\mathbb{D}_f$.

  (ii). By \cite[Theorem 1]{DC}, the minimum weight codewords of $\widetilde{\mathcal{C}}_{D_f}$ are $(b \cdot p+f^{*}(b)+1)_{p \in D_{f}}$, where $b \in {\mathbb{F}_{2}^{m}}^{*}$. Then the support of the minimum weight codeword $(b \cdot p+f^{*}(b)+1)_{p \in D_{f}}$ is $\{p \in D_{f}: f_{b}(p)=0\}$, where $b \in {\mathbb{F}_{2}^{m}}^{*}$. Thus, the design of the minimum weight codewords of $\widetilde{\mathcal{C}}_{D_f}$ is the complement of $\mathbb{D}_f$.
\end{proof}

The following theorem  completely solves two open problems mentioned in introduction.

 \begin{theorem} \label{cor5} {\rm
Let $f$ and $g$ be bent functions in $\mathcal{BF}_m$. 
 The following statements are equivalent:
\begin{itemize}
     \item[(i)] $f$ and $g$ are affine equivalent.

   \item[(ii)] $\mathbb{D}_f$ and $\mathbb{D}_g$ defined in Theorem \ref{theorembent} are isomorphic.

     \item[(iii)] $\widetilde{\mathcal{C}}_{D_f}$ and $\widetilde{\mathcal{C}}_{D_g}$ defined in Eq. (\ref{code}) are equivalent.

      \item[(iv)]  The designs held by  the minimum weight codewords of $\widetilde{\mathcal{C}}_{D_f}$ and $\widetilde{\mathcal{C}}_{D_g}$ are isomorphic. 
    
     \item[(v)]  The designs held by  the middle weight codewords of $\widetilde{\mathcal{C}}_{D_f}$ and $\widetilde{\mathcal{C}}_{D_g}$ are isomorphic. 
    
 \end{itemize}

 }
\end{theorem}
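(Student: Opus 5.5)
The plan is to arrange the five statements in a cycle, using Theorem \ref{thmmm} for the link between (i) and (iii), and Proposition \ref{propro} and Theorem \ref{theorembent} for the link between codes and designs.

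\emph{(i) $\Leftrightarrow$ (iii).} This is Theorem \ref{thmmm} with $r=0$, since bent functions have no nonzero linear structure.

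\emph{(ii) $\Leftrightarrow$ (v) and (iv) $\Leftrightarrow$ (v).} By Proposition \ref{propro}(i), the design held by the middle weight codewords of $\widetilde{\mathcal{C}}_{D_f}$ is exactly $\mathbb{D}_f$, which gives (ii) $\Leftrightarrow$ (v). By Proposition \ref{propro}(ii), the design held by the minimum weight codewords is the complement $\overline{\mathbb{D}_f}$. Two designs are isomorphic if and only if their complements are, via the same point and block bijections, which gives (iv) $\Leftrightarrow$ (v).

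\emph{(iii) $\Rightarrow$ (v).} A coordinate permutation carrying $\widetilde{\mathcal{C}}_{D_f}$ onto $\widetilde{\mathcal{C}}_{D_g}$ preserves Hamming weights. It therefore maps the supports of middle weight codewords bijectively onto those of $\widetilde{\mathcal{C}}_{D_g}$, and so induces an isomorphism of the held designs. One check is needed: the middle weight, $2^{m-2}-2^{\frac m2-2}((-1)^{f^*(\mathbf 0)}-1)$, depends on $f^*(\mathbf 0)$. But equivalent codes have equal lengths $|D_f|=|D_g|$, hence $f^*(\mathbf 0)=g^*(\mathbf 0)$, and the two weight classes correspond.

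\emph{(ii) $\Rightarrow$ (iii).} This is the main step. Let $M$ be the incidence matrix of $\mathbb{D}_f$. Then $\mathcal{C}_{\mathbb{D}_f}$, the row space of $M$, lives on the block coordinates, not on the points. The code $\widetilde{\mathcal{C}}_{D_f}$ is indexed by points, so I will work with the column space of $M$. The columns of $M$ are the codewords $(b\cdot p+f^*(b))_{p\in D_f}$ for $b\neq\mathbf 0$. Each such codeword lies in $\widetilde{\mathcal{C}}_{D_f}$.

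I will show these columns span $\widetilde{\mathcal{C}}_{D_f}$. Since $\operatorname{rank} M=m+1$ by Theorem \ref{theorembent}(iii), and $\dim\widetilde{\mathcal{C}}_{D_f}=m+1$ (as used in the proof of Theorem \ref{theorembent}(iii)), the span of the columns is all of $\widetilde{\mathcal{C}}_{D_f}$.

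An isomorphism of designs gives $M_g=P_1M_fP_2$ with permutation matrices $P_1,P_2$. Then the column space of $M_g$ is $P_1$ times the column space of $M_f$, so $\widetilde{\mathcal{C}}_{D_g}$ is obtained from $\widetilde{\mathcal{C}}_{D_f}$ by the point permutation $P_1$. This is (iii).

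The delicate points are the rank/dimension identification and the matching of $f^*(\mathbf 0)$ in the weight classes; both are handled above, which closes the cycle (i) $\Leftrightarrow$ (iii) $\Rightarrow$ (v) $\Leftrightarrow$ (ii) $\Rightarrow$ (iii), together with (iv) $\Leftrightarrow$ (v).
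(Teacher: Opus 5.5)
Your proof is correct, apart from a boundary case noted at the end, and it takes a different route from the paper for connecting the designs to the functions.

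\textbf{The paper's route.} The paper proves (i)$\Rightarrow$(ii) and (ii)$\Rightarrow$(i) by direct computation.
\begin{itemize}
\item For (i)$\Rightarrow$(ii) it derives $g^*(b)=f^*(b(A^{-1})^{T})+b\cdot cA^{-1}$ and writes down the point and block maps explicitly.
\item For (ii)$\Rightarrow$(i) it extends the block map by $\sigma\mathbf{0}=\mathbf{0}$ and uses $g^*(\mathbf{0})=f^*(\mathbf{0})$ to reach $b\cdot x=\sigma b\cdot(\pi x+\pi\mathbf{0})$. Linearity of $\sigma$ then follows because $D_f+\pi\mathbf{0}$ contains a basis, a fact taken from the proof of Theorem \ref{thmmm}. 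Finally it recovers $g$ from $W_{g^*}$.
\end{itemize}

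\textbf{Your route.} You observe that the columns of the incidence matrix of $\mathbb{D}_f$, i.e.\ the incidence vectors of its blocks, span exactly $\widetilde{\mathcal{C}}_{D_f}$, by the rank count of Theorem \ref{theorembent}(iii). So $\widetilde{\mathcal{C}}_{D_f}$ is the span of the block vectors of $\mathbb{D}_f$, and design isomorphism gives code equivalence immediately. The converse direction is the standard fact that a weight-preserving coordinate permutation carries the design of a weight class onto the design of the same weight class.

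\textbf{Comparison.} Your argument is shorter and more conceptual, and it places all the Boolean-function content in Theorem \ref{thmmm}. The paper's computation has two advantages. It gives the explicit form of every isomorphism (an affine point map, with block map the adjoint of its linear part), which the paper reuses to describe $\Aut(\mathbb{D}_f)$ in Theorem \ref{thm10}(iv). It also does not depend on the rank statement.

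\textbf{Boundary case.} The rank $m+1$, and likewise $\dim\widetilde{\mathcal{C}}_{D_f}=m+1$, actually requires $m\ge 4$. For $m=2$ and $|D_f|=3$, $\mathbb{D}_f$ is the $2$-$(3,2,1)$ design. Its incidence matrix is $J-I$, which has $2$-rank $2$, so the column space is a proper subcode of $\widetilde{\mathcal{C}}_{D_f}=\mathbb{F}_2^3$. You should handle $m=2$ separately, as the paper does. There, (ii) forces $|D_f|=|D_g|$. Both codes are then the full space $\mathbb{F}_2^{|D_f|}$, so (iii) holds trivially.
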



\begin{proof}

(i)  $\Longleftrightarrow$  (iii): The result follows from  Theorem \ref{thmmm}. 

(iv)  $\Longleftrightarrow$  (v): The result follows from  Proposition \ref{propro}.

 (ii)  $\Longleftrightarrow$  (iv):  The result follows from Proposition \ref{propro}.

(i) $\Longrightarrow$ (ii): Since $f$ and $g$ are affine equivalent, then there is an affine permutation $\pi$ of $\mathbb{F}_{2}^{m}$ such that $g(x)=f(\pi x)$. Observe that $\pi(D_{g})=D_{f}$. Denote $\pi x=xA+c$ for $A \in \operatorname{GL}_{m}(\mathbb{F}_{2})$ and $c \in \mathbb{F}_{2}^{m}$. For any $b \in \mathbb{F}_{2}^{m}$, we have
\begin{eqnarray*}
&W_{g}(b)=\sum_{x \in \mathbb{F}_{2}^{m}}(-1)^{g(x)+b \cdot x}\\
&=\sum_{x \in \mathbb{F}_{2}^{m}}(-1)^{f(xA+c)+b \cdot x}\\
&=\sum_{x \in \mathbb{F}_{2}^{m}}(-1)^{f(x)+b \cdot (x+c)A^{-1}}\\
&=\sum_{x \in \mathbb{F}_{2}^{m}}(-1)^{f(x)+b(A^{-1})^{T} \cdot x+b \cdot cA^{-1}}\\
&=(-1)^{b \cdot cA^{-1}}W_{f}(b(A^{-1})^{T})=2^{\frac{m}{2}}(-1)^{f^{*}(b(A^{-1})^{T})+b \cdot cA^{-1}}.
\end{eqnarray*}
Thus, 
\begin{equation}\label{Eq34}
    g^{*}(b)=f^{*}(b(A^{-1})^{T})+b \cdot cA^{-1}.
\end{equation}
Let $\sigma x=x(A^{-1})^{T}$. Note that $\sigma{\mathbb{F}_{2}^{m}}^{*}={\mathbb{F}_{2}^{m}}^{*}$. Then by Eq.  \eqref{Eq34}, for any $b \in {\mathbb{F}_{2}^{m}}^{*}$ and $x \in D_{g}$, 
\begin{equation*}
    b \cdot x+g^{*}(b)=b \cdot (x+cA^{-1})+f^{*}(b(A^{-1})^{T})=\sigma b \cdot \pi x+f^{*}(\sigma b),
\end{equation*}
which shows that $\mathbb{D}_{f}$ and $\mathbb{D}_{g}$ are isomorphic.

(ii) $\Longrightarrow$ (i):  
By the proof of Theorem \ref{thmmm}, $f$ and $g$ are affine equivalent when $m=2$. In the following, let $m \geq 4$. Without loss of generality, let ${\bf{0}} \in D_{g}$. Since  $\mathbb{D}_{f}$ and $\mathbb{D}_{g}$ are isomorphic, then there is a permutation $\sigma$ of ${\mathbb{F}_{2}^{m}}^{*}$ and a bijection $\pi$ from $D_{g}$ to $D_{f}$, such that for any $b \in {\mathbb{F}_{2}^{m}}^{*}$ and $x \in D_{g}$, $
b \cdot x+g^{*}(b)=\sigma b \cdot \pi x+f^{*}(\sigma b).
$
Let us extend $\sigma$ by defining $\sigma{\bf{0}}={\bf{0}}$. Then $\sigma$ is a permutation of $\mathbb{F}_{2}^{m}$. By the parameters of $\mathbb{D}_{f}$ and $\mathbb{D}_{g}$, we have $g^{*}({\bf{0}})=f^{*}({\bf{0}})$. Since $\sigma{\bf{0}}={\bf{0}}$ and  $g^{*}( {\bf{0}})=f^{*}({\bf{0}})$, then for any $x \in D_{g}$, 
\begin{equation}\label{Eq35}
    {\bf{0}} \cdot x+g^{*}({\bf{0}})=\sigma {\bf{0}} \cdot \pi x+f^{*}(\sigma {\bf{0}})
\end{equation}
Therefore for any $b \in \mathbb{F}_{2}^{m}$ and $x \in D_{g}$, we have
\begin{equation}\label{Eq36}
b \cdot x+g^{*}(b)=\sigma b \cdot \pi x+f^{*}(\sigma b).
\end{equation}
Plugging $x= {\bf{0}}$ into Eq. \eqref{Eq36}, for any $b \in \mathbb{F}_{2}^{m}$, we have
\begin{equation}\label{Eq37}
    g^{*}(b)=\sigma b \cdot \pi  {\bf{0}}+f^{*}(\sigma b).
\end{equation}
Adding Eqs.~\eqref{Eq36} and  \eqref{Eq37}, for any $b \in \mathbb{F}_{2}^{m}$ and $x \in D_{g}$, we have
\begin{equation}\label{Eq38}
b \cdot x=\sigma b \cdot (\pi x+\pi {\bf{0}}).
\end{equation}
For any $b, b' \in \mathbb{F}_{2}^{m}$ and $x \in D_{g}$, by Eq. \eqref{Eq38}, we have
\begin{eqnarray}\label{Eq39}
    &\sigma (b+b') \cdot (\pi x+ \pi {\bf{0}})=(b+b') \cdot x=b \cdot x+b' \cdot x\nonumber\\
    &= (\sigma b+\sigma b') \cdot (\pi x+ \pi {\bf{0}}).
\end{eqnarray}
Let $E_{f}=D_{f}+\pi {\bf{0}}$. By the proof of Theorem \ref{thmmm}, $E_{f}$ contains a basis of $\mathbb{F}_{2}^{m}$. Hence, Eq. \eqref{Eq39} implies that $\sigma(b+b')=\sigma b+\sigma b'$ for any $b, b' \in \mathbb{F}_{2}^{m}$, that is, $\sigma$ is a linear permutation of $\mathbb{F}_{2}^{m}$. Denote $\sigma x=xA$ for $A \in \operatorname{GL}_{m}(\mathbb{F}_{2})$ and $c \in \mathbb{F}_{2}^{m}$. By Eq. \eqref{Eq37}, for any $x \in \mathbb{F}_{2}^{m}$. we have
\begin{equation*}
\begin{split}
   &2^{\frac{m}{2}}(-1)^{g(x)}=W_{g^{*}}(x)=\sum_{b \in \mathbb{F}_{2}^{m}}(-1)^{f^{*}(bA)+bA \cdot \pi {\bf{0}}+b \cdot x}\\
   &=\sum_{b \in \mathbb{F}_{2}^{m}}(-1)^{f^{*}(b)+b \cdot (\pi {\bf{0}}+x(A^{-1})^{T})}\\
   &=W_{f^{*}}(\pi {\bf{0}}+x(A^{-1})^{T})=2^{\frac{m}{2}}(-1)^{f(\pi {\bf{0}}+x(A^{-1})^{T}))}.
\end{split}
\end{equation*}
Thus $g$ and $f$ are affine equivalent with $g(x)=f(\pi {\bf{0}}+x(A^{-1})^{T})).$

This completes this proof.
\end{proof}

\begin{corollary}{\rm The number of non affine-equivalent bent functions in $\mathcal{BF}_m$ grow exponentially when $m$ grows to infinity. Further, when $f$ is bent in $\mathcal{BF}_m$, $\mathbb{D}_f, \widetilde{\mathcal{C}}_{\mathbb{D}_f}$, and the designs held by the minimum (resp. middle) weight codewords of $\widetilde{\mathcal{C}}_{\mathbb{D}_f}$ grow exponentially when $m$ grows to infinity.

}
    
\end{corollary}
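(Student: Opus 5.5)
The corollary bundles two claims. The first is a pure counting statement about affine-equivalence classes of bent functions. The second says that each of the four objects $\mathbb{D}_f$, $\widetilde{\mathcal{C}}_{D_f}$, and the designs held by the minimum and middle weight codewords of $\widetilde{\mathcal{C}}_{D_f}$ has exponentially many pairwise non-isomorphic (resp. inequivalent) instances. The plan is to prove the first claim by a counting argument and then transport it to the other four objects through Theorem \ref{cor5}.

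For the counting statement, I would work with the Maiorana--McFarland class $f(x,y)=x\cdot \sigma y+h(y)$ on $\mathbb{F}_2^{m/2}\times\mathbb{F}_2^{m/2}$. Fixing $\sigma$ as the identity and letting $h$ range over all of $\mathcal{BF}_{m/2}$ gives $2^{2^{m/2}}$ distinct bent functions. Each affine-equivalence class is an orbit of the affine group $\mathrm{AGL}_m(\mathbb{F}_2)$, so it has at most
\[
|\mathrm{AGL}_m(\mathbb{F}_2)|\le 2^{m^2+m}
\]
elements. Hence the number of classes meeting this family is at least
\[
2^{2^{m/2}-m^2-m},
\]
which grows exponentially (indeed doubly exponentially in $m/2$). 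This is the same count underlying the statement for extended affine equivalence quoted from \cite{DS,DMT}. Since affine equivalence is finer than extended affine equivalence, one could even cite that result directly, provided it is phrased for EA classes.

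For the second claim, Theorem \ref{cor5} states that for bent $f,g$ the following are equivalent: $f$ and $g$ are affine equivalent; $\mathbb{D}_f\cong\mathbb{D}_g$; $\widetilde{\mathcal{C}}_{D_f}$ and $\widetilde{\mathcal{C}}_{D_g}$ are equivalent; the minimum weight designs are isomorphic; the middle weight designs are isomorphic. So each of the maps
\[
f\mapsto \mathbb{D}_f,\qquad f\mapsto \widetilde{\mathcal{C}}_{D_f},\qquad f\mapsto(\text{min.\ weight design}),\qquad f\mapsto(\text{middle weight design})
\]
induces an injection from affine-equivalence classes of bent functions to isomorphism (resp. equivalence) classes of the corresponding objects. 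The exponential lower bound from the first step then transfers to each of them.

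I expect the main obstacle to be stating the orbit-counting step rigorously with a correct bound on $|\mathrm{AGL}_m(\mathbb{F}_2)|$. Everything after that is a formal consequence of Theorem \ref{cor5} and needs no new computation.
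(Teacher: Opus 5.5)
Your argument is correct, but its first half takes a different route from the paper. The paper never counts bent functions directly. It imports Kantor's theorem \cite{K1} that there are exponentially many pairwise non-isomorphic TSDP designs with parameters $(\dag)$, and identifies each of them with an addition design $\mathbb{AD}_f$ of some bent $f$ (Lemma~\ref{thm3.7}). It then uses Theorems~\ref{cor5} and~\ref{thmmm} to see that affinely equivalent bent functions yield equivalent codes $\mathcal{C}_{\mathbb{AD}_f}$, hence isomorphic addition designs, so non-isomorphic TSDP designs force non-affine-equivalent bent functions.

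You replace this design-theoretic input with a direct orbit count on the Maiorana--McFarland family $x\cdot y+h(y)$. These $2^{2^{m/2}}$ bent functions are pairwise distinct (set $x=\mathbf{0}$), and each affine class has at most $|\mathrm{AGL}_m(\mathbb{F}_2)|<2^{m^2+m}$ members, so there are at least $2^{2^{m/2}-m^2-m}$ classes. Your version is self-contained and gives an explicit, indeed doubly exponential, bound. It does not rely on Lemma~\ref{thm3.7}, on \cite{K1}, or on the somewhat tangled chain through $\mathcal{C}_{\mathbb{AD}_f}$. After dividing further by the $2^{m+1}$ affine summands, it even bounds the number of extended-affine classes. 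The paper's route, in exchange, presents the growth as a consequence of the very correspondences (TSDP designs, addition designs, bent functions) the paper is about.

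The second half of your proof uses the equivalences of Theorem~\ref{cor5} as an injection from affine classes of bent functions into isomorphism (resp.\ equivalence) classes of $\mathbb{D}_f$, $\widetilde{\mathcal{C}}_{D_f}$ and the two weight-designs. That is exactly what the paper does.
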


\begin{proof}  Since the number of non-isomorphic TSDP designs with parameters $(\dag)$ grows exponentially when $m$ grows to  infinity (see \cite{K1}), it is only to show that if there are two non-isomorphic designs $\mathbb{D}_1, \mathbb{D}_2$, then there exist two bent functions $f, g$ such that they are not affine equivalent.

Suppose that $f$ and $g$ are affine equivalent. By Theorem \ref{cor5},  $\widetilde{\mathcal{C}}_{D_f}$ and $\widetilde{\mathcal{C}}_{D_g}$ are equivalent. By Theorem \ref{thmmm}, $\mathcal{C}_{\mathbb{AD}_f}$ and $\mathcal{C}_{\mathbb{AD}_g}$ are equivalent. By Theorem 2, the two addition designs {$\mathbb{AD}_f$} and {$\mathbb{AD}_g$} are both TSDP.    This completes the proof by using Theorem \ref{cor5}.
\end{proof}


The following open problem   is quite natural.

\begin{problem}{\rm \label{openpro1}
 Can we extended those bent functions in Theorem \ref{theorembent} to $r$-plateaued functions? 

}
    
\end{problem}

\section{The automorphisms of  designs and linear codes induced by plateaued functions}\label{s6}

Bending \cite{B} computed the
automorphism groups of addition designs of bent functions. Later, Dempwolff and Neumann \cite{DN} computed the automorphism groups of addition designs of 1-plateaued functions which do not have a linear structure. In this section, we extend their results by computing the automorphism groups of addition designs of $r$-plateaued functions which do not have a linear structure. Our proof uses a completely different approach compared to \cite{DN}, and further we present the automorphism groups of the linear codes of addition designs.

\begin{theorem}\label{thm10}{\rm
 Let $f$ be an $r$-plateaued function in $\mathcal{BF}_m$ with no nonzero linear structure. We can write $W_{f}(x)=(-1)^{g(x)}2^{\frac{m+r}{2}}1_{S_{f}}(x)$ for some $g \in \mathcal{BF}_{m}$. Then  
 \begin{itemize}
     \item [(i)] $\Aut(\mathbb{AD}_f)$ is isomorphic to  $\Aut(f)$ given by
     $${(\sigma,\pi)\mapsto [A, \pi{\bf{0}}, \sigma{\bf{0}}A^{T}, \pi{\bf{0}}\cdot \sigma{\bf{0}}+g({\bf{0}})+g(\sigma{\bf{0}})]}.$$
      
     \item[(ii)] $\Aut(\mathcal{C}_{\mathbb{AD}_f})=\{(A,a)\in\operatorname{GL}_m(\mathbb{F}_2)\times\mathbb{F}^m_2:f(x)+f(xA+a) \text{ is an affine function in } \mathcal{BF}_m\}$.

    \item[(iii)]  
          $\Aut(\widetilde{\C}_{D_f})=\{\pi|_{D_{f}}: \pi \text{ is an affine permutation of } \mathbb{F}_{2}^{m} \text{ with } f(x)=f(\pi x)\}.$


    \item[(iv)] If $f$ is bent, then  $ {\Aut(\mathbb{D}_f)=\{(\pi |_{D_{f}}, (\bar{\pi}^{-1})^{*}|_{{\mathbb{F}_{2}^{m}}^{*}}): \pi \text{ is an affine permutation of } \mathbb{F}_{2}^{m} \text{ with } f(x)=f(\pi x)\}},  $
     where $\bar{\pi} x=\pi x+\pi {\bf{0}}$,  $(\bar{\pi}^{-1})^{*}$ is the adjoint permutation of $\bar{\pi}^{-1}$, and ${\mathbb{F}_{2}^{m}}^{*}=\mathbb F^m_2\setminus \{\bf{0}\}$.

 \end{itemize}
 
 }
\end{theorem}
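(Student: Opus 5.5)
We would treat the four items separately. Each follows the same pattern: an automorphism gives a functional identity between incidence functions, and we show that this identity forces the maps involved to be affine. The key input every time is the spanning property of Lemma \ref{lem4.3}(iii): a translate $E$ of the relevant support spans $\mathbb{F}_2^m$.

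For (i), points of $\mathbb{AD}_f$ are $p\in S_f$ and blocks are indexed by $b\in\mathbb{F}_2^m$, with incidence $f_b(p)=b\cdot p+f(b)+g(p)$. An automorphism $(\sigma,\pi)$ therefore means
\[
\pi b\cdot \sigma p+f(\pi b)+g(\sigma p)=b\cdot p+f(b)+g(p)\quad\text{for all } p\in S_f,\ b\in\mathbb{F}_2^m .
\]
We would proceed as in the proofs of Theorems \ref{thmmm} and \ref{cor5}.
\begin{enumerate}
\item Fix a base point $p_0\in S_f$ and add the identity at $p$ and at $p_0$. The term $f(\pi b)$ cancels, leaving $\pi b\cdot(\sigma p+\sigma p_0)=b\cdot(p+p_0)+c(p)$, where $c(p)$ depends only on $p$.
\item Summing over $b,b',b+b'$ and using that $E=S_f+p_0$ spans $\mathbb{F}_2^m$, we get that $\pi$ is affine. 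Its linear part $\bar\pi$ is invertible because the points have distinct labels.
\item Then $\sigma$ is forced to be the affine map with linear part $(\bar\pi^{-1})^{*}$ on $S_f$.
\item Substituting back gives $f(b)=f(bA+\pi\mathbf{0})+b\cdot(\sigma\mathbf{0}A^{T})+\varepsilon$ for a suitable matrix $A$ and constant $\varepsilon$. Here $\varepsilon$ is read off by evaluating at $b=\mathbf{0}$ and $p=\mathbf{0}$ (after translating), which should produce $\pi\mathbf{0}\cdot\sigma\mathbf{0}+g(\mathbf{0})+g(\sigma\mathbf{0})$.
\end{enumerate}
This defines the stated map into $\Aut(f)$, and it is a homomorphism by the composition rule for $[A,p,b,\varepsilon]$. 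Injectivity is immediate. For surjectivity we would reuse the computation in Theorem \ref{thm4.8}, (i)$\Rightarrow$(ii), with $g=f$: every $[A,p,b,\varepsilon]\in\Aut(f)$ yields a pair $(\sigma,\pi)$.

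For (ii), by Lemma \ref{lem4.3} the code $\mathcal{C}_{\mathbb{AD}_f}$ is generated by $G_f$, with coordinates indexed by $b\in\mathbb{F}_2^m$. Let $\pi$ be a coordinate permutation preserving the code, and let $V$ be the image of $\mathrm{RM}(1,m)$ under $\pi$. We would show $V=\mathrm{RM}(1,m)$.
\begin{enumerate}
\item Every word of $V$ other than $\mathbf{0}$ and $\mathbf{1}$ is balanced.
\item If $V\neq \mathrm{RM}(1,m)$, then $V\cap \mathrm{RM}(1,m)$ has dimension $m$ and contains $\mathbf{1}$, so it is spanned by $\mathbf{1}$ and the linear forms $u\cdot b$ with $u$ in some hyperplane $H$.
\item The remaining words of $V$ then have the form $f+\ell_0+\ell$ with $\ell$ in that subspace, and they are balanced. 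Hence $W_f$ vanishes on a whole coset $a+H$, which forces $S_f$ into the other coset of $H$.
\item Then $S_f$ is contained in an affine hyperplane, contradicting Lemma \ref{lem4.3}(iii).
\end{enumerate}
So $\pi$ preserves $\mathrm{RM}(1,m)$. Since $\Aut(\mathrm{RM}(1,m))$ is the affine group, $\pi(x)=xA+a$. Finally $f\circ\pi$ lies in the code with $s=1$, which gives exactly that $f(x)+f(xA+a)$ is affine. The converse inclusion is direct.

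For (iii) and (iv), we would specialize the $(\Rightarrow)$ parts of Theorems \ref{thmmm} and \ref{cor5} to $g=f$. In (iii), a permutation of $D_f$ preserving $\widetilde{\mathcal{C}}_{D_f}$ was shown there to extend to an affine $\pi$ with $\pi D_f=D_f$, i.e.\ $f=f\circ\pi$. The converse holds because affine maps preserve the code. In (iv), the proof of Theorem \ref{cor5} shows that the block map $\sigma$ is linear and the point map $\pi$ is affine with $f(\pi x)=f(x)$. Matching $b\cdot x=\sigma b\cdot(\pi x+\pi\mathbf{0})$ then identifies $\sigma=(\bar\pi^{-1})^{*}$. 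The main obstacle is step 2 of (i): linearizing $\pi$ and $\sigma$ simultaneously while $g$ is defined only on $S_f$, and tracking the constant $\varepsilon$ correctly.
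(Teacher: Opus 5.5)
Your proposal is correct. Parts (i), (iii) and (iv) follow the paper's route. The paper also linearizes the automorphism identity using the spanning property of translates of $S_f$: after arranging $\mathbf{0}\in S_f$, it substitutes the block $\mathbf{0}$ and the point $\mathbf{0}$ to reach $x\cdot y=\bar\pi x\cdot\bar\sigma y$. It then obtains (iii) and (iv) by specializing the proofs of Theorems \ref{thmmm} and \ref{cor5}, exactly as you do. Your treatment of (i) is, if anything, more complete, since you argue injectivity and surjectivity, which the paper only asserts.

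The real difference is in (ii). The paper takes affineness of a code automorphism from the proof of \cite[Theorem 9]{EP}. You instead show that $\mathrm{RM}(1,m)$ is the unique $(m+1)$-dimensional subcode of $\mathcal{C}_{\mathbb{AD}_f}$ containing $\mathbf{1}$ whose other nonzero words are all balanced. Hence it is stabilized by every automorphism, and you then use $\Aut(\mathrm{RM}(1,m))=\mathrm{AGL}(m,2)$. Your dimension count is sound, and so is the argument that $W_f$ would otherwise vanish on a coset of a hyperplane and confine $S_f$ to an affine hyperplane. This buys a self-contained proof that shows exactly where the absence of nonzero linear structures is used. The same argument, applied to two functions, also reproves (iii)$\Rightarrow$(i) of Theorem \ref{thm4.8} without \cite{EP}. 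The paper's version is shorter but rests on that external equivalence theorem.

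There are a few small repairs needed in (i).
\begin{enumerate}
\item In step 2, summing over $b,b',b+b'$ leaves $c(p)$ on the right-hand side, since it occurs three times. You must also add the instance $b=\mathbf{0}$, which gives $(\pi b+\pi b'+\pi(b+b')+\pi\mathbf{0})\cdot(\sigma p+\sigma p_0)=0$.
\item The set that must span is $S_f+\sigma p_0$, not $S_f+p_0$. It does span, because all translates of $S_f$ by elements of $S_f$ have the same span.
\item Invertibility of $\bar\pi$ is automatic because $\pi$ is a bijection; no appeal to distinct labels is needed.
\item Under the stated rule, $[A,p,b,\varepsilon_1]\circ[B,q,b',\varepsilon_2]$ has linear part $AB$. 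The block map of $(\sigma_1,\pi_1)(\sigma_2,\pi_2)$, however, has linear part $A_2A_1$. So the displayed correspondence reverses products, and you should compose it with inversion to get an honest isomorphism. The paper's formulation has the same looseness.
\end{enumerate}
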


\begin{proof}
(i). If $\textbf{0} \notin S_{f}$, then for $f'(x)=f(x)+x \cdot a$ with $a \in S_{f}$, 
we have $\textbf{0} \in S_{f'}$ since $W_{f'}(\textbf{0})=W_{f}(a)$. By Theorem \ref{thm4.8}, $\mathbb{AD}_{f}$ and $\mathbb{AD}_{f'}$ are isomorphic. Therefore, we may assume that $\textbf{0} \in S_{f}$. Let $(\sigma,\pi)\in\Aut(\mathbb{AD}_f)$. 
Then \begin{align}\label{eqn:14}
	f(x)+g(y)+x\cdot y =f(\pi x)+g(\sigma y)+\pi x\cdot \sigma y
	\end{align}
 for all $x\in\mathbb{F}_2^m$ and $y\in S_{f}$. By plugging $x=\textbf{0}$ into (\ref{eqn:14}), we obtain $f(\mathbf{0})+f(\pi\mathbf{0})+\pi\mathbf{0}\cdot \sigma y=g(y)+g(\sigma y)$ for all $y\in S_f$. By plugging $y=\mathbf{0}$ into (\ref{eqn:14}), we obtain $f(x)+f(\pi x)+\pi x\cdot\sigma \mathbf{0}=g(\mathbf{0})+g(\sigma \mathbf{0})$
for all $x\in\mathbb{F}_2^m$. By the above three equations, we obtain
$f(\mathbf{0})+f(\pi\mathbf{0})+x\cdot y+(\pi x+\pi{\mathbf{0}})\cdot(\sigma y+\sigma\mathbf{0})+\pi\mathbf{0}\cdot\sigma\mathbf{0}=g(\mathbf{0})+g(\sigma\mathbf{0})$
for all $x \in \mathbb{F}_{2}^{m} \text{ and } y \in S_{f}$, from which
we have $f(\mathbf{0})+f(\pi\mathbf{0})+\pi\mathbf{0}\cdot\sigma\mathbf{0}=g(\mathbf{0})+g(\sigma\mathbf{0})$ by putting $x=\mathbf{0}$. This implies that 
\begin{align}\label{eqn:17}
x\cdot y=\bar{\pi} x\cdot \bar{\sigma}y
\end{align}
for all $x\in\mathbb{F}_2^m$ and $y\in S_f$, where $\bar{\pi}x=\pi x+\pi\mathbf{0}, \bar{\sigma}y=\sigma y+\sigma\mathbf{0}$. 
Since $S_{f}$ contains a basis of $\mathbb{F}_{2}^{m}$, by \eqref{eqn:17}, it is easy to check that $\bar{\pi}x, \bar{\sigma}x\in\operatorname{GL}_m(\F_2)$. Then $\bar{\pi}x=xA$ and $\bar{\sigma}x=xB$ for some $A,B\in\operatorname{GL}_m(\F_2)$, and $x\cdot y=\bar{\pi}x\cdot\bar{\sigma}y=(xAB^T)\cdot y$ for all $x\in\F^m_2$ and $y \in S_{f}$. So $x(I_m+AB^T)\cdot y=0$, where $I_m$ is the identity matrix of size $m$. As $S_{f}$ contains a basis of $\mathbb{F}_{2}^{m}$, $AB^T=I_m$. It follows that $\pi x=xA+\pi{\bf{0}}$ and $\sigma x=x(A^T)^{-1}+\sigma{\bf{0}}$. Consequently, we have
\begin{align*}
     f(x)&=f(\pi x)+\pi x\cdot \sigma{\bf{0}}+g({\bf{0}})+g(\sigma{\bf{0}})\\
  &=f(xA+\pi{\bf{0}})+xA\cdot \sigma{\bf{0}}+\pi{\bf{0}}\cdot \sigma{\bf{0}}+g({\bf{0}})+g(\sigma{\bf{0}})\\
 &=[A, \pi{\bf{0}}, \sigma{\bf{0}}A^{T}, \pi{\bf{0}}\cdot \sigma{\bf{0}}+g({\bf{0}})+g(\sigma{\bf{0}})]f(x).
\end{align*}
This shows that $[A, \pi{\bf{0}}, \sigma{\bf{0}}A^{T}, \pi{\bf{0}}\cdot \sigma{\bf{0}}+g({\bf{0}})+g(\sigma{\bf{0}})] \in \operatorname{GB}_m(\F_2)_f$. Thus the correspondence $$(\sigma,\pi)\mapsto [A, \pi{\bf{0}}, \sigma{\bf{0}}A^{T}, \pi{\bf{0}}\cdot \sigma{\bf{0}}+g({\bf{0}})+g(\sigma{\bf{0}})]$$ gives an isomorphism from $\Aut (\mathbb{AD}_f)$ to $\Aut(f)$.

(ii). We first prove that the set $H$ of {affine} permutations $\sigma$ of $\mathbb{F}_{2}^{m}$ such that $f(x)+f(\sigma x)$ is an affine function forms a group. Obviously the identity element is in $H$. If $\sigma_{1}, \sigma_{2} \in H$, then $f(x)+f(\sigma_{1}x)={a_{1} \cdot x+a_{2}}$, $f(x)+f(\sigma_{2}x)={b_{1} \cdot x+b_{2}}$ for some {$a_{1}, b_{1} \in \mathbb{F}_{2}^{m}$ and $a_{2}, b_{2}\in \mathbb{F}_{2}$}. 
We have \begin{eqnarray*}
&f(x)+f((\sigma_{1}\circ \sigma_{2})x)=f(x)+f(\sigma_{2}x)+a_{1}\cdot \sigma_{2}x+a_{2}\\
&=f(x)+f(x)+b_{1} \cdot x+b_{2}+a_{1} \cdot \sigma_{2}x+a_{2}\\
&=a_{1} \cdot \sigma_{2}x+b_{1} \cdot x+a_{2}+b_{2}
\end{eqnarray*} 
is an affine function. Thus $H$ is a group.
{If $\sigma \in \Aut(\mathcal{C}_{\mathbb{AD}_{f}})$, then by the proof of \cite[Theorem 9]{EP}}, $\sigma$ is an affine permutation of $\mathbb{F}_{2}^{m}$ and $f(x)=f(\sigma x)+b' \cdot \sigma x +\varepsilon'$ for some $b' \in \mathbb{F}_{2}^{m}$ and $\varepsilon' \in \mathbb{F}_{2}$. Denote $\sigma x=xA+p$ for some $A \in \mathrm{GL}_{m}(\mathbb{F}_{2})$ and $p \in \mathbb{F}_{2}^{m}$. Then $f(x)=f(xA+p)+b' A^{T} \cdot x+b' \cdot p+\varepsilon'$ and $[A, p, b'A^{T}, b' \cdot p+\varepsilon'] \in \mathrm{GB}_{m}(\mathbb{F}_{2})_{f}$. If $[A, p, b, \varepsilon] \in \mathrm{GB}_{m}(\mathbb{F}_{2})_{f}$, then $f(x)=f(xA+p)+b \cdot x+\varepsilon=f(\sigma x)+b (A^{T})^{-1} \cdot \sigma x+b (A^{T})^{-1} \cdot p+\varepsilon$, where $\sigma x=xA+p$. Since $\sigma$ is an affine function of $\mathbb{F}_{2}^{m}$, then $u \cdot \sigma x$ is an affine Boolean function for any $u \in \mathbb{F}_{2}^{m}$, and thus $u \cdot \sigma x=\pi u \cdot x+\varepsilon_{u}$ for some $\pi u \in \mathbb{F}_{2}^{m}$ and $\varepsilon_{u} \in \mathbb{F}_{2}$. If $\pi u=\pi u'$, then $(u+u') \cdot \sigma x=\varepsilon_{u}+\varepsilon_{u'}$ is a constant function. Since $\sigma $ is a permutation, we obtain $u=u'$ (otherwise, $(u+u') \cdot \sigma x$ is balanced). Thus $\pi$ is a permutation of $\mathbb{F}_{2}^{m}$, and $u \cdot x=\pi^{-1}(u) \cdot \sigma x+\varepsilon_{\pi^{-1}(u)}$ for any $u \in \mathbb{F}_{2}^{m}$. Therefore $\sigma \in \Aut(\mathcal{C}_{\mathbb{AD}_{f}})$. By the above arguments, 
   $ \Aut(\mathcal{C}_{\mathbb{AD}_{f}})
   =\{\sigma: \sigma x=xA+p, x \in \mathbb{F}_{2}^{m}, \text{ where } [A, p, b, \varepsilon] \in \mathrm{GB}_{m}(\mathbb{F}_{2})_{f} \text{ for some } (b, \varepsilon)\in\mathbb{F}^m_2\times\mathbb{F}_2\}
    =\{\sigma \text{ is an affine permutation of } \mathbb{F}_{2}^{m}:  f(x)+f(\sigma x) \text{ is an affine Boolean function of $\F^m_2$}\}.$

(iii). By the proof of Theorem \ref{thmmm}, it is easy to see that 
    $ \Aut(\widetilde{\C}_{D_f})=\{\pi|_{D_{f}}: \pi \text{ is an affine permutation over } \mathbb{F}_{2}^{m} \text{ with } f(x)=f(\pi x)\}.$

(iv). By the proof of Theorem \ref{cor5}, 
  $ \Aut(\mathbb{D}_f)=\{(\pi |_{D_{f}}, (\bar{\pi}^{-1})^{*}|_{{\mathbb{F}_{2}^{m}}^{*}}): \pi \text{ is an affine permutation over } \mathbb{F}_{2}^{m} \text{ with } f(x)=f(\pi x)\},$
     where $\bar{\pi} x=\pi x+\pi {\bf{0}}$, and $(\bar{\pi}^{-1})^{*}$ is the adjoint permutation of $\bar{\pi}^{-1}$.
     \end{proof}


Now we compute the automorphism groups $\Aut(\mathbb{AD}_f), \Aut(\mathbb{TD}_f),$ $\Aut(\mathcal{C}_{\mathbb{AD}_{f}})$ and $\Aut(\widetilde{\C}_{D_f})$ when $f$ is a quadratic plateaued function in $\mathcal{BF}_m$ with nonzero linear structure. Observe that every quadratic Boolean function in $\mathcal{BF}_{m}$ is plateaued \cite{M}, and the function $f$ has no nonzero linear structure if and only if $f$ is bent. Thus every quadratic plateaued function in $\mathcal{BF}_{2m}$ with no nonzero linear structure is equivalent to $f(x,y)=x\cdot y$ for $x,y\in \mathcal{BF}_{m}$ \cite{MS}.

The symplectic group $\operatorname{Sp}_{2m}(\mathbb{F}_2)$ over $\mathbb{F}_2$ consists of 
 all $2m$ by $2m$ matrices $M$ such that $M^TJM=J$ and 
 \[
 J=\begin{pmatrix}
O & I_m & \\
 I_m & O 
\end{pmatrix}.    
 \]
 Here, $I_m$ is the $m\times m$ identity matrix.

\begin{corollary} \label{cor6}
{\rm
Let $f$ be either a quadratic bent function in $\mathcal{BF}_{m}$ with $f^{*} ({\bf{0}})=0$ or an M-M bent function in $\mathcal{BF}_{m}$ with $\deg(g)=3$ and $f^{*} ({\bf{0}})=0$.  Then $\Aut(f), \Aut(\mathbb{TD}_f)$, $\Aut(\mathbb{AD}_f)$, and $\Aut(\mathcal{C}_{\mathbb{AD}_{f}})$ are all isomorphic to $\operatorname{Sp}_m({\mathbb{F}_2})\rtimes\mathbb{F}^m_2$.

}
 \end{corollary}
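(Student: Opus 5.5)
I would reduce everything to the single group $\Aut(f)$ and then compute that group concretely for the two families. By Theorem~\ref{thm10}(i), since a bent function has no nonzero linear structure, $\Aut(\mathbb{AD}_f)\cong\Aut(f)$. By Theorem~\ref{thm10}(ii), $\Aut(\mathcal{C}_{\mathbb{AD}_f})$ is the group of affine permutations $x\mapsto xA+a$ such that $f(x)+f(xA+a)$ is affine. The map $[A,p,b,\varepsilon]\mapsto (x\mapsto xA+p)$ sends $\Aut(f)$ onto this group. It is injective because $b,\varepsilon$ are determined by $(A,p)$ (if $f(xA+p)+b\cdot x+\varepsilon=f=f(xA+p)+b'\cdot x+\varepsilon'$, then $b=b'$ and $\varepsilon=\varepsilon'$). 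Hence $\Aut(\mathcal{C}_{\mathbb{AD}_f})\cong\Aut(f)$.

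For $\mathbb{TD}_f$ I would use the hypothesis $f^*(\mathbf 0)=0$ as follows. In the quadratic case, Proposition~\ref{thm3.8} gives $\mathbb{TD}_f\cong\mathbb{AD}_f$. In the M-M case with $\deg g=3$, Theorem~\ref{thmB3} shows that $\mathbb{TD}_f$ has the TSDP. Lemma~\ref{thm3.7} then says $\mathbb{TD}_f$ is isomorphic to $\mathbb{AD}_h$ for some bent $h$. I would take $h=f$ if I can exhibit the identification directly from the computation in Theorem~\ref{thmB3}: there $f(x+a)+f(x+c)+f(x)$ is a translate of $f$ plus an affine term, which is the addition structure. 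Failing that, equality of $2$-ranks and Lemma~\ref{thm3.7}(iv) give the isomorphism with some addition design. Either way $\Aut(\mathbb{TD}_f)\cong\Aut(\mathbb{AD}_h)\cong\Aut(h)$.

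It remains to show $\Aut(f)\cong\operatorname{Sp}_m(\mathbb F_2)\rtimes\mathbb F_2^m$. For quadratic bent $f$, I may pass to $f(x,y)=x\cdot y$ by equivalence, since equivalent functions have conjugate stabilizers in $\operatorname{GB}_m(\F_2)$.
\begin{itemize}
\item The condition $[A,p,b,\varepsilon]f=f$ requires that the quadratic part be preserved, $f(zA)+f(z)$ affine. Because the associated alternating form is $B(z,w)=zJw^T$, this forces $AJA^T=J$, i.e.\ $A^T\in\operatorname{Sp}_m$.
\item Conversely, for each such $A$ and each translation $p$, the linear and constant discrepancies $f(zA+p)+f(z)$ are absorbed by a unique $(b,\varepsilon)$.
\end{itemize}
So $\Aut(f)$ is parametrized by pairs $(A,p)\in\operatorname{Sp}_m\times\F_2^m$. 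The composition law $[A,p,\dots]\circ[B,q,\dots]=[AB,pB+q,\dots]$ shows that this is the semidirect product, with $\F_2^m$ normal.

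For the M-M case with cubic $g$, the plan is to show the stabilizer has the same shape. I expect this to be the main obstacle, because cubic terms are not preserved by arbitrary symplectic maps. What I would try first is to avoid computing $\Aut(f)$ directly: since $\mathbb{TD}_f$ is TSDP with the same parameters, I would identify $\mathbb{AD}_f$ with a translation-type design and exploit that translations together with the dual linear action both act. If that route fails, I would fall back on Kantor's computation of the automorphism group of a symplectic design~\cite{K}, which is $\operatorname{Sp}_m\rtimes\F_2^m$-type up to isomorphism. I would then conclude via Lemma~\ref{thm3.7} and Theorem~\ref{thm10}(i).
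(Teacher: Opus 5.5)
\textbf{Quadratic case.} Your argument here is correct, and it is more self-contained than the paper's. The paper combines Proposition~\ref{thm3.8} with Theorem~\ref{thm10} and then cites \cite[Theorem 2.3]{MPP} and Kantor \cite{K} for the symplectic design. You instead compute the stabilizer of $x\cdot y$ directly, as pairs $(A,p)$ with $AJA^T=J$ and $(b,\varepsilon)$ forced. You also justify $\Aut(\mathcal{C}_{\mathbb{AD}_f})\cong\Aut(f)$ via Theorem~\ref{thm10}(ii), which the paper never addresses.

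\textbf{The gap: the cubic Maiorana--McFarland clause.} This case cannot be closed, because the claim is false there for $\Aut(f)$, $\Aut(\mathbb{AD}_f)$ and $\Aut(\mathcal{C}_{\mathbb{AD}_f})$. Your own remark about cubic terms is the obstruction. If $[A,p,b,\varepsilon]f=f$, the constant third derivatives $T(u,v,w)=D_uD_vD_wf$ satisfy $T(uA,vA,wA)=T(u,v,w)$. Hence $A$ stabilizes the radical of $T$.

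Take $m=6$ and $f(x,y)=x\cdot y+y_1y_2y_3$, which is bent with $f^*(\mathbf{0})=g(\mathbf{0})=0$. The radical is $\{(x,\mathbf{0})\}$, whose stabilizer in $\operatorname{GL}_6(\F_2)$ has order $2^9\cdot168^2$. Moreover $D_{(a,b)}f$ is affine only for $b=\mathbf{0}$, so $\Aut(f)$ contains just $2^3$ elements with $A=I$. Since $[A,p,b,\varepsilon]\mapsto(A,p)$ is an injective homomorphism, $|\Aut(f)|$ divides $2^{12}\cdot168^2$, which is prime to $5$. But $|\operatorname{Sp}_6(\F_2)\rtimes\F_2^6|=2^{15}\cdot3^4\cdot5\cdot7$. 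By Theorem~\ref{thm10} the same conclusion holds for $\Aut(\mathbb{AD}_f)$ and $\Aut(\mathcal{C}_{\mathbb{AD}_f})$.

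\textbf{Why your fallbacks fail.} Kantor's computation concerns only the symplectic design. $\mathbb{AD}_f$ is not symplectic: by Corollary~\ref{cor4} it determines $f$ up to equivalence, and degree $3$ is an invariant. Likewise $h=f$ is wrong for $\mathbb{TD}_f$, and the unspecified $h$ from Lemma~\ref{thm3.7} does not identify the group.

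\textbf{What actually holds for $\mathbb{TD}_f$.} The translation design is symplectic, with a quadratic $h$. Write $g=c+q$ with $c$ the cubic part, and let $Q_i=\partial c/\partial y_i$ (formal derivative of the ANF). Then $y\cdot Q(y)=c(y)$, and $L=Q(y+\beta)+Q(y)+Q(\beta)$ satisfies $L\cdot y=L\cdot\beta=0$. With $b=(a,\beta)$, the permutation $\phi(x,y)=(x+Q(y),y)$ gives $f(p+b)=h'(\phi p+\phi b)$, where $h'(x,y)=x\cdot y+q(y)$ and $h'^*(\mathbf{0})=q(\mathbf{0})=f^*(\mathbf{0})=0$. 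Hence $\mathbb{TD}_f\cong\mathbb{TD}_{h'}\cong\mathbb{AD}_{h'}$ by Proposition~\ref{thm3.8}. So in the cubic case only $\Aut(\mathbb{TD}_f)\cong\operatorname{Sp}_m(\F_2)\rtimes\F_2^m$ is true.

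The paper's proof does not cover this case either. Its first step relies on Proposition~\ref{thm3.8}, which requires $f$ to be quadratic.
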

 \begin{proof}
   By Proposition \ref{thm3.8} and Theorem  \ref{thm10},   $\Aut(\mathbb{TD}_f)$, $\Aut(\mathbb{AD}_f)$ and $\Aut(f)$ are isomorphic. It is known that the design $\mathbb{AD}_f$ of a quadratic bent function $f$ is the symplectic design  with parameters $(\dag)$ \cite[Theorem 2.3]{MPP} and $\Aut(\mathbb{AD}_f)$ is isomorphic to $\operatorname{Sp}_m({\mathbb{F}_2})\rtimes\mathbb{F}^m_2$ \cite{K}. 
 \end{proof}

In Theorem \ref{thm10}, we have shown  $\Aut(\widetilde{\C}_{D_f})$ and $\Aut(\mathbb{D}_f)$. We just wonder the following problem:

\begin{problem}{\rm \label{openpro2}
Determine the automorphism groups $\Aut(\widetilde{\C}_{D_f})$ and $\Aut(\mathbb{D}_f)$ with a simpler form for some special bent functions. 

}
    
\end{problem}



\section{Non-isomorphic designs with old ones}\label{s7}

It is a difficult problem to determine whether or not two designs with identical parameters are {isomorphic}. The following two Theorems \ref{thm4.4} and \ref{thm14} yield new Boolean functions producing two families of a 2-design whose parameters coincide with those of the complement of a point-hyperplane design and a TSDP design, despite being non-isomorphic.

\begin{theorem}\label{thm4.4}{\rm 
	Let $f$ be a bent function in $\mathcal{BF}_m$. For each $b\in \mathbb{F}^m_2$, let $f_b(x)=f(x+b)+f(x)+f(b)$. Then the following statements hold.
	\begin{itemize}
		\item[(i)] A pair $(\mathbb{F}^{m*}_2,\{f_b\in\mathcal{BF}_m:b\in\mathbb{F}^{m*}_2\})$ induces a symmetric 2-design with  parameters  $(2^m-1,(2^{m}-1+(-1)^{f(\mathbf{0})})/2,(2^{m-1}-1+(-1)^{f(\mathbf{0})})/2)$. 
		\item[(ii)] The 2-design in Remark {\ref{rmk3.4}} (with the same parameters as the complement of a point-hyperplane design) and Theorem \ref{thm4.4} {(i)} have the same parameters when $f(\mathbf{0})={0}$. {In this case, they} are {isomorphic with the identity map on point sets and a permutation $\pi$ of $\mathbb{F}^{m*}_2$ on block sets if and only if} $f$ is quadratic. Here, if we extend $\pi$ to $\mathbb{F}^m_2$ with $\pi({\bf{0}})={\bf{0}}$, then $\pi$ is a linear permutation of $\mathbb{F}^m_2$.

        Further, if the translation design derived from $f$ is a non-TSDP design, then they are non-isomorphic. 
	\end{itemize}
	}
\end{theorem}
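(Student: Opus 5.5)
For (i) I will verify the three conditions of Corollary \ref{cor3.2} with $P=B=\mathbb{F}^{m*}_2$, where every sum is expressed through the auto-correlation $C_f(a)=2^m\delta_{\mathbf{0},a}$. Note that $f_b(x)=f(x+b)+f(x)+f(b)$ is symmetric in $x$ and $b$, and that $f_b(\mathbf{0})=f(\mathbf{0})$ for every $b$.

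\textbf{Block sizes.} For $b\neq\mathbf{0}$,
\[
W_{\mathbb{F}^{m*}_2,f_b}(\mathbf{0})=(-1)^{f(b)}C_f(b)-(-1)^{f_b(\mathbf{0})}=-(-1)^{f(\mathbf{0})}.
\]
This is constant, so every block has size $(2^m-1+(-1)^{f(\mathbf{0})})/2$.

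\textbf{Replication numbers.} By the symmetry of $f_b(x)$ in $x$ and $b$, the sum $\sum_{b\neq\mathbf{0}}(-1)^{f_b(p)}$ equals the same value $-(-1)^{f(\mathbf{0})}$.

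\textbf{Pair counts.} For distinct nonzero $p,q$,
\[
\sum_{b\neq\mathbf{0}}(-1)^{f_b(p)+f_b(q)}=(-1)^{f(p)+f(q)}\Big(\sum_{b}(-1)^{f(b+p)+f(b+q)}-(-1)^{f(p)+f(q)}\Big).
\]
The inner full sum is $C_f(p+q)=0$, so the whole expression is $-1$.

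Corollary \ref{cor3.2} then gives a 2-design. Since the number of blocks equals the number of points, it is symmetric, and $\lambda$ follows from $4\lambda=|B|-2\cdot(\text{row sum})+(-1)$. One point to check: the blocks might coincide. I avoid this by treating the incidence structure with blocks indexed by $b$, as the paper does; simplicity is not claimed.

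For (ii), when $f(\mathbf{0})=0$ the parameters are $(2^m-1,2^{m-1},2^{m-2})$. These agree with Proposition \ref{exam3.3} applied to $\C=\mathbb{F}^m_2$, whose blocks are $\{x\neq\mathbf{0}:b\cdot x=1\}$. An isomorphism that is the identity on points and $\pi$ on blocks means $f(x+b)+f(x)+f(b)=x\cdot\pi b$ for all nonzero $x,b$. This identity also holds trivially when $x$ or $b$ is $\mathbf{0}$, using $f(\mathbf{0})=0$ and $\pi\mathbf{0}=\mathbf{0}$.

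\textbf{If $f$ is quadratic.} With $f(\mathbf{0})=0$, the map $B(x,b)=f(x+b)+f(x)+f(b)$ is bilinear (the linear terms cancel). It is nondegenerate because $f$ bent has no linear structure. Hence $B(x,b)=x\cdot\pi b$ for a linear permutation $\pi$.

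\textbf{If such $\pi$ exists.} The argument follows the proof of Proposition \ref{thm3.8}. The function $f(x)+f(b)+f(x+b)+x\cdot\pi b$ vanishes identically. The left side has degree $\deg f$ in $(x,b)$ unless $\deg f\le 2$, while $x\cdot\pi b$ is bilinear in $(x,b)$ for each fixed $b$. To see that $\pi$ is linear, use the identity
\[
x\cdot\pi(b+b')=x\cdot\pi b+x\cdot\pi b'+\big[f\text{-terms}\big].
\]
Alternatively, observe that the right side is linear in $x$ for fixed $b$ while the left side is symmetric, so it is also linear in $b$; nondegeneracy then forces $\pi$ to be additive. Linearity of $\pi$ shows the second derivative of $f$ is constant, so $\deg f\le2$, and $\deg f=2$ since $f$ is bent.

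\textbf{The non-TSDP claim.} This is the main obstacle, because it concerns isomorphism under arbitrary point permutations. My plan is a 2-rank argument. The complement of the point-hyperplane design has 2-rank $m$ (or $m+1$), since its incidence matrix is $(x\cdot b)$. The design of (i) has incidence matrix $(f(x+b)+f(x)+f(b))$ restricted to $x,b\neq\mathbf{0}$. Adding the rank-$\le2$ corrections $f(x)$ and $f(b)$ relates its rank to that of the translation matrix $(f(x+b))$, which is the rank of $\mathbb{TD}_f$. If $\mathbb{TD}_f$ is not TSDP, Lemma \ref{thm3.7} gives 2-rank greater than $m+2$. I would then show that the rank of the design in (i) exceeds $m+1$, so it differs from the point-hyperplane complement, and isomorphic designs have equal 2-ranks. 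I expect the hardest part to be the careful bookkeeping of the rank change when restricting to nonzero indices and adding the rank-one terms.
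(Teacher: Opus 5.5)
Your treatment of (i), and of the part of (ii) saying the designs are isomorphic via the identity on points if and only if $f$ is quadratic, is correct and matches the paper's computations. In the forward direction, your symmetry argument ($x\cdot\pi b=b\cdot\pi x$, hence $\pi$ is additive) proves that $\pi$ is linear, whereas the paper simply builds linearity into the hypothesis. Drop the garbled first attempt there. Note also that $D_bf(x)=f(b)+x\cdot\pi b$ is affine in $x$ for every $b$, which already gives $\deg f\le 2$.

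For the final non-isomorphism claim you take a different route from the paper.

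\textbf{The paper's route.} It argues structurally. An isomorphism means $f_b(p)=\sigma p\cdot\pi b$, extended by $\sigma\mathbf{0}=\pi\mathbf{0}=\mathbf{0}$. Hence $f_{b_1}+f_{b_2}+f_{b_3}=f_{b_4}$ with $b_4=\pi^{-1}(\pi b_1+\pi b_2+\pi b_3)$. Since $f_b(p)=f(p+b)+f(p)+f(b)$, this says $f(p+b_1)+f(p+b_2)+f(p+b_3)$ equals $f(p+b_4)$ up to a constant, so $\mathbb{TD}_f$ would be TSDP. This is elementary and needs no rank theory.

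\textbf{Your route.} Using 2-rank as an isomorphism invariant also works, but it rests on the McGuire--Ward bound and Lemma~\ref{thm3.7}, and as stated your target is wrong. The complement of the point-hyperplane design has 2-rank exactly $m$. Its incidence matrix is $X^{T}X$, where $X$ is the $m\times(2^m-1)$ matrix whose columns are the nonzero vectors, and the rows and columns indexed by $e_1,\dots,e_m$ give an identity submatrix. So you need rank $\ge m+1$, not $>m+1$; the latter does not follow from your estimate.

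\textbf{The missing bookkeeping.}
If $f(\mathbf{0})=1$, the parameters already differ, so the designs are trivially non-isomorphic.

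If $f(\mathbf{0})=0$, set $M=(f(x+b)+f(x)+f(b))_{x,b\in\mathbb{F}_2^m}$. Its row and column indexed by $\mathbf{0}$ vanish, so restricting to nonzero indices does not change the rank. Moreover $M=T+u\mathbf{1}^{T}+\mathbf{1}u^{T}$, with $T=(f(x+b))$ and $u=(f(x))_x$.

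When $f^*(\mathbf{0})=0$, $T$ has parameters $(\dag)$. Non-TSDP then gives $\operatorname{rank}T\ge m+3$, hence $\operatorname{rank}M\ge m+1$.

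When $f^*(\mathbf{0})=1$, Lemma~\ref{thm3.7} does not apply to $T$ itself. Apply it instead to the complement $J+T$, which has parameters $(\dag)$ and is TSDP if and only if $T$ is. Writing $M=(J+T)+(u+\mathbf{1})\mathbf{1}^{T}+\mathbf{1}u^{T}$ then gives the same bound.
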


\begin{proof}
	{{(i).} For any $b\in\mathbb{F}^{m*}_2$, we have that
	\begin{multline*}
		W_{\mathbb{F}^{m*}_2,f_b}(\mathbf{0})=\sum_{p\in \mathbb{F}^{m*}_2}(-1)^{f(b+p)+f(b)+f(p)}\\
		=-(-1)^{f(\mathbf{0})}+(-1)^{f(b)}\sum_{p\in \mathbb{F}^m_2}(-1)^{f(b+p)+f(p)}\\
		=-(-1)^{f(\mathbf{0})}+(-1)^{f(b)}C_f(b)=-(-1)^{f(\mathbf{0})}
	\end{multline*}
	and for any two distinct points $p,q\in\mathbb{F}^{m*}_2$, we have
	\begin{multline*}
		\sum_{b\in \mathbb{F}^{m*}_2}(-1)^{f_b(p)+f_b(q)}\\=\sum_{b\in \mathbb{F}^{m*}_2}(-1)^{f(b+p)+f(b)+f(p)+f(b+q)+f(b)+f(q)}\\
		=-1+(-1)^{f(p)+f(q)}C_f(p+q)=-1.
	\end{multline*} 
	The parameters are evaluated by those three values of summations using  Corollary \ref{cor3.2}.
	
	{(ii).} $(\Longleftarrow)$: Assume that $f$ is quadratic. We can write as $f(p+b)+f(p)+f(b)=p\cdot\pi b+g(b)$ for some function $\pi$ from $\mathbb{F}_2^m$ to itself and Boolean function $g\in\mathcal{BF}_m$. By plugging $p={\bf{0}}$, we obtain $0=f({\bf{0}})=g(b)$. It follows that \begin{align}\label{eqn:(8)}
		f(p+b)+f(p)+f(b)=p\cdot\pi b
	\end{align}
	for all $p,b\in\mathbb{F}^m_2$.
    Due to the proof of Proposition \ref{thm3.8} (i) $\Longrightarrow$ (ii), we can obtain that $\pi$ is a linear permutation of $\F^m_2$, and the result follows. }

	{($\Longrightarrow$): Assume that $f(p+b)+f(p)+f(b)=p\cdot\pi b$ for some permutation $\pi$ of $\mathbb{F}_2^{m*}$ such that if we extend $\pi$ to $\mathbb{F}^m_2$ with $\pi({\bf{0}})={\bf{0}}$, then $\pi$ is a linear permutation of $\mathbb{F}^m_2$.} We define the Boolean function $g$ in $\mathcal{BF}_{2m}$ as
	\begin{align}\label{eqn:(12)}
		{g(x,y)=f(x)+f(y)+f(x+y)+x\cdot\pi y}
	\end{align}
	{for all $x,y\in\mathbb{F}_2^{m*}.$}
	Then $g(p,b)=0$ for all $p,b\in\mathbb{F}^m_2$ by {assumptions}, and so the Boolean $g$ is identically zero, namely $g\equiv 0$. Then $f$ is quadratic using Eq. (\ref{eqn:(12)}) and this proves the first part.

    {For the second part, assume that the $2$-design in Remark \ref{rmk3.4} and Theorem \ref{thm4.4} are isomorphic for some bent function $f$. That is, $f(\mathbf{0})=0$ and $f(p+b)+f(p)+f(b)=\sigma p\cdot\pi b$ for all $p,b\in\mathbb{F}_2^{m*}$ and some permutations $\sigma$ and $\pi$ of $\mathbb{F}_2^{m*}$. By letting $\sigma\mathbf{0}=\pi\mathbf{0}=\mathbf{0}$, we may extend the relation to for any $p,b\in\mathbb{F}_2^m$ and permutation maps $\sigma$ and $\pi$ of $\mathbb{F}_2^m$. Therefore two designs $\mathbb{D}_1=(\mathbb{F}_2^m,{B^{f_b}:b\in \mathbb{F}_2^m})$ associated with Boolean functions $f_b(p)=f(b)+f(p)+f(b+p)$ and $\mathbb{D}_2=(\mathbb{F}_2^m,{B^{g_b}:b\in \mathbb{F}_2^m})$ associated with Boolean functions $g_b(p)=b\cdot p$ are isomorphic. Observe that $f_b(p)=g_{\pi b}(\sigma p)$.} {Assume that the translation design derived from a bent function $f$ is not a TSDP design. Choose three distinct $b_1,b_2,b_3\in\mathbb{F}_2^m$ so that there does not exist $b_4\in \mathbb{F}_2^m$ such that $f_{b_1}(p)+f_{b_2}(p)+f_{b_3}(p)\fallingdotseq f_{b_4}(x)$. However,}
	\begin{eqnarray*}
	&&f_{b_1}(p)+f_{b_2}(p)+f_{b_3}(p)\\
	&=&g_{\pi b_1}(\sigma p)+g_{\pi b_2}(\sigma p)+g_{\pi b_3}(\sigma p)\\
	&=&\sigma p\cdot\pi b_1+\sigma p\cdot\pi b_2+\sigma p\cdot\pi b_3\\
		&=&\sigma p\cdot(\pi b_1+\pi b_2+\pi b_3)=g_{\pi b_1+\pi b_2+\pi b_3}(\sigma p)\\
		&=&f_{{\pi^{-1}}(\pi b_1+\pi b_2+\pi b_3)}(p),
	\end{eqnarray*}
	{which leads to a contradiction.}
\end{proof}

For a Boolean function $f$ in $\mathcal{BF}_m,$ we define its polarization $B_f:\mathbb{F}^m_2\times\mathbb{F}^m_2\rightarrow\mathbb{F}_2$ as $$B_f(x,y)=f(x+y)+f(x)+f(y)+f({\bf{0}}).$$

The following lemma offers an alternative proof of the result \cite[Theorem 3.8]{WFQ}, providing exact statements of the underlying assumptions. 
\begin{lemma} \label{thm6}
{\rm \cite{WFQ}
 Let $f$ be a Boolean function in $\mathcal{BF}_m$. Then the following statements are equivalent.
 \begin{itemize}
     \item [(i)] $f$ is a bent function such that the rank of $(f(x+y))_{x,y\in\mathbb{F}^m_2}$ is $m+2$;
     \item[(ii)] There exist permutations $\sigma, \pi$ of $\mathbb{F}^m_2$ such that $B_f(x,y)=\sigma x\cdot\pi y$ for all $x,y\in\mathbb{F}^m_2$ and both $f\circ\sigma^{-1}$ and $f\circ\pi^{-1}$ are non-affine.
     \end{itemize}
}
\end{lemma}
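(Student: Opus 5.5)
The plan is to view $M=(f(x+y))_{x,y\in\mathbb{F}^m_2}$ as the incidence matrix of the translation design $\mathbb{TD}_f$ and pass to its $\mathbb{F}_2$-row space. Note first that $f(x+y)=B_f(x,y)+f(x)+f(y)+f(\mathbf{0})$. Hence, up to adding the rank-one contributions $(f(x))_{x,y}$ and $(f(y))_{x,y}$ and the all-ones matrix, the row space of $M$ is governed by the matrix $(B_f(x,y))_{x,y}$.

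\textbf{(ii) $\Rightarrow$ (i).} Substitute $B_f(x,y)=\sigma x\cdot\pi y$. Each row $(f(x+y))_{y}$ then equals $(t,u,s)G$, where
\[
G=\begin{pmatrix}\cdots & 1 & \cdots\\ \cdots & (\pi y)^T & \cdots\\ \cdots & f(y) & \cdots\end{pmatrix}_{y\in\mathbb{F}^m_2}
\]
and $(t,u,s)=(f(x)+f(\mathbf{0}),\sigma x,1)$. By Lemma~\ref{lem2} applied to $f$ and $\pi$, the hypothesis that $f\circ\pi^{-1}$ is non-affine gives $\operatorname{rank}G=m+2$. By Lemma~\ref{lem3.6}, the row space of $M$ is $\{vG: v\in\langle\{(f(x)+f(\mathbf{0}),\sigma x,1)\}\rangle\}$. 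Applying Lemma~\ref{lem2} again, now to $f$ and $\sigma$, shows that this span is all of $\mathbb{F}_2\times\mathbb{F}_2^m\times\mathbb{F}_2$, since $f\circ\sigma^{-1}$ is non-affine. (The first coordinate is $f(x)$ rather than $f(x)$ shifted by a constant, but the constant is absorbed because the vector $(1,\mathbf{0},0)$-type combinations are in the span.) Therefore $\operatorname{rank}M=m+2$.

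For bentness, use that $B_f(x,y)=\sigma x\cdot\pi y$ with $\pi$ a permutation. Then
\[
C_f(a)=(-1)^{f(a)+f(\mathbf{0})}\sum_{x}(-1)^{\sigma x\cdot\pi a}=(-1)^{f(a)+f(\mathbf{0})}\sum_{z}(-1)^{z\cdot\pi a},
\]
which equals $2^m\delta_{\pi a,\mathbf{0}}$ in absolute value. Setting $y=\mathbf{0}$ in $B_f$ gives $0=\sigma x\cdot\pi\mathbf{0}$ for all $x$, so $\pi\mathbf{0}=\mathbf{0}$, and hence $C_f(a)=2^m\delta_{\mathbf{0},a}$. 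So $f$ is bent.

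\textbf{(i) $\Rightarrow$ (ii).} This is the harder direction. Since $f$ is bent, $\mathbb{TD}_f$ is a symmetric design with parameters of type $(\dag)$ up to complementation. By Lemma~\ref{thm3.7}, rank $m+2$ forces the TSDP and an isomorphism with an addition design $\mathbb{AD}_h$ of some bent $h$. Concretely, by Lemma~\ref{lem.SDP} and the labelling in the proof of Theorem~\ref{thm4.7}, there are permutations with $f(x+y)=\sigma x\cdot\pi y+\alpha(x)+\beta(y)$ for suitable Boolean functions $\alpha,\beta$.

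Rather than rely on that labelling, I would argue directly with codes. The row space $\mathcal{C}$ of $M$ has dimension $m+2$ and contains the rows $r_x=(f(x+y))_y$. The translations $y\mapsto y+a$ act on $\mathcal{C}$, and the sums $r_x+r_{\mathbf{0}}$ span a subspace invariant under this action. From the weight structure of bent translates ($r_x+r_{x'}$ has weight $2^{m-1}$ for $x\neq x'$), I would show that the $2^m$ vectors $r_x+r_{\mathbf{0}}+(\text{a fixed coset correction})$ form a copy of $\mathbb{F}_2^m$ linear in $x$. Concretely, one wants an injective labelling $x\mapsto\sigma x$ such that $r_x+r_{\mathbf{0}}+f(x)+f(\mathbf{0})$, restricted as a function of $y$ with the $f(y)$ part removed, is linear in $\sigma x$. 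This gives $B_f(x,y)=\sigma x\cdot\pi y$, with $\pi$ read off from a basis. Swapping the roles of $x$ and $y$ (using $B_f$ symmetric) handles $\pi$ similarly.

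Finally, the non-affineness of $f\circ\sigma^{-1}$ and $f\circ\pi^{-1}$ follows from Lemma~\ref{lem2}: were either affine, the spanning computation above would give rank at most $m+1$, a contradiction.

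The main obstacle is producing the bilinear form $\sigma x\cdot\pi y$ from the rank condition alone, in particular proving that the $x$-labelling is a bijection onto $\mathbb{F}_2^m$. I would try to get this from the TSDP labelling of \cite[Theorem 9.11]{B}, as used in Theorem~\ref{thm4.7}.
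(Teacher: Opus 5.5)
There is a genuine gap, and it sits in the central step of (i)$\Rightarrow$(ii).

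\textbf{What works.} Your (ii)$\Rightarrow$(i) is essentially the paper's argument. You write each row as $(f(x)+f(\mathbf{0}),\sigma x,1)G$ and apply Lemma~\ref{lem2} twice: once for $\pi$ (rank of $G$) and once for $\sigma$ (span of the coefficient vectors). Your autocorrelation computation is the same as the paper's observation that $D_yf$ is balanced.

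Your remark that the constant is ``absorbed'' is circular as phrased. The clean reason is that $(t,u,s)\mapsto(t+f(\mathbf{0})s,u,s)$ is an invertible linear map carrying $(f(x),\sigma x,1)$ to $(f(x)+f(\mathbf{0}),\sigma x,1)$. Alternatively, put $f(\mathbf{0})$ into the last row of $G$, as the paper does.

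Your derivation of non-affineness from the rank bound via Lemma~\ref{lem2} is valid. It is a neat alternative to the paper's argument, which shows instead that the auxiliary bent function would be affine.

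\textbf{The gap.} You never actually establish $B_f(x,y)=\sigma x\cdot\pi y$ with $\sigma,\pi$ permutations. You wrote down the right intermediate form $f(x+y)=\sigma x\cdot\pi y+\alpha(x)+\beta(y)$, then dropped it for a code-theoretic sketch (``I would show\dots'', ``one wants\dots''). That sketch is not an argument, and you name bijectivity of the labelling as an unresolved obstacle.

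There is no obstacle. Lemma~\ref{thm3.7}(ii) gives an isomorphism of $\mathbb{TD}_f$ with some addition design $\mathbb{AD}_h$. An isomorphism is by definition a pair of bijections, so
\[
f(x+y)=h(\sigma x)+h^*(\pi y)+\sigma x\cdot\pi y\quad\text{for all }x,y\in\mathbb{F}_2^m,
\]
with $\sigma,\pi$ permutations of $\mathbb{F}_2^m$. If $f^*(\mathbf{0})=1$ you apply this to the complementary design, and the right side acquires an extra constant $1$.

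Now substitute into the four-term polarization
\[
B_f(x,y)=f(x+y)+f(x+\mathbf{0})+f(\mathbf{0}+y)+f(\mathbf{0}+\mathbf{0}).
\]
All separable terms and constants cancel, leaving $B_f(x,y)=(\sigma x+\sigma\mathbf{0})\cdot(\pi y+\pi\mathbf{0})$. So the permutations in (ii) are the translated maps $\bar\sigma x=\sigma x+\sigma\mathbf{0}$ and $\bar\pi y=\pi y+\pi\mathbf{0}$, not $\sigma,\pi$ themselves. This two-line computation is exactly the paper's proof. With it in place, your rank-based non-affineness argument, applied to $\bar\sigma$ and $\bar\pi$, completes the direction.
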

 \begin{proof}
     
     (i) $\Longrightarrow$ (ii): Since $(f(x+y))_{x,y\in\mathbb{F}^m_2}$ is the point-block incidence matrix of 2-rank $m+2$ with the same parameters $(\dag)$, by {Lemma} \ref{thm3.7}, we have $(f(x+y))_{x,y\in\mathbb{F}^m_2}=(g(\sigma x)+g^*(\pi y)+\sigma x\cdot \pi y)_{x,y\in\mathbb{F}^m_2}$ for some bent function $g$ in $\mathcal{BF}_m$ and some permutations $\sigma, \pi$ of $\mathbb{F}^m_2$. Then $B_f(x,y)=f(x+y)+f(x+{\bf{0}})+f({\bf{0}}+y)+f({\bf{0}})=\bar{\sigma} x\cdot\bar{\pi}y$ for all $x,y\in\mathbb{F}^m_2$, where $\bar{\sigma}x=\sigma x+\sigma{\bf{0}}$ and  $\bar{\pi}x=\pi x+\pi{\bf{0}}$. This proves the first part of (ii). Assume, to the contrary, that $f\circ\bar{\sigma}^{-1}$ is affine. Since $f(x)=g(\sigma x)+g^*(\pi {\bf{0}})+\sigma x\cdot\pi{\bf{0}}$ for all $x\in\mathbb{F}^m_2$, then $f(x)=g(\bar{\sigma} x+\sigma{\bf{0}})+g^*(\pi {\bf{0}})+(\bar{\sigma} x+\sigma{\bf{0}})\cdot\pi{\bf{0}}$, and so $f\circ\bar{\sigma}^{-1}(x)=g(x+\sigma{\bf{0}})+g^*(\pi {\bf{0}})+(x+\sigma{\bf{0}})\cdot\pi{\bf{0}}$. Thus $g$ is affine, which is a contradiction. The same argument shows that  $f\circ\pi^{-1}$ is also non-affine. 
     
     (ii) $\Longrightarrow$ (i): By assumption, the derivative $D_yf(x):=f(x+y)+f(x)$ in direction $y\in\mathbb{F}^{m}_2$ is $D_yf(x)=f(y)+f({\bf{0}})+\sigma x\cdot \pi y$, which is balanced for all $y\in\mathbb{F}^{m*}_2$ by noting $\sigma{\bf{0}}=\pi{\bf{0}}=\bf{0}$. Thus $f$ is bent \cite[Theorem 12]{C1}, \cite{WFQ}. We have
\begin{align*}
  (f_{b}(p))_{b\in \mathbb{F}^m_2}=(f(b+p))_{b\in \mathbb{F}^m_2}\\=(f(b)+f(p)+f({\bf{0}})+\sigma p\cdot \pi b)_{b\in \mathbb{F}^m_2}\\
  =(f(p),\sigma p,1)\begin{pmatrix}
\cdots & 1 & \cdots\\
\cdots & (\pi b)^T & \cdots \\
\cdots & f(b)+f({\bf{0}}) & \cdots
\end{pmatrix}_{b\in\mathbb{F}^m_2}\\=(f(p),\sigma p,1)G.
\end{align*} 
By our assumptions, the dimension of the $\mathbb{F}_2$-linear span of $\{(f(p),\sigma p,1):p\in\mathbb{F}^n_2\}$ is $m+2$ and the rank of $G$ is $m+2$ by using Lemma \ref{lem2}. 
Now the same arguments in {Lemma \ref{lem4.3} (iii) $\Longleftrightarrow$ (iv)} can be applied to complete our proof by using Lemma \ref{lem3.1}.\
 \end{proof}









\begin{corollary} \label{cor7}
    {\rm
   Let $f$ and $g$ be two Boolean functions in $\mathcal{BF}_m$. Then the following statements are equivalent.
   \begin{itemize}
       \item [(i)]  The $2$-design supported by  a point-block incidence matrix $(g(b)+f(p)+b\cdot p)_{b,p\in\mathbb{F}^m_2}$  with parameters  $(2^m,2^{m-1}-2^{\frac{m}{2}-1},2^{m-2}-2^{\frac{m}{2}-1})$ and the $2$-design supported by a point-block incidence matrix $(f(b+p))_{b,p\in\mathbb{F}^m_2}$ with parameters $(2^m,2^{m-1}-(-1)^{{f^{*}}(\mathbf{0})}2^{\frac{m}{2}-1},2^{m-2}{-}(-1)^{{f^{*}}(\mathbf{0})}2^{\frac{m}{2}-1})$ are isomorphic with an identity map on point sets and some linear permutation map on block sets.



        \item[(ii)] There exists a linear permutation $\sigma$ of $\mathbb{F}^m_2$ such that $B_f(x,y)=\sigma x\cdot y$ for all $x,y\in\mathbb{F}^m_2$.

        \item [(iii)] $f$ is a quadratic bent function {with $f^{*}(0)=0$}.
    \end{itemize}
   }
\end{corollary}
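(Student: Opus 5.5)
The plan is to prove (i) $\Rightarrow$ (ii) $\Rightarrow$ (iii) $\Rightarrow$ (i), using the argument of Proposition \ref{thm3.8} and the degree computation from its proof. Throughout, the matrix $(g(b)+f(p)+b\cdot p)$ is taken to be the incidence matrix of an addition design. By Lemma \ref{thm 2.23}, this implicitly requires $f$ to be bent with $g=f^*$ (up to the roles of points and blocks). I will read (i) this way, since its stated parameters force exactly this situation.

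For (i) $\Rightarrow$ (ii), suppose the isomorphism is the identity on points and a linear permutation $\pi$ on blocks. Then
\[
g(\pi b)+f(p)+p\cdot\pi b=f(b+p)\quad\text{for all }p,b\in\mathbb{F}^m_2 .
\]
Setting $p=\mathbf{0}$ gives $g(\pi b)=f(b)$. Substituting this back yields $f(b+p)+f(b)+f(p)=p\cdot\pi b$. Setting $b=p=\mathbf{0}$ gives $f(\mathbf{0})=0$, so $B_f(x,y)=x\cdot\pi y$. Let $\sigma=\pi^T$ be the adjoint of $\pi$, so that $x\cdot\pi y=\sigma x\cdot y$; this $\sigma$ is a linear permutation, which gives (ii).

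For (ii) $\Rightarrow$ (iii), $B_f(x,y)=\sigma x\cdot y$ is bilinear and not identically zero, so $f+f(\mathbf{0})$ has degree exactly $2$, as in the degree argument of Proposition \ref{thm3.8}. Since $\sigma$ is nonsingular, each derivative $D_yf(x)=f(y)+f(\mathbf{0})+x\cdot\sigma^T y$ is balanced for $y\neq\mathbf{0}$. Hence $f$ is bent, exactly as in the (ii) $\Rightarrow$ (i) step of Lemma \ref{thm6}.

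The condition $f^*(\mathbf{0})=0$ is not automatic, and this is the main obstacle. I would not try to derive it from (ii), because $f+1$ satisfies (ii) as well and has the opposite dual value. Instead I would argue that $f^*(\mathbf{0})=0$ is forced by the equality of parameters already present in (i): the two designs in (i) are isomorphic, so their block sizes $2^{m-1}-2^{m/2-1}$ and $2^{m-1}-(-1)^{f^*(\mathbf{0})}2^{m/2-1}$ must coincide, giving $f^*(\mathbf{0})=0$. Accordingly, I would read (ii) together with the standing parameter hypothesis, and state this reading explicitly in the proof. This is the delicate point of the proof.

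For (iii) $\Rightarrow$ (i), take $f$ quadratic bent with $f^*(\mathbf{0})=0$ and $g=f^*$. Proposition \ref{thm3.8}, (i) $\Rightarrow$ (ii), then provides the isomorphism $\mathbb{AD}_f\cong\mathbb{TD}_f$ with the identity on points and a linear permutation on blocks. Since $f^*(\mathbf{0})=0$, the two parameter sets in (i) coincide, and this completes the cycle of implications.
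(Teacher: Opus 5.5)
Your argument is correct in substance, but it takes a different route from the paper. The paper gets (i)$\Leftrightarrow$(iii) from Proposition~\ref{thm3.8}. It then handles (i)$\Rightarrow$(ii) and (ii)$\Rightarrow$(i) through the minimal-rank results: Lemma~\ref{thm6}, Lemma~\ref{thm3.7} and \cite[Theorem 3.8]{WFQ}. You instead close the cycle (i)$\Rightarrow$(ii)$\Rightarrow$(iii)$\Rightarrow$(i) by direct computation:
you read the bilinear identity for $B_f$ off the isomorphism equation, get quadraticity from the degree argument and bentness from balanced derivatives, and return to (i) via Proposition~\ref{thm3.8}.

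Your route produces exactly the linear maps that (i) and (ii) demand, which the cited lemmas do not do on their own. Lemma~\ref{thm6} only gives $B_f(x,y)=\sigma x\cdot\pi y$ for some permutations, which are in general not linear. For example, the cubic Maiorana--McFarland functions of Theorem~\ref{thmB3} have minimal-rank translation designs but a non-bilinear $B_f$. Similarly, Lemma~\ref{thm3.7} only identifies $\mathbb{TD}_f$ with \emph{some} addition design, not with $\mathbb{AD}_f$ via the identity on points. In exchange, the paper's route places the corollary inside the rank-$(m+2)$ characterization that Section~\ref{s7} is built around.

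Your remark that $f+1$ satisfies (ii) while $(f+1)^*(\mathbf{0})=f^*(\mathbf{0})+1$ is correct. It exposes a real defect in the statement, which the paper's (ii)$\Rightarrow$(i) step passes over; likewise, (iii) says nothing about $g$. State plainly that you prove the corollary under the standing hypotheses $g=f^*$ and $f^*(\mathbf{0})=0$. Do not derive $f^*(\mathbf{0})=0$ from the isomorphism in (i), since that would be circular inside (ii)$\Rightarrow$(iii).

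One slip needs fixing. Setting $p=\mathbf{0}$ gives $g(\pi b)=f(b)+f(\mathbf{0})$, not $g(\pi b)=f(b)$. Substituting back then gives $B_f(p,b)=p\cdot\pi b$ immediately. Your intermediate claim $f(\mathbf{0})=0$ is false in general: if $f$ is quadratic bent with $f^*(\mathbf{0})=0$ and $f(a)=1$, then $x\mapsto f(x+a)$ still satisfies (iii), hence (i), yet takes the value $1$ at $\mathbf{0}$. Your conclusion (ii) is unaffected.
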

\begin{proof}
(i) $\Longleftrightarrow$ (iii): { It follows from Proposition \ref{thm3.8}. }
 (i) $\Longrightarrow$ (ii): It follows from Lemma \ref{thm6}.
     (ii) $\Longrightarrow$ (i): Since $\sigma$ is a linear permutation of $\mathbb{F}^m_2$ and $f$ is bent \cite[Theorem 3.8]{WFQ}, then $f\circ\sigma^{-1}$ is non-affine. By  {Lemmas} \ref{thm3.7} and  \ref{thm6}, the result follows.
\end{proof}

\begin{theorem}\label{thm14}{\rm
Let $g$ and $h$ be bent functions in $\mathcal{BF}_m$ satisfying that
$g+h$ is bent and $(g+h)^*=g^*+h^*$. For each $b\in \mathbb{F}^m_2$, let $f_b(x)=g(b)+h(x+b)+(g+h)(x)$. Then the following statements hold.
\begin{itemize}
    \item [(i)] A pair $(\mathbb{F}^{m}_2,f_b\in\mathcal{BF}_m:b\in\mathbb{F}^{m}_2\})$ induces a symmetric 2-design with parameters $(2^m,2^{m-1}-2^{\frac{m}{2}-1},2^{m-2}-2^{\frac{m}{2}-1})$, denoted this design by $\mathbb{D}_{g,h}$.

    \item[(ii)] The addition design $\mathbb{AD}_f$ derived from a bent function $f\in\mathcal{BF}_m$ and the design $\mathbb{D}_{g,h}$ have the same parameters. If the 2-rank of $\mathbb{TD}_h$ is larger than $m+2$, then $\mathbb{D}_{g,h}$ and $\mathbb{AD}_f$ are not isomorphic.
\end{itemize}

}
\end{theorem}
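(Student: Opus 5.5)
For (i), I will check the three conditions of Corollary \ref{cor3.2} with $P=B=\mathbb{F}^m_2$ and $f_b(x)=g(b)+h(x+b)+(g+h)(x)$. Write $k=g+h$, which is bent by hypothesis.

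First, the block sizes. For each $b$,
\[
W_{f_b}(\mathbf{0})=(-1)^{g(b)}\sum_x(-1)^{h(x+b)+k(x)}=(-1)^{g(b)}C_{k,h}(b).
\]
By the cross-correlation formula, $C_{k,h}(b)=2^{-m}\sum_u W_k(u)W_h(u)(-1)^{b\cdot u}=\sum_u(-1)^{k^*(u)+h^*(u)+b\cdot u}$. Since $k^*+h^*=g^*$, this equals $W_{g^*}(b)=2^{m/2}(-1)^{g(b)}$, using $g^{**}=g$. Hence $W_{f_b}(\mathbf{0})=2^{m/2}$ for all $b$, so every block has size $2^{m-1}-2^{m/2-1}$.

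Second, the replication number. For each point $p$,
\[
\sum_b(-1)^{f_b(p)}=(-1)^{k(p)}\sum_b(-1)^{g(b)+h(p+b)}=(-1)^{k(p)}C_{g,h}(p).
\]
The same Fourier computation gives $C_{g,h}(p)=W_{g^*+h^*}(p)=W_{k^*}(p)=2^{m/2}(-1)^{k(p)}$, using $k^{**}=k$. Thus the sum is the constant $2^{m/2}$.

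Third, pairs of points. For $p\ne q$,
\[
\sum_b(-1)^{f_b(p)+f_b(q)}=(-1)^{k(p)+k(q)}C_h(p+q)=0,
\]
because $h$ is bent. Corollary \ref{cor3.2} then gives the stated parameters, and since $v=b=2^m$ the design is symmetric. The only delicate point is keeping track of the signs and the normalisation in the two cross-correlation identities; this is where the hypothesis $(g+h)^*=g^*+h^*$ is used, twice.

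For (ii), the equality of parameters is immediate from $(\dag)$. For non-isomorphism, I will compute the $2$-rank of $\mathbb{D}_{g,h}$. Its incidence matrix is $M=(g(b)+h(p+b)+k(p))_{p,b}$, which equals $D_1 N D_2+u\mathbf{1}^T+\mathbf{1}v^T$, where:
\begin{itemize}
\item $N=(h(p+b))_{p,b}$ is the incidence matrix of $\mathbb{TD}_h$;
\item $u=(k(p))_p$ and $v=(g(b))_b$;
\item $D_1=D_2=I$; the added terms are rank-one updates over $\mathbb{F}_2$.
\end{itemize}
Hence $\operatorname{rank}_2 M\ge \operatorname{rank}_2 N-2$. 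This bound is too weak on its own, so I will instead compare the codes directly. The code $\mathcal{C}_{\mathbb{D}_{g,h}}$ is spanned by the rows $(g(b)+h(p+b)+k(p))_b$. Adding the row indexed by $p=\mathbf{0}$, namely $(g(b)+h(b)+k(\mathbf{0}))_b$, shows that the code contains the vectors $(h(p+b)+h(b)+\text{const})_b$. I aim to show that $\mathcal{C}_{\mathbb{D}_{g,h}}+\langle\mathbf{1},(g+h)(b)\rangle$ contains all rows of $N$, which yields $\operatorname{rank}_2\mathbb{D}_{g,h}\ge \operatorname{rank}_2\mathbb{TD}_h-2$. The difficulty is that this bound only helps if it can be sharpened to exceed $m+2$.

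A cleaner route, which I will try first, uses Lemma \ref{thm3.7}. Suppose $\mathbb{D}_{g,h}\cong\mathbb{AD}_f$. Then $\mathbb{D}_{g,h}$ has the TSDP, so by Lemma \ref{lem.SDP} every sum $f_{b_1}+f_{b_2}+f_{b_3}$ equals some $f_{b_4}$ up to a constant. Substituting the definition gives
\[
h(x+b_1)+h(x+b_2)+h(x+b_3)+h(x+b_4)+\text{const}=0,
\]
because the terms $k(x)$ appear an even number of times and cancel. Rearranging, $h(x+b_1)+h(x+b_2)+h(x+b_3)\fallingdotseq h(x+b_4)$. By Lemma \ref{lem.SDP}, this is exactly the statement that $\mathbb{TD}_h$ has the TSDP. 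Lemma \ref{thm3.7} then forces $\operatorname{rank}_2\mathbb{TD}_h=m+2$, contradicting the hypothesis. Since isomorphism preserves the TSDP, this completes the argument.

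The main obstacle is handling the constants correctly. Lemma \ref{thm3.7} is stated for parameters $(\dag)$, and if $h^*(\mathbf{0})=1$ then $\mathbb{TD}_h$ does not have parameters $(\dag)$. In that case I would pass to the complementary design, or to $h+1$, whose translation design has the complementary parameters of $(\dag)$ and the same $2$-rank up to one, and then apply Lemma \ref{thm3.7}.
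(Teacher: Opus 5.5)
Your part (i) is correct and is essentially the paper's computation via $C_{g+h,h}$ and $C_h$. You additionally verify the replication condition through $C_{g,h}$, which the paper leaves implicit.

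For (ii) your route differs from the paper's. The paper adds the isomorphism identity $f(\sigma p)+f^*(\pi b)+\sigma p\cdot\pi b=f_b(p)$ to its specialisations at $p=\mathbf{0}$ and at $b=\mathbf{0}$, obtaining $B_h(p,b)=\bar\sigma p\cdot\bar\pi b$. It then shows by a Walsh computation that $h\circ\bar\sigma^{-1}$ and $h\circ\bar\pi^{-1}$ are non-affine, and invokes Lemma \ref{thm6}. The proof of that lemma bounds the rank by factoring each row $(h(p+b))_b$ through an $(m+2)$-row matrix, so it never needs $h^*(\mathbf{0})=0$. Your argument uses two facts: the TSDP is an isomorphism invariant, and $f_{b_1}+f_{b_2}+f_{b_3}+f_{b_4}$ differs from $\sum_i h(x+b_i)$ by a constant because $k=g+h$ occurs four times. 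It is shorter, avoids Lemma \ref{thm6} and the non-affineness step, and shows that $\mathbb{D}_{g,h}$ has the TSDP if and only if $\mathbb{TD}_h$ does.

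\textbf{The gap.} The step you flagged as the main obstacle is not closed. The case $h^*(\mathbf{0})=1$ genuinely occurs, since the hypotheses are invariant under $h\mapsto h+1$ and $\mathbb{D}_{g,h+1}=\mathbb{D}_{g,h}$. In that case Lemma \ref{thm3.7} applies only to the complement $\mathbb{TD}_{h+1}$, giving $\operatorname{rank}_2\mathbb{TD}_{h+1}=m+2$. ``The same 2-rank up to one'' then yields only $\operatorname{rank}_2\mathbb{TD}_h\le m+3$. That is compatible with the hypothesis $\operatorname{rank}_2\mathbb{TD}_h>m+2$, so no contradiction follows.

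\textbf{First fix.} You need $\mathbf{1}\in\mathcal{C}_{\mathbb{TD}_{h+1}}$, so that $\mathcal{C}_{\mathbb{TD}_h}\subseteq\mathcal{C}_{\mathbb{TD}_{h+1}}+\langle\mathbf{1}\rangle=\mathcal{C}_{\mathbb{TD}_{h+1}}$. This holds: $\mathbb{TD}_{h+1}$ is isomorphic to an addition design, whose code is generated by $G_f$ and hence contains $\mathbf{1}$ (Lemma \ref{lem4.3}(iv)). A coordinate permutation fixes $\mathbf{1}$.

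\textbf{Simpler fix.} Bypass Lemma \ref{thm3.7} entirely. Put $h_b=(h(x+b))_{x\in\mathbb{F}_2^m}$. The TSDP of $\mathbb{TD}_h$ applied to $b,b',\mathbf{0}$ shows that $V=\{h_b+h_{\mathbf{0}}+\varepsilon\mathbf{1}: b\in\mathbb{F}_2^m,\ \varepsilon\in\mathbb{F}_2\}$ is closed under addition; the degenerate cases are trivial. Hence $V$ is a subspace of dimension at most $m+1$. Every block vector $h_b$ lies in $\langle h_{\mathbf{0}}\rangle+V$, so $\operatorname{rank}_2\mathbb{TD}_h\le m+2$ whatever $h^*(\mathbf{0})$ is. This gives the required contradiction directly.
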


\begin{proof}
(i). Let $g$ and $h$ be bent functions in $\mathcal{BF}_m$ satisfying that
$g+h$ is bent and $(g+h)^*=g^*+h^*$.
We have 
\begin{multline*}
 C_{g,h}(a)=\frac{1}{2^m}\sum_{x\in\mathbb{F}^m_2}W_g(x)W_h(x)(-1)^{a\cdot x}\\
 =\sum_{x\in\mathbb{F}^m_2}(-1)^{a\cdot x+g^*(x)+h^*(x)}
 =\sum_{x\in\mathbb{F}^m_2}(-1)^{a\cdot x+(g+h)^*(x)}\\
 =2^{\frac{m}{2}}(-1)^{(g+h)(a)}.
\end{multline*}
It follows that for $b\in\mathbb{F}^{m}_2$,  
\begin{multline*}
 W_{\mathbb{F}^m_2,f_b}(\mathbf{0})=\sum_{p\in {\mathbb{F}^{m}_2}}(-1)^{g(b)+h(p+b)+(g+h)(p)}\\
=(-1)^{g(b)}C_{g+h,h}(b)=(-1)^{g(b)}2^{\frac{m}{2}}(-1)^{(g+h)(b)+h(b)}=2^{\frac{m}{2}}
\end{multline*}
and for two distinct points $p,q\in\mathbb{F}^m_2$,
\begin{multline*}
  \sum_{b\in \mathbb{F}^m_2}(-1)^{f_b(p)+f_b(q)}\\=\sum_{b\in \mathbb{F}^m_2}(-1)^{g(b)+h(p+b)+(g+h)(p)+g(b)+h(q+b)+(g+h)(q)}\\
=(-1)^{(g+h)(p)+(g+h)(q)}\sum_{b\in \mathbb{F}^m_2}(-1)^{h(p+b)+h(q+b)}  \\
=(-1)^{(g+h)(p)+(g+h)(q)}{C_{h}}(p+q)=0.
\end{multline*}
The parameters are computed by those three values of summations using  Corollary \ref{cor3.2}.


(ii). Assume, to the contrary, that $\mathbb{D}_{g,h}$ and $\mathbb{AD}_f$ are isomorphic. Then
\begin{align}\label{eqn38}
    f(\sigma p)+f^*(\pi b)+\sigma p\cdot \pi b\nonumber\\=g(b)+h(p+b)+(g+h)(p) \text{ for all } p,b\in\mathbb{F}^m_2
\end{align}
and for some permutations $\sigma, \pi$ of $\mathbb{F}^m_2$. 
By plugging $b={\bf{0}}$ and $p={\bf{0}}$ into Eq. \eqref{eqn38}, respectively,  
\begin{align}\label{39}
    f(\sigma p)+\sigma p\cdot \pi{\bf{0}}=h(p)+(g+h)(p)+f^*(\pi{\bf{0}})+g({\bf{0}}),\\ 
    f^*(\pi b)+\sigma{\bf{0}}\cdot\pi b=g(b)+h(b)+(g+h)({\bf{0}})+f(\sigma{\bf{0}}) \label{40}
\end{align} 
for all $p,b\in\F^m_2$.
By adding Eqs. \eqref{eqn38}, \eqref{39} and \eqref{40}, we have 
\begin{eqnarray}\label{41}
&\bar{\sigma}p\cdot\bar{\pi}b=h(p+b)+h(p)+h(b)+h({\bf{0}})\nonumber\\
&+f(\sigma{\bf{0}})+f^*(\pi{\bf{0}})+\sigma{\bf0}\cdot\pi{\bf0}
\end{eqnarray}
for all $p,b\in\F^m_2$, where $\bar{\sigma}p=\sigma p+\sigma{\bf{0}}$ and $\bar{\pi}b=\pi b+\pi{\bf{0}}$. By plugging $b={\bf{0}}$ into Eq. \eqref{40}, we have 
$\bar{\sigma}p\cdot\bar{\pi}b=h(p+b)+h(p)+h(b)+h({\bf{0}})=B_h(b,p)$ for all $p,b\in\F^m_2$. Since $B_h(b,p)=B_h(p,b)$, then $\tau:=\bar{\pi}\circ\bar{\sigma}^{-1}$ is a linear permutation of $\mathbb{F}^m_2$. From Eq. \eqref{41}, we have 
\begin{eqnarray}\label{42}
& (\tau^*\circ\tau^{-1}b)\cdot p=\tau p\cdot \tau^{-1}b=h(\bar{\pi}^{-1} p+\bar{\sigma}^{-1}b)\nonumber\\
&+h(\bar{\pi}^{-1}p)+h(\bar{\sigma}^{-1}b)+h({\bf{0}}) \end{eqnarray} 
 for all $p,b\in\F^m_2$, where $\tau^*$ is the adjoint permutation of $\tau$.
We claim that both $h\circ \bar{\sigma}^{-1}$ and $h\circ \bar{\pi}^{-1}$ are non-affine. To get a contradiction, we may assume that $h\circ \bar{\pi}^{-1}$ is affine, and so $h\circ \bar{\pi}^{-1}(p)=a\cdot p+\varepsilon$ for some $a\in\mathbb{F}^m_2$ and $\varepsilon\in\mathbb{F}_2$. By Eq. \eqref{42}, we have
\begin{eqnarray*}
    &W_h({\bf{0}})=\sum_{p\in\mathbb{F}^m_2}(-1)^{h(\bar{\pi}^{-1}p+\bar{\sigma}^{-1}b)}\\
    &=(-1)^{h(\bar{\sigma}^{-1}b)+h(\bf{0})}\sum_{p\in\mathbb{F}^m_2}(-1)^{h(\bar{\pi}^{-1}p)+\tau p\cdot\tau^{-1}b}\\
    &=(-1)^{h(\bar{\sigma}^{-1}b)+h(\bf{0})}\sum_{p\in\mathbb{F}^m_2}(-1)^{a\cdot p+\varepsilon+p\cdot (\tau^*\circ\tau^{-1}b)}\\
    &=(-1)^{h(\bar{\sigma}^{-1}b)+h(\bf{0})+\varepsilon}2^m\delta_{a,\tau^*\circ\tau^{-1}b},
\end{eqnarray*}
which contradicts that $h$ is bent. This proves our claim. It then follows from Lemma \ref{thm6} that the 2-rank of $\mathbb{TD}_h$ is $m+2$, which is a contradiction to our assumption.   
\end{proof}


We give an example of the design of Theorem  \ref{thm14} whose 2-rank is not $m+2$.


\begin{example}
{\rm
Let $m=8$ and $\mathbb{F}_{2}^{8}=\mathbb{F}_{2^4} \times \mathbb{F}_{2^4}$. Define $g(x_{1}, x_{2})=\operatorname{Tr}_{1}^{4}(x_{1}x_{2}^{14})$ and 
$ h(x_{1}, x_{2})=\operatorname{Tr}_{1}^{4}(\alpha x_{1}x_{2}^{14})$ for all
$(x_{1}, x_{2}) \in \mathbb{F}_{2^4} \times \mathbb{F}_{2^4}$, where $\operatorname{Tr}^4_1$ is the absolute trace function, and $\alpha$ is the primitive element of $\mathbb{F}_{2^4}$ with $\alpha^{4}+\alpha+1=0$. By \cite[Page 95]{M}, $g, h$ and $g+h$ are bent functions and $(g+h)^{*}=g^{*}+h^{*}$. {By \cite[Theorem 4.3]{WFQ}, the $2$-rank of $\mathbb{TD}_{h}$ is $30$.} Define $f_{(b_{1}, b_{2})}(x_{1}, x_{2})=g(b_{1}, b_{2})+h(x_{1}+b_{1}, x_{2}+b_{2})+(g+h)(x_{1},x_{2})=\operatorname{Tr}_{1}^{4}(b_{1}b_{2}^{14}+\alpha(x_{1}+b_{1})(x_{2}+b_{2})^{14}+(1+\alpha)x_{1}x_{2}^{14})$ for all $(x_{1}, x_{2}) \in \mathbb{F}_{2^4} \times \mathbb{F}_{2^4}$. By Magma, the 2-rank of the design $\mathbb{D}_{g,h}$ in Theorem  \ref{thm14} (i) is $30$, and thus $\mathbb{D}_{g,h}$ is not isomorphic to the addition design $\mathbb{AD}_h$ derived from $h$ by Theorem  \ref{thm14} (ii). 
}
\end{example}

The following open problem   is quite natural.

\begin{problem}{\rm \label{openpro4}

 Prove that the 2-ranks of designs $\mathbb{D}_{g,h}$ and $\mathbb{TD}_h$  in Theorem \ref{thm14}
  are equal.

}
    
\end{problem}


\section{Summary and concluding remarks} \label{s8}

The main results in this paper are the following:

\begin{itemize}
  \item A generic construction for designs from Boolean functions was settled. See Theorem \ref{lem3.1} and Corollary \ref{cor3.2}.


We give an alternative proof for classifying the minimal rank translation designs of a special class of M-M bent functions. See Theorem \ref{thmB3}.

\item  A simple proof of addition designs from $r$-plateaued functions was given, see Lemma \ref{thm 2.23}.

  \item All TSDP  designs with $2^{m-r}$ points were determined in the sense of equivalence, see Theorem \ref{thm4.7}.

  \item Some equivalent relationships between designs, linear codes of designs, and plateaued
functions were discussed. See Theorems \ref{thm4.8}, \ref{thmmm}, \ref{cor5} and Corollary \ref{cor4}. As a byproduct, we settled Open Problems \ref{111} and \ref{222}.

\item Automorphism groups of some addition designs and the linear codes of addition designs were computed. See Theorem \ref{thm10} and Corollary \ref{cor6}.


  \item  More non-isomorphic designs were presented. See Theorems  \ref{thm4.4}, \ref{thm14} and Corollary \ref{cor7}.

  \item Some examples and problems were also presented. See  Open Problems \ref{openproblemnew}, \ref{openpro3}, \ref{openpro1}, \ref{openpro2}, and \ref{openpro4}.

  
\end{itemize}

The interconnections among Boolean functions, designs, and linear codes provide rich opportunities for further research, and  we invite further exploration in this field.

\bigskip
\section*{Acknowledgments}

We would like to thank the associate editor and the anonymous reviewers for their effort in reviewing our manuscript and for providing helpful comments and suggestions.



\newpage

\begin{IEEEbiographynophoto}{Jong Yoon Hyun} received the B.S. degree from Dongguk University in 1997, and the M.S. and Ph.D. degrees in mathematics from Pohang University of Science and Technology (POSTECH) in 2002 and 2006, respectively.
From October 2009 to August 2015, he was a Research Professor at Ewha Womans University, Seoul, South Korea. From September 2015 to March 2019, he worked as a Research Fellow at the Korea Institute for Advanced Study (KIAS), Seoul, South Korea.
He is currently a Professor at Konkuk University (Glocal Campus), Chungju-si, South Korea.
His research interests include coding theory, information theory, cryptographic functions, and algebraic graph theory.

\end{IEEEbiographynophoto}

\begin{IEEEbiographynophoto}{Jieun Kwon} 
 received the Ph.D. degree in mathematics from POSTECH, Pohang, South Korea, in 2022. Since November 2022, she has been with the Semiconductor R\&D Center (CTO), Samsung Electronics. Her current research interests include coding theory, artificial intelligence, and large language models.
\end{IEEEbiographynophoto}

\begin{IEEEbiographynophoto}{Jiaxin Wang}
received the B.S. degree in applied mathematics from Hefei University of Technology,
Hefei, China, in 2017, and the Ph.D. degree in Probability and Mathematical Statistics from Nankai University, Tianjin, China, in 2023. From July 2023 to June 2025, she was a postdoctoral researcher at the Chern Institute of Mathematics, Nankai University, Tianjin, China. Since July 2025, she has been with the School of Mathematics, Hefei University of Technology, Hefei, China. Her current research interests include cryptography and coding theory.
\end{IEEEbiographynophoto}

\begin{IEEEbiographynophoto}{Yansheng Wu} received the Ph.D. degree from Nanjing University of Aeronautics and Astronautics, Nanjing, China, in 2019. From September 2019
to August 2020, he was a Post-Doctoral Researcher with the Department of Mathematics, Ewha Womans University, Seoul, South Korea. Since October
2020, he has been with the School of Computer Science, Nanjing University of Posts and Telecommunications, Nanjing. From March 2023 to February
2024, he was a Visiting Scholar with the Department of Computer Science and Engineering, The Hong Kong University of Science and Technology, Hong Kong. His research interests include coding theory and cryptography.
\end{IEEEbiographynophoto}

\begin{thebibliography}{}


\bibitem{B} T.D. Bending, Bent functions, SDP designs and their automorphism groups. Ph.D. thesis, Queen Mary and Westfield College (1993).

\bibitem{B1} R. E. Block, Transitive groups of collineations of certain designs, Pacific J. Math. 15, 13-19 (1965).

\bibitem{BM} C. Bracken, G. McGuire, Characterization of SDP designs that yield certain spin models, Des. Codes Cryptogr, 36: 45–52 (2005).

\bibitem{BC} L. Budaghyan, C. Carlet, On CCZ-equivalence and its use in secondary constructions of bent functions, in Preproceedings of the International Workshop on Coding and Cryptography, WCC 2009, Ullensvang, Norway,  pp. 19–36 (2009). 

\bibitem{BC1}  L. Budaghyan, C. Carlet, CCZ-equivalence of bent vectorial functions and related constructions, Des. Codes Cryptogr. 59, 69–87 (2011).


\bibitem{CS} P. J. Cameron and J. J. Seidel, Quadratic forms over GF(2), Indag. Math. 35, 1-8 (1973).



\bibitem{C1}  C. Carlet, “Boolean functions for cryptography and error correcting codes,” in Boolean Models and Methods in Mathematics, Computer Science, and Engineering, P. L. Hammer and Y. Crama, Eds. Cambridge, U.K.: Cambridge Univ. Press, 2010.

\bibitem{C2} C. Carlet, “Vectorial Boolean functions for cryptography,” in Boolean Models and Methods in Mathematics, Computer Science, and Engineering, P. L. Hammer and Y. Crama, Eds. Cambridge, U.K.: Cambridge Univ. Press (2010).


\bibitem{C3} C. Carlet. Boolean Functions for Cryptography and Coding Theory. Cambridge Univ. Press, 562 pages, 2021 




\bibitem{CM} C. Carlet and S. Mesnager, “Four decades of research on bent functions,” Des., Codes Cryptogr. 78, no. 1: 5–50 (2016).


\bibitem{CD} C. J. Colbourn and J. H. Dinitz, CRC Handbook of Combinatorial Designs, CRC Press, Boca Raton, FL (2007).

\bibitem{DN} U. Dempwolff, T. Neumann, Geometric and design-theoretic aspects
of semibent functions I, Des. Codes Cryptogr., 57, 373–381 (2010).

\bibitem{DC}  C. Ding, A construction of binary linear codes from Boolean functions, Disc. Math., 339(9): 2288-2303 (2016).




\bibitem{DMT} C. Ding,  et al., Bent vectorial functions, codes and designs,   IEEE Trans. Inf. Theory, 65, no. 11:  7533-7541 (2019).



\bibitem{DT} C. Ding, and C. Tang, Combinatorial $t$-designs from special functions, Cryptogr. Commun., 12, no. 5: 1011-1033 (2020).

\bibitem{DT0} C. Ding, and C. Tang, Designs from Linear codes, Singapore: World Scientific (2022).

\bibitem{DT1} C. Ding, C. Tang,  Infinite families of near MDS codes holding $t$-designs, IEEE Trans. Inf. Theory, 66(9): 5419-5428  (2020).

\bibitem{D} J.F. Dillon. Elementary Hadamard difference sets. University of Maryland, College Park, 1974.





\bibitem{DS} J. F. Dillon and J. R. Schatz, Block designs with the symmetric difference property, in Proc. NSA Mathematical Sciences Meetings'' (R. L. Ward, Ed.), pp. 159-164, U.S. Govt. Printing Office, Washington, DC (1987).



\bibitem{EP} Y. Edel, A. Pott, On the equivalence of nonlinear functions, Enhancing Cryptographic Primitives with Techniques from Error Correcting Codes,  pp. 87103 (2009).


\bibitem{HP} S. Hodžić, E. Pasalic, Y. Wei, F. Zhang, Designing Plateaued Boolean Functions in Spectral Domain and Their Classification, IEEE Trans. Inf. Theory, 65, no. 9: 5865-5879 (2019).



\bibitem{K} W. M. Kantor, Symplectic Groups, Symmetric Designs, and Line Ovals, J. Algebra 33, 43–58 (1975).

\bibitem{K1} W. M. Kantor, Exponential number of two-weight codes, difference sets and
symmetric designs, Disc. Math. 46, 95–98 (1983).

\bibitem{K2} W. M. Kantor, Classification of 2-transitive symmetric designs, Graphs and Combinatorics 1 (1), 165–166 (1985).

\bibitem{L}  E.S. Lander, Symmetric Designs: An Algebraic Approach, London Mathematical Society Lecture Note Series, Cambridge University Press (1983).

\bibitem{MW} G. McGuire, H. N. Ward,
Characterization of Certain Minimal Rank Designs. J. Comb. Theory, Ser. A 83(1): 42-56 (1998).



\bibitem{M1} R. L. McFarland. A family of noncyclic difference sets. J. Combinatorial Theory, Ser. A, vol. 15, pp. 1–10 (1973).

\bibitem{MS1} F. J. MacWilliams and N.J.A. Sloane, The Theory of Error-Correcting Codes, North-Holland Publishing Company, 1977.

\bibitem{Ma} J. L. Massey, Linear codes with complementary
duals, Discrete Math. 106/107, 337-342 (1992).



\bibitem{MPP} W. Meidl, A. Polujan, and A. Pott, Linear codes and incidence structures of bent functions and their generalizations. Discrete Math., 346(1), 113157,(2023). 


\bibitem{MZYC} Q. Meng, H. Zhang, et al., On the degree of homogeneous bent functions. Discret. Appl. Math. 155, 665–669 (2007).

\bibitem{M} S. Mesnager, Bent Functions. Fundamentals and Results, Springer, 2016.



\bibitem{MS} S. Mesnager, and A. Sınak, Several classes of minimal linear codes with few weights from weakly regular plateaued functions, IEEE Trans. Inf. Theory, 66, no. 4: 2296-2310 (2019).











 





\bibitem{P} A.  Polujan, Boolean and vectorial functions: A design-theoretic point of view, PhD dissertation,  Otto-von-Guericke-University at Magdeburg  (2021).


\bibitem{PP}  A.  Polujan and A. Pott, On design-theoretic aspects of Boolean and vectorial bent function,   IEEE Trans. Inf. Theory, 67, no. 2: 1027-1037 (2020).


\bibitem{R} O.S. Rothaus, On bent functions. J. Combin. Theory Ser. A 20(3): 300–305 (1976).




\bibitem{T} C. Tang, Infinite families of 3‐designs from APN functions, J. Combin. Des., 28, no. 2: 97-117 (2020).


\bibitem{TD} C. Tang, C. Ding,  An infinite family of linear codes supporting 4-designs, IEEE Trans. Inf. Theory, 67(1): 244-254 (2020). 



\bibitem{WH} P. Wang, Z. Heng, Self-orthogonal codes from plateaued functions, Chinese Journal of Electronics, 34(5): 1483-1496 (2025).

\bibitem{WTD} X. Wang, C. Tang,  C. Ding,  Infinite families of cyclic and negacyclic codes supporting 3-designs, IEEE Trans. Inf. Theory, 69, no. 4: 2341-2354 (2022). 



\bibitem{WFQ} G. Weng, R. Feng, W. Qiu, On the ranks of bent functions,  Finite Fields Appl., 13, 1096–1116 (2007). 

\bibitem{WFQZ}G. Weng, R. Feng, W. Qiu,  Z. Zheng,   The ranks of Maiorana-McFarland bent functions, Sci. China Ser. A Math.‌, 51(9), 1726-1731 (2008).




\bibitem{XLW}  C. Xiang, X. Ling, et al., Combinatorial $t$-designs from quadratic functions,  Des. Codes Cryptogr. 88: 553-565 (2020).

\end{thebibliography}
\end{document}